\newcommand{\nchoosek}[2]{\left(\begin{array}{c}#1\\#2\end{array}\right)}
\newtheorem{dfn}{Definition}
\newtheorem{theorem}{Theorem}
\newtheorem{prop}{Proposition}
\newtheorem{lemma}{Lemma}
\begin{document}
\onecolumn
\doublespacing
\allowdisplaybreaks
\title{On the Delay Advantage of Coding\\in Packet Erasure Networks\thanks{
    The material of this paper was presented in part at the IEEE International Symposium on Information Theory 2009 and the IEEE Information Theory Workshop 2010.}}

\author{
\authorblockA{$\text{Theodoros K. Dikaliotis}^1$, $\text{Alexandros G. Dimakis}^2$, $\text{Tracey Ho}^1$, $\text{Michelle Effros}^1$\\
$\hspace{1mm}^1$California Institute of
Technology $\hspace{1mm}^2$University of Southern California\\
{$^1$\{tdikal, tho, effros\}@caltech.edu
$^2$dimakis@usc.edu}}}

%some more macros--------------
\newcommand{\beq}{\begin{equation}}
\newcommand{\eeq}{\end{equation}}
\newcommand{\bea}{\begin{eqnarray}}
\newcommand{\eea}{\end{eqnarray}}

\newcommand{\stexp}{\mbox{$\mathbb{E}$}}    % Statistical Expectation operator

\newcommand{\Prob}{\ensuremath{\mathbb{P}}}
\newcommand{\var}{\mbox{$\mathsf{Var}$}}    % Statistical variance
\def\qed{\quad \vrule height6.5pt width6pt depth0pt} %Nifty end proof sign
\newcommand{\bigO}{\mbox{$\mathsf{O}$}}  %Nifty bigO symbol
\newcommand{\X} {\mathcal{X}}
\def\qed{\quad \vrule height6.5pt width6pt depth0pt} %Nifty end proof sign
\long\def\symbolfootnote[#1]#2{\begingroup%
\def\thefootnote{\fnsymbol{footnote}}\footnote[#1]{#2}\endgroup}

\maketitle

%\com{Tracey what we do in this paper is that we prove Theorems 1 to 4. Everything else apart from the introduction, abstract and the conclusions are finished. In the intro and the abstract everything in blue is from ITW and everything in black is from ISIT!}

\begin{abstract}
We consider the delay of network coding compared to routing with retransmissions in packet erasure networks with probabilistic erasures. We investigate the sub-linear term in the block delay required for unicasting $n$ packets and show that there is an unbounded gap between network coding and routing. In particular, we show that delay benefit of network coding scales at least as $\sqrt{n}$. Our analysis of the delay function for the routing strategy involves a major technical challenge of computing the expectation of the maximum of two negative binomial random variables. This problem has been studied previously and we derive the first exact characterization which may be of independent interest.
We also use a martingale bounded differences argument to show that the actual coding delay is tightly concentrated around its expectation. \end{abstract}

\begin{keywords}\vspace{-4mm}
Block delay, network coding, packet erasure correction, retransmission, unicast.

\end{keywords}

\section{Introduction}
\label{sec:intro}

%\com{The results we have here are the following: In Section~\ref{sec:} we analyze the expected time taken to send $n$ packets through a line network using coding and routing and characterize the ``so called delay'' function. The expected time for line networks is the same both for coding and routing.

%In Section~\ref{sec:} we analyze the expected time to send $n$ packets through two parallel line networks using both coding and routing.

%In Section~\ref{sec:} we prove that the actual time to send $n$ packets using coding is a random variable well concentrated around its mean value.}

%\cite{eryilmaz_medard, eryilmaz_medard_trans, barros_nistor, shurui_medard_2011, cogill_shrader_2011} \com{Here are some related paper that some cite our work}

This paper considers the block delay for unicasting a file consisting of $n$ packets over a packet erasure network with probabilistic erasures. Such networks have been extensively studied from the standpoint of capacity. Various schemes involving coding or retransmissions have been shown to be capacity-achieving for unicasting in networks with packet erasures, e.g.~\cite{dana06capacity, lun04coding, smith08wireless, neely09optimal}. For a capacity-achieving strategy, the expected block delay for transmitting $n$ packets is $\frac{n}{C}+D(n)$ where $C$ is the minimum cut capacity and the delay function $D(n)$ is sublinear in $n$ but differs in general for different strategies. In general networks, the optimal $D(n)$ is achieved by random linear network coding, in that decoding succeeds with high probability for any realization of packet erasure events for which the corresponding minimum cut capacity is $n$\footnote{The field size and packet length are assumed in this paper to be sufficiently large so that  the probability of rank-deficient choices of coding coefficients can be neglected, along with the fractional overhead of specifying the random coding vectors.}. However, relatively little has been known previously about the behavior of the delay function $D(n)$ for coding or retransmission strategies.

In this paper, we analyze the delay function $D(n)$ for random linear network coding (coding for short) as well as an uncoded hop-by-hop retransmission strategy (routing for short) where only one copy of each packet is kept in intermediate node buffers. Schemes  such as~\cite{roofnet:exor-hotnets03,neely09optimal} ensure that there is only one copy of each packet in the network; without substantial non-local coordination or feedback, it is complicated for an uncoded topology-independent scheme to keep track of multiple copies of packets at intermediate nodes and prevent capacity loss from duplicate packet transmissions. We also assume that the routing strategy fixes how many packets will traverse each route a priori based on link statistics, without adjusting to link erasure realizations. While routing strategies could dynamically re-route packets under atypical realizations, this would not be practical if the min-cut links are far from the source. On the other hand, network coding allows redundant packets to be transmitted efficiently in a topology-independent manner, without feedback or coordination, except for an acknowledgment from the destination when it has received the entire file. As such, network coding can fully exploit variable link realizations. These differences result in a coding advantage in delay function $D(n)$ which, as we will show, can be unbounded with increasing $n$.

A major technical challenge in the analysis of the delay function for the routing strategy involves computing the expectation of the maximum of two independent negative binomial random variables. This problem has been previously studied in~\cite{maximum_statistics}, where authors explain in detail why it is complicated\footnote{Authors in~\cite{maximum_statistics} deal with the expected maximum of any number of negative binomial distributions but the difficulty remains even for two negative binomial distributions.} and derive an approximate solution to the problem. Our analysis addresses this open problem by finding an exact expression and showing that it grows to infinity at least as the square root of $n$.

%A common approach for practical network coding performs random linear coding over batches or generations~\cite{chou03practical}, where the relevant delay measure is the time taken for the batch to be received. Such in-network coding is particularly beneficial in lossy networks~\cite{lun04coding} compared to end-to-end erasure coding. In this paper we investigate the batch end-to-end
%delay for lossy line networks. We consider the use of random linear network coding without feedback and a packet erasure model with different link qualities. All the nodes in the network store all the packets they receive and whenever given a transmission opportunity, send a random linear combination of all the stored packets~\cite{lun04coding,pakzad05coding} over erasure links.

Related work on queuing delay in uncoded~\cite{rubin74communication,shalmon87exact} and coded~\cite{ephremides2006} systems has considered the case of random arrivals and their results pertain to the delay of individual packets in steady state. This differs from our work which considers the delay for communicating a fixed size batch of $n$ packets that are initially present at the source.

\subsection{Main results}

For a line network, the capacity is given by the worst link. We show a finite bound on the delay function that applies to both coding and the routing scheme when there is a single worst link.%Therefore, the expected time $\stexp T_n$ for the $n$ packets to cross the network is
%%
%\beq
%\label{delay_fun_1}
%\stexp T_n  =  \frac{n}{1-\displaystyle\max_{1\leq i\leq \ell} p_i}+ D(n,p_1,p_2,\ldots, p_\ell),
%\eeq
%%
%where the delay function $D(n,p_1,p_2,\ldots, p_\ell)$ is the sublinear part:
%%
%\begin{equation*}
%\lim_{n\rightarrow \infty, \ell\, \text{fixed}} \frac{D(n,p_1,p_2, \ldots, p_\ell)}{n}=0.
%\end{equation*}
%
%In other words, the time required to send $n$ packets depends
%asymptotically only on the min-cut of the network, which is the
%capacity of the worst link  with probability of failure $\max p_i$

%In this work we characterize the delay function by showing that it is
%non-decreasing in $n$ and is bounded by a simple function $\bar
%D(p_1,p_2,\ldots, p_\ell)$ of the link erasure probabilities. The main results of this paper are the following theorems which characterize the expected behavior and show a concentration of the actual delay random variable close to this expectation.
%
\begin{theorem}
Consider $n$ packets communicated through a line network of $\ell$ links with erasure probabilities $p_1,p_2,\ldots, p_\ell$ where there is a unique worst link:
\begin{equation*}
p_m := \displaystyle\max_{1\leq i\leq \ell} p_i, \quad p_i <p_m<1 \quad \forall \, i \neq m.
\end{equation*}
The expected time $\stexp T_n$ to send all $n$ packets either with coding or routing is:
\begin{align}
\stexp T_n  =  \frac{n}{1-\displaystyle\max_{1\leq i\leq \ell} p_i}+ D(n,p_1,p_2,\ldots, p_\ell),
\label{delay_fun_1}
\end{align}
where the delay function $D(n,p_1,p_2,\ldots, p_\ell)$ is non-decreasing in $n$ and upper bounded by:
\begin{equation*}
\bar D(p_1,p_2, \ldots, p_\ell) :=
\sum_{i=1, i \neq m}^{\ell}\frac{p_m}{p_m - p_i}.
\end{equation*}
\label{thm:Expected_values_multi_hop_line_network}
\end{theorem}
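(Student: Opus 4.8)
The plan is to model the transmission process explicitly and bound the delay by comparing it to the worst link in isolation. First I would set up the Markov-chain / counting picture: for the routing scheme each packet is assigned to (here, the unique) route, and the number of slots link $i$ needs to push $n$ packets across is a negative binomial random variable $N_i = \sum_{j=1}^n G_i^{(j)}$ where the $G_i^{(j)}$ are i.i.d.\ geometric with success probability $1-p_i$; the total completion time $T_n$ is determined by how these per-link processes compose along the line. The key structural observation is that $\stexp N_m = \frac{n}{1-p_m}$ exactly accounts for the linear term $\frac{n}{1-\max_i p_i}$ in \eqref{delay_fun_1}, so $D(n,p_1,\ldots,p_\ell) = \stexp T_n - \frac{n}{1-p_m}$ is the extra time the non-bottleneck links cost relative to an idealized system where only link $m$ is present.

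Next I would prove monotonicity and the upper bound together. Monotonicity in $n$: adding one more packet to the batch cannot decrease the residual delay contributed by the faster links, which can be made rigorous by a coupling argument — run the $(n+1)$-packet process and the $n$-packet process on the same erasure realizations and observe that the marginal $(n+1)$-st packet's journey through link $m$ already absorbs the $\frac{1}{1-p_m}$ increment, so the difference $D(n+1)-D(n)\ge 0$. For the uniform bound, the idea is that each fast link $i\neq m$ can only ``delay'' the bottleneck by the amount of time the bottleneck has to wait idle for link $i$; since link $i$ is strictly faster ($p_i<p_m$), its service process races ahead of link $m$'s, and the expected total backlog/starvation time it induces is governed by a random-walk-type difference $N_i - N_m$ with negative drift. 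Summing a geometric-series / renewal estimate over the $n$ packets gives a bound of the form $\frac{p_m}{p_m-p_i}$ per fast link, independent of $n$, and summing over $i\neq m$ yields $\bar D(p_1,\ldots,p_\ell)$.

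For the coding case I would argue that coding is never worse than routing here: with random linear network coding over a line, the destination decodes as soon as the bottleneck link $m$ has delivered $n$ innovative packets, and the upstream links, being strictly faster, keep link $m$'s buffer nonempty essentially all the time — so the same $\frac{p_m}{p_m-p_i}$-type bound controls the expected starvation of link $m$. Hence the same $\stexp T_n$ expression and the same upper bound $\bar D$ apply; in fact I expect the cleanest route is to show both schemes sandwich between the ``link-$m$-only'' lower bound $\frac{n}{1-p_m}$ and the upper bound $\frac{n}{1-p_m}+\bar D$.

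The main obstacle I anticipate is making the ``fast link induces only $O(1)$ expected starvation at the bottleneck'' step fully rigorous and getting the constant exactly $\frac{p_m}{p_m-p_i}$ rather than just some $O(1)$ quantity. This requires carefully handling the joint evolution of two coupled renewal processes (link $i$ feeding link $m$): one must track the sup over time of $N_m(\text{by slot }t) - N_i(\text{by slot }t)$, show its expectation is bounded by the ladder-height / first-passage expectation of a random walk with increments of mean $\frac{1}{1-p_i}-\frac{1}{1-p_m}<0$, and evaluate that expectation in closed form. Telescoping this pairwise bound along the whole line (so that the effect of link $i$ on the eventual completion time is not amplified by the intervening links) is the delicate part; I would handle it by an inductive argument on the number of links, peeling off the bottleneck last, or by the observation that in a line network the completion time is the max over links of a shifted per-link process plus lower-order corrections that telescope.
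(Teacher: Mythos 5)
Your plan defers exactly the steps that constitute the whole difficulty, so as it stands there is a genuine gap. The claim that each fast link $i\neq m$ costs the bottleneck at most $\frac{p_m}{p_m-p_i}$ in expected starvation, to be read off from the ladder height / supremum of the negative-drift walk built from $N_m-N_i$, is not established, and it is not clear it would come out to exactly that constant: in discrete time with geometric attempts, $\frac{p_m}{p_m-p_i}$ arises as the stationary expected per-customer sojourn time at a queue with arrival rate $1-p_m$ and service rate $1-p_i$ (a queueing identity), not as the expectation of the running maximum of the partial-sum difference you describe, so your route would in effect force you to re-derive that stationary formula. The ``telescoping along the line'' step is also only asserted: the effect of link $i$ on the completion time is mediated by all intermediate queues, pairwise bottleneck-versus-link-$i$ comparisons do not obviously add without amplification, and links upstream of the bottleneck (starvation) and downstream of it (drain time of residual packets) require different treatments, which your single starvation picture conflates. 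Finally, the monotonicity argument is heuristic: what must be shown is $\stexp T_{n+1}-\stexp T_n\ge \frac{1}{1-p_m}$, and the remark that the $(n+1)$-st packet ``absorbs'' one bottleneck service time is not a proof; one needs a coupling establishing stochastic monotonicity of the entire vector of residual queue contents in $n$.

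For comparison, the paper avoids any starvation analysis: by Weber's interchangeability of tandem service stations it moves the unique worst link to the first position, where it is fed directly by the source and hence never idles, so the time for all $n$ packets to cross it has expectation exactly $\frac{n}{1-p_m}$, and $D(n)$ becomes the expected time to drain the residual innovative packets at intermediate nodes. Monotonicity of this drain time is proved by a coupling showing the Markov chain of residual packet counts is $\preceq_{\text{st}}$-increasing (Proposition~\ref{prop:The_markov_chain}), and the uniform bound is its $n\to\infty$ limit, identified with the stationary end-to-end delay of a single customer in a geometric tandem network: Hsu--Burke gives existence of steady state because the (relocated) arrival process is the slowest, and Daduna's formula gives the value $\sum_{i\neq m}\frac{p_m}{p_m-p_i}$. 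Note also that the paper does not argue ``coding is no worse than routing'' but that on a line the two are realization-by-realization identical in the evolution of innovative packets, which is why a single argument covers both. If you wish to salvage your approach, the missing ingredients are precisely a rigorous stationary (or supremum) delay computation per queue and a justification that these per-queue contributions compose additively along the tandem; the interchangeability-plus-queueing route obtains both in one stroke.
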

If on the other hand there are two links that take the worst value,
then the delay function is not bounded but still exhibits sublinear behavior. Pakzad et al.~\cite{pakzad05coding} show that
in the case of a  line  network with identical links, the optimal delay function grows as $\sqrt{n}$.  This is achieved by both coding and the routing strategy\footnote{The result in~\cite{pakzad05coding} is derived for the routing strategy which is delay-optimal in a line network; as discussed above,
coding in a sufficiently large field is delay-optimal in any network.}.

In contrast, for parallel path networks, we show that the delay function behaves quite differently for the coded and uncoded schemes.
\begin{theorem}
The expected time $\stexp T_n^\text{c}$ taken to send $n$ packets using coding over a $k$-parallel path multi-hop network is
\begin{align*}
\stexp T_{n}^\text{c} = \frac{n}{k-\displaystyle\sum_{i=1}^k\mathop{\max}_{1\leq j\leq\ell} p_{ij}}+D^\text{c}_n
\end{align*}
where the delay function $D^\text{c}_n$ depends on all the erasure probabilities $p_{ij}$, for $i\in\{1,\ldots,k\}$, $1\leq j\leq \ell$. In the case where there is single worst link in each path $D^\text{c}_n$ is bounded, \emph{i.e.} $D^\text{c}_n\in\bigO(1)$ whereas if there are multiple worst links in at least one path then $D^\text{c}_n\in\bigO(\sqrt{n})$. The result holds regardless of any statistical dependence between erasure processes on different paths.
\label{thm:coding}
\end{theorem}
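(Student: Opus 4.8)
The plan is to reduce the claim to the single‑path facts already available — namely Theorem~\ref{thm:Expected_values_multi_hop_line_network} and the $\sqrt{n}$ line‑network bound of Pakzad et al. — glued together by one martingale/optional‑stopping computation. The starting point is the coding model: because the field is large enough that every delivered coded packet is innovative until the destination holds $n$ of them, coding decouples the $k$ (internally node‑disjoint) paths at the allocation level, so the destination decodes at $T_n^\text{c}=\inf\{t:\sum_{i=1}^k Z_i(t)\ge n\}$, where $Z_i(t)$ is the number of coded packets delivered along path $i$ by time $t$ when that path is kept continuously backlogged. Let $m_i$ be a worst link of path $i$, let $U_i(t)$ be the number of un‑erased slots of link $(i,m_i)$ up to time $t$, and set $\rho:=\sum_{i=1}^k(1-p_{im_i})=k-\sum_{i=1}^k\max_{1\le j\le\ell}p_{ij}$, the claimed leading coefficient.

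First I would prove the lower bound, which holds in every case and for arbitrary dependence between paths: every packet delivered along path $i$ must cross link $(i,m_i)$, so $Z_i(t)\le U_i(t)$ pathwise and $T_n^\text{c}\ge\tau:=\inf\{t:\sum_i U_i(t)\ge n\}$. Since link $(i,m_i)$ is un‑erased in a slot with conditional probability $1-p_{im_i}$ given the past, the process $\sum_i U_i(t)-\rho t$ is a martingale (cross‑path correlation is irrelevant, by linearity of expectation), and $\tau$ is an integrable stopping time with increments at most $k$; optional stopping gives $\rho\,\stexp\tau=\stexp\sum_i U_i(\tau)\in[n,n+k-1]$, hence $\stexp T_n^\text{c}\ge\stexp\tau\ge n/\rho$. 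This pins down the leading term from below and already shows $D_n^\text{c}\ge0$, so only an upper bound on $\stexp T_n^\text{c}$ remains.

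For the $\bigO(1)$ bound when each path has a unique worst link, the decisive point — and the reason coding beats routing here — is that the per‑path loads adapt to the realization, and the argument must not discard this. A fixed proportional split $n_i$ with $T_n^\text{c}\le\max_i\inf\{t:Z_i(t)\ge n_i\}$ is \emph{not} good enough: each term is a negative‑binomial‑type time with $\Theta(\sqrt n)$ spread, and the expectation of the maximum of $k$ of them exceeds the maximum of their means by $\Theta(\sqrt n)$ — exactly the ``expected maximum of negative binomials'' difficulty that the paper's routing analysis must confront. Instead I would, refining the line‑network analysis behind Theorem~\ref{thm:Expected_values_multi_hop_line_network} into a pathwise statement, construct a coupling in which, past a start‑up/drain term $B_i$ of mean $\bigO(1)$ (legitimate because every link of path $i$ other than $(i,m_i)$ is strictly faster, so the queues upstream of the bottleneck stay non‑empty after an $\bigO(1)$‑mean transient and the downstream drain adds only $\bigO(1)$ to the mean), $Z_i$ dominates a renewal process $R_i$ with $\mathrm{Geom}(1-p_{im_i})$ inter‑arrival times. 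Then $T_n^\text{c}\le\max_i B_i+\inf\{s:\sum_i R_i(s)\ge n\}$, and running the same optional‑stopping identity on the martingale $\sum_i R_i(s)-\rho s$ in the other direction bounds the last term by $(n+k-1)/\rho$; so $\stexp T_n^\text{c}\le n/\rho+\sum_i\stexp B_i+(k-1)/\rho=n/\rho+\bigO(1)$, i.e.\ $D_n^\text{c}\in\bigO(1)$. Making this coupling precise with the random shifts while keeping the martingale structure of $\sum_i R_i(s)-\rho s$ intact, and checking it is unaffected by dependence across paths, is the main technical obstacle.

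Finally, when some path $i_0$ has two worst links the adaptivity issue no longer matters, and the fixed split suffices: take $n_i=\bigl\lceil n(1-p_{im_i})/\rho\bigr\rceil$ (corrected so that $\sum_i n_i=n$), so $T_n^\text{c}\le\max_i\sigma_i(n_i)$ with $\sigma_i(n_i):=\inf\{t:Z_i(t)\ge n_i\}$ and $\stexp\sigma_i(n_i)=n/\rho+D^{(i)}(n_i)$, where $D^{(i)}$ is $\bigO(1)$ for the single‑bottleneck paths (Theorem~\ref{thm:Expected_values_multi_hop_line_network}) and $\bigO(\sqrt n)$ for $i_0$ (Pakzad et al.). Since each $\sigma_i(n_i)$ is a function of negative binomial variables with variance $\bigO(n)$, the elementary bound $\stexp\max_i X_i\le\max_i\stexp X_i+\sum_i\sqrt{\var(X_i)}$ gives $\stexp\max_i\sigma_i(n_i)\le n/\rho+\bigO(\sqrt n)$, which together with $D_n^\text{c}\ge0$ yields $D_n^\text{c}\in\bigO(\sqrt n)$; an $\Omega(\sqrt n)$ lower bound here, if one wants it, follows by applying the line‑network bound of Pakzad et al.\ to a cut of the combined network that contains both worst links of $i_0$. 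Every step above uses only within‑path (indeed within‑link, across time) independence, which is why all conclusions survive arbitrary statistical dependence between the erasure processes on different paths.
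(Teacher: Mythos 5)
Your lower bound and your optional-stopping computation for the leading term are sound (and would serve as a cleaner substitute for the paper's Proposition~\ref{prop:two_parallel_links_coding}, which gets $n/(k-\sum_i q_i)+\bigO(1)$ for the bottleneck-cut crossing time by solving a linear recursion and locating the roots of its characteristic polynomial inside the unit disc); your fixed-split argument for the multiple-worst-link case is also a reasonable alternative route, modulo the asserted but unproven $\var(\sigma_i)\in\bigO(n)$. The genuine gap is the single-worst-link case, where the entire $\bigO(1)$ claim rests on the coupling you explicitly defer as ``the main technical obstacle'': that after a random shift $B_i$ with $\stexp B_i\in\bigO(1)$, the per-path delivery process $Z_i$ dominates a rate-$(1-p_{im_i})$ renewal process \emph{for all} $t$. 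As stated this is not obtainable with an $\bigO(1)$-mean shift: the lag of $Z_i(t)$ behind the bottleneck success count $U_i(t)$ equals the bottleneck slots wasted while the bottleneck node is empty plus the backlog downstream of the bottleneck, and although both have $\bigO(1)$ expectation at any fixed time, the downstream backlog does not converge pathwise -- its running maximum over a horizon of length $\Theta(n)$ grows like $\log n$ -- so no single $\bigO(1)$-mean time shift can give a domination valid uniformly in $t$. What is actually needed (and what your sketch silently assumes) is control of the \emph{expected residual rank at the random epoch} at which rank $n$ has crossed the bottleneck cut, i.e.\ at a stopping time correlated with the queue states; that is the nontrivial content of the theorem in this case, and it is exactly the step you have not supplied.

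The paper closes precisely this hole with a different decomposition: using Weber's interchangeability result it moves the worst link of each path to the first position, writes $T_n^{\text{c}}=\hat T_n^{\text{c}}+\tilde T_n^{\text{c}}$ where $\hat T_n^{\text{c}}$ is the time for all $n$ units of rank to cross the set of first (worst) links -- the parallel-links problem above -- and $\tilde T_n^{\text{c}}$ is the drain time of the innovative packets still at intermediate nodes at that epoch, crudely bounded by routing all $R$ of them down a single path, giving $\stexp\tilde T_n^{\text{c}}\le\sum_j\stexp R/(1-p_{1j})$. The key point is then to bound $\stexp R$ at that random epoch: Proposition~\ref{prop:The_markov_chain} (a coupling argument showing the residual vector is stochastically increasing in $n$) lets one dominate it by its steady-state value, and Hsu--Burke guarantees steady state exists with $\bigO(1)$ expected content when the input (worst) link is strictly slower than every downstream link, with Pakzad et al.\ supplying the $\bigO(\sqrt n)$ bound when a path has multiple worst links. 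So the monotonicity-plus-steady-state argument is not an optional refinement but the substitute for your missing coupling; without it, or an equivalent rigorous treatment of the residual at the random crossing time, your proof establishes the leading term and the $\Omega$-direction but not $D_n^{\text{c}}\in\bigO(1)$.
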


\begin{theorem}
The expected time $\stexp T_n^\text{r}$ taken to send  $n$ packets through a $k$-parallel path network by routing is
\begin{align}
\stexp T_{n}^\text{r} = \frac{n}{k-\displaystyle\sum_{i=1}^k\mathop{\max}_{1\leq j\leq\ell} p_{ij}}+D^\text{r}_n
\label{eqn:finish}
\end{align}
where the delay function $D^\text{r}_n$ depends on all the erasure probabilities $p_{ij}$, for $i\in\{1,\ldots,k\}$, $1\leq j\leq \ell$ and grows at least as $\sqrt{n}$, \emph{i.e.} $D^\text{r}_n\in\Omega(\sqrt{n})$.
\label{thm:routing}
\end{theorem}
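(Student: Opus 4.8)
The plan is to reduce the $k$-parallel-path routing problem to a single source-destination pair connected by (at least) two disjoint routes whose traversal times are independent negative binomial random variables, and then show that the expected maximum of these times exceeds the ``capacity'' term $\frac{n}{k-\sum_i\max_j p_{ij}}$ by an amount that is $\Omega(\sqrt n)$. First I would fix the routing schedule: since the routing strategy splits the $n$ packets among the $k$ paths a priori according to link statistics, path $i$ is assigned roughly $n_i = n\cdot\frac{1-\max_j p_{ij}}{k-\sum_{i'}\max_j p_{i'j}}$ packets, and the total transmission time is the maximum over the $k$ paths of the time to push $n_i$ packets through path $i$. For a single path, pushing $n_i$ packets across the bottleneck link (erasure probability $\max_j p_{ij}$) takes a negative binomial number of slots with mean $n_i/(1-\max_j p_{ij})$ and variance $\Theta(n_i)$; the non-bottleneck links contribute only an $O(1)$ additive term by Theorem~\ref{thm:Expected_values_multi_hop_line_network}. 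Hence $\stexp T_n^{\mathrm r} = \stexp[\max_{1\le i\le k} X_i] + O(1)$ where $X_i$ is negative binomial with mean $\mu_i := n_i/(1-\max_j p_{ij}) = \frac{n}{k-\sum_{i'}\max_j p_{i'j}} =: \mu$, i.e.\ all $k$ paths are designed to finish at the same expected time $\mu$, which is exactly the capacity term in \eqref{eqn:finish}. Therefore $D_n^{\mathrm r} = \stexp[\max_i X_i] - \mu + O(1)$, and it suffices to show $\stexp[\max_i X_i] - \mu \in \Omega(\sqrt n)$.

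The core estimate is a lower bound on the expected maximum of independent (or even just two) negative binomial variables each with mean $\mu$ and variance $\sigma_i^2 = \Theta(n)$. I would isolate two paths, say $i=1,2$, and bound $\stexp[\max_i X_i] \ge \stexp[\max(X_1,X_2)]$. Using $\max(a,b) = \tfrac12(a+b) + \tfrac12|a-b|$, we get $\stexp[\max(X_1,X_2)] - \mu = \tfrac12\stexp|X_1 - X_2|$ (when $X_1,X_2$ are independent with equal mean $\mu$). It then remains to show $\stexp|X_1-X_2| \in \Omega(\sqrt n)$: since $X_1-X_2$ has mean zero and variance $\sigma_1^2+\sigma_2^2 = \Theta(n)$, a Paley--Zygmund-type / anti-concentration argument gives $\stexp|X_1-X_2| \ge c\sqrt{\var(X_1-X_2)} = \Omega(\sqrt n)$, provided the fourth moment of $X_1-X_2$ is $O(n^2)$, which holds for negative binomials with the stated parameters (one can also invoke a local central limit theorem for the negative binomial to see $X_i$ spreads over a window of width $\Theta(\sqrt n)$). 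If the routing split does not make the two largest-variance paths have exactly equal means one writes $\stexp|X_1-X_2| \ge |\stexp X_1 - \stexp X_2|$ trivially and otherwise the equal-mean case above applies; in either event the bound is $\Omega(\sqrt n)$ as long as at least one path has a bottleneck with $0 < p < 1$, so $\sigma_i^2 = \Theta(n)$ — which is guaranteed since the network is non-trivial.

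I expect the main obstacle to be the anti-concentration step: turning ``variance $\Theta(n)$'' into ``$\stexp|X_1-X_2| = \Omega(\sqrt n)$'' rigorously, since variance alone does not lower-bound the mean absolute deviation without control of higher moments. The cleanest route is probably to compute or bound $\stexp[(X_1-X_2)^4]$ explicitly — negative binomial moments are standard — and then apply the inequality $\stexp|Y| \ge (\stexp Y^2)^{3/2}/(\stexp Y^4)^{1/2}$ to $Y = X_1 - X_2$; with $\stexp Y^2 = \Theta(n)$ and $\stexp Y^4 = \Theta(n^2)$ this yields $\stexp|Y| = \Omega(\sqrt n)$ directly. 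A secondary, more bookkeeping-style obstacle is justifying that the non-bottleneck links and the integer rounding of the $n_i$ contribute only $O(1)$ (hence do not erode the $\sqrt n$ gap), but this follows from Theorem~\ref{thm:Expected_values_multi_hop_line_network} applied path-by-path together with the observation that a rounding error of at most one packet per path changes each $\mu_i$ by $O(1)$. Finally, the claim that the result is insensitive to statistical dependence across paths needs no work here, because the lower bound only uses two paths and, by choosing them to be a pair that is in fact independent (or by noting the $\tfrac12\stexp|X_1-X_2|$ identity degrades gracefully under positive correlation and can only help under negative correlation), the $\Omega(\sqrt n)$ conclusion is preserved.
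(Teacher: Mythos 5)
Your proposal is correct, but it reaches the key estimate by a genuinely different route than the paper. The outer reduction is the same: move the worst link of each path to the front (Weber interchangeability), note that the proportional split $n_i$ makes every path's bottleneck-crossing time a negative binomial with the common mean $\mu=n/(k-\sum_i\max_j p_{ij})$, and lower-bound $T_n^{\mathrm r}$ by the time for two of these bottlenecks to finish. Where you diverge is in lower-bounding $\stexp[\max(X_1,X_2)]-\mu$: the paper (Proposition~\ref{prop:two_parallel_links_routing}) sets up the two-dimensional recursion for $A_{i,j}$, solves it exactly by a $Z$--transform, and then extracts the $\sqrt{n}$ growth from the resulting double sum via Stirling bounds, concavity of an entropy-type exponent $g(\alpha,\beta)$, a Taylor expansion at the maximizer, and harmonic-number estimates; you instead use the pointwise identity $\max(a,b)=\tfrac12(a+b)+\tfrac12|a-b|$, which with equal means gives $\stexp[\max(X_1,X_2)]-\mu=\tfrac12\stexp|X_1-X_2|$, and then a fourth-moment H\"older/Paley--Zygmund bound $\stexp|Y|\ge(\stexp Y^2)^{3/2}/(\stexp Y^4)^{1/2}$ with $\stexp Y^2=\Theta(n)$ and $\stexp Y^4=\Theta(n^2)$ for $Y=X_1-X_2$ (both moments are elementary for sums of i.i.d.\ geometrics). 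Your argument is substantially shorter, needs only variance and fourth-moment control, and is robust to rounding and to mean offsets of $O(1)$; what it does not deliver is the exact characterization of the expected maximum of two negative binomials, which is the paper's advertised contribution of independent interest and the reason for its heavier machinery. Two small points to tighten: the intermediate claim $\stexp T_n^{\mathrm r}=\stexp[\max_i X_i]+O(1)$ is both unnecessary and not quite right when some path has multiple worst links (per-path residual delay can be $\Theta(\sqrt n)$ by Pakzad et al.); all you need is the pointwise inequality $T_n^{\mathrm r}\ge\max(X_1,X_2)$, since the theorem only asserts $D_n^{\mathrm r}\in\Omega(\sqrt n)$. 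And your variance argument implicitly assumes the bottleneck erasure probabilities of the two chosen paths are not both zero; the paper's proof makes the same implicit assumption (it divides by $q_1q_2$), so this is not a gap relative to the paper, but it is worth stating.
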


The above results on parallel path networks  generalize to arbitrary topologies. We define {\it single-bottleneck networks} as networks that have a single min-cut.

\begin{theorem}
In a  network of erasure channels with a single source $\mathcal{S}$ and a single receiver $\mathcal{T}$ the expected time $\stexp T_{n}^\text{r}$ taken to send $n$ packets by routing is
\begin{align*}
\stexp T_{n}^\text{r} = \frac{n}{C}+\hat{D}^\text{r}_n
\end{align*}
where $C$ is the capacity of the network and $\hat{D}^\text{r}_n\in\Omega(\sqrt{n})$. In the case of network coding the expected time $\stexp T_{n}^\text{c}$ taken to send $n$ packets is
\begin{align*}
\stexp T_{n}^\text{r} = \frac{n}{C}+\hat{D}^\text{r}_n
\end{align*}
where  $\hat{D}^\text{c}_n\in\bigO(1)$ for single-bottleneck networks.
\label{thm:super_general}
\end{theorem}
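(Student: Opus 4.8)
The plan is to concentrate all of the sublinear behavior on a single minimum cut, thereby reducing Theorem~\ref{thm:super_general} to the parallel-path estimates behind Theorems~\ref{thm:coding} and~\ref{thm:routing}. Fix a min-cut $E^{\ast}$ of value $C=\sum_{e\in E^{\ast}}(1-p_e)$ and let $N_e(t)$ count the successful transmissions on edge $e$ during the first $t$ slots, so that $\sum_{e\in E^{\ast}}N_e(t)-Ct$ is a martingale with bounded increments (at most $|E^{\ast}|$). The workhorse is optional stopping: for any threshold $m$ the stopping time $\tau_m=\min\{t:\sum_{e\in E^{\ast}}N_e(t)\ge m\}$ satisfies $m/C\le\stexp\tau_m\le(m+|E^{\ast}|-1)/C$, i.e. $\stexp\tau_m=m/C+\bigO(1)$, with the overshoot bounded because at most one success crosses each cut edge per slot.

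For the routing lower bound, the a~priori routing solution assigns a \emph{fixed} number $m_e$ of packets to each cut edge $e\in E^{\ast}$ with $\sum_{e\in E^{\ast}}m_e=n$, and clearing edge $e$ takes at least $\mathrm{NegBin}(m_e,1-p_e)$ slots even if all its packets were available at slot~$0$; hence, under the natural coupling, $T^{\mathrm r}_n\ge\max_{e\in E^{\ast}}\Lambda_e$ with $\Lambda_e\sim\mathrm{NegBin}(m_e,1-p_e)$. A short counting argument (using $\sum_e m_e=n$, $\sum_e(1-p_e)=C$, and that the leading term is $n/C$) forces at least two cut edges $e_1,e_2$ to carry $\Theta(n)$ packets --- unless $E^{\ast}$ is a single edge, in which case the network reduces to a line network and Theorem~\ref{thm:Expected_values_multi_hop_line_network} or Pakzad et al.~\cite{pakzad05coding} applies directly. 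Writing $\stexp[\Lambda_{e_1}\vee\Lambda_{e_2}]=\tfrac12\big(\stexp\Lambda_{e_1}+\stexp\Lambda_{e_2}+\stexp|\Lambda_{e_1}-\Lambda_{e_2}|\big)$, the two terms have variance $\Theta(n)$, so a Berry--Esseen estimate gives $\stexp|\Lambda_{e_1}-\Lambda_{e_2}|=\Omega(\sqrt n)$ whenever their means are $\bigO(\sqrt n)$ apart, while $\max(\stexp\Lambda_{e_1},\stexp\Lambda_{e_2})\ge n/C+\Omega(\sqrt n)$ otherwise; either way $\hat D^{\mathrm r}_n\in\Omega(\sqrt n)$. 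This is exactly the max-of-negative-binomials computation of Theorem~\ref{thm:routing}, now run on the cut edges.

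For the coding part I would use the single-bottleneck hypothesis to say that every cut other than $E^{\ast}$ has value at least $C+\delta$ for a fixed $\delta>0$; in particular the sub-network $U$ upstream of $E^{\ast}$ and the sub-network $W$ downstream of $E^{\ast}$ each have capacity strictly above $C$. The lower bound $\stexp T^{\mathrm c}_n\ge n/C$ is immediate: the destination's decoded subspace lies in the span of the transmissions that crossed $E^{\ast}$, so $T^{\mathrm c}_n\ge\tau_n$ and optional stopping applies. For the upper bound, because $U$ feeds innovative degrees of freedom toward the tails of $E^{\ast}$ at aggregate rate $\ge C+\delta$, those buffers are starved for only $\bigO(1)$ expected time, so the number of degrees of freedom across $E^{\ast}$ by time $t$ exceeds $\sum_{e\in E^{\ast}}N_e(t)-\bigO(1)$ in expectation; and because $W$ has capacity above $C$, the last batch flushes from the heads of $E^{\ast}$ to $\mathcal T$ with only $\bigO(1)$ extra expected delay. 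Hence $T^{\mathrm c}_n\le\tau_{n+\bigO(1)}+\bigO(1)$, whose expectation is $n/C+\bigO(1)$; together with the lower bound this gives $\hat D^{\mathrm c}_n\in\bigO(1)$, and this direction uses no independence between erasures on distinct edges.

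The step I expect to be the real obstacle is the claim that the sub-networks $U$ and $W$ contribute only $\bigO(1)$ expected delay: for a line network this is precisely Theorem~\ref{thm:Expected_values_multi_hop_line_network}, but for an arbitrary sub-network whose capacity exceeds $C$ by a fixed amount it requires showing that the cumulative lag behind the ideal rate-$C$ schedule stays $\bigO(1)$ in expectation --- most plausibly via an induction on network size or a coupling to a work-conserving fluid relaxation. The single-bottleneck hypothesis is exactly what makes this succeed: it provides a single place where the rate constraint binds and strict slack everywhere else, so the per-stage lags telescope rather than accumulating to $\Theta(\sqrt n)$, as they would for two bottleneck cuts in series (the line network with two worst links of Pakzad et al.~\cite{pakzad05coding}).
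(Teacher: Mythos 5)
Your routing half is essentially the paper's own argument: collapse the network onto the min-cut edges, observe that the a priori routing assigns a fixed $\Theta(n)$ load to each cut edge, and lower-bound the completion time by the maximum of negative binomial variables. Replacing the paper's exact characterization (the $Z$-transform computation behind Proposition~\ref{prop:two_parallel_links_routing}) by a Berry--Esseen/expected-gap estimate is a legitimate shortcut for an $\Omega(\sqrt{n})$ lower bound, so that part stands (modulo the degenerate single-cut-edge case, which the theorem statement itself glosses over).

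The coding upper bound, however, has a genuine gap, and it sits exactly where you flag it: the claim that the upstream and downstream subnetworks, having capacity $C+\delta$, cause only $\bigO(1)$ expected starvation and flushing delay is not a technicality to be deferred --- it \emph{is} the theorem. A global cut-slack $\delta$ does not by itself control the lag: the delay contributed by intermediate stages is governed by per-path comparisons between a bottleneck link and the other links feeding it, and a ``fluid relaxation plus telescoping lags'' heuristic is precisely what breaks when two equally bad links appear in series (the $\Theta(\sqrt{n})$ case of Pakzad et al.), so some structural decomposition is unavoidable. The paper supplies it: Lemma~\ref{lemma:bottleneck} shows that a single-bottleneck network admits a max-flow subgraph decomposable into paths each having a \emph{single} worst link (located on the min-cut); from this flow decomposition one builds a parallel-path network $\hat{\mathcal{G}}$ with correlated link erasures that the original network can emulate by keeping per-flow buffers, giving $\stexp T^{\text{c}}_n\leq\stexp\hat{T}^{\text{c}}_n$; and then Theorem~\ref{thm:coding} applies to $\hat{\mathcal{G}}$, where the cut-crossing term is handled by Proposition~\ref{prop:two_parallel_links_coding} (whose recursion and root analysis hold under arbitrary correlation between links, which is essential since the split edges of $\hat{\mathcal{G}}$ are correlated by construction) and the residual in-network packets are $\bigO(1)$ by the tandem-queue steady-state results used in Theorem~\ref{thm:Expected_values_multi_hop_line_network} after Weber's interchangeability argument. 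To make your route rigorous you would have to reprove essentially these three ingredients (a path decomposition with unique per-path worst links, an emulation/monotonicity step, and a correlation-robust analysis of the cut), so as written the coding direction is an unproven assertion rather than a proof.
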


We also prove the following concentration result:
\begin{theorem}
The time $T_n^\text{c}$ for $n$ packets to be transmitted from a source to a sink over a  network of erasure channels using network coding is concentrated around its expected value with high probability. In particular for sufficiently large $n$:
\begin{align}
\Prob\left[\left|T^{\text{c}}_n-\stexp T^{\text{c}}_n\right|> \epsilon_{n}\right] \leq \frac{2C}{n}+  o\left(\frac{1}{n}\right),
\label{eqn:concentration}
\end{align}
where $C$ is the capacity of the network and $\epsilon_n$ represents the corresponding deviation and is equal to $\epsilon_{n} = n^{1/2+\delta}\slash C$, $\delta\in (0,1/2)$.
\label{thm:concentration}
\end{theorem}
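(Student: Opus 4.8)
The plan is a bounded-differences (McDiarmid) argument, but one that cannot be run on the per-slot erasures themselves. Let $\xi_t\in\{0,1\}^{E}$ record whether each edge's slot-$t$ transmission succeeds; the $\xi_t$ are i.i.d.\ in $t$, and under the large-field assumption of the Introduction the sink recovers the file exactly when the min-cut of the time-expanded network first reaches $n$, so $T_n^{\text{c}}=\tau\big((\xi_t)_{t\ge1}\big)$ for a deterministic $\tau$. The obstacle is that $\tau$ has no useful Lipschitz constant in the coordinates $\xi_t$: in an unlucky realization, deleting a single successful transmission can delay decoding by an unbounded amount (already on one link, erasing one delivery can move the $n$-th delivery far beyond any fixed bound). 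So the first task is to re-express $T_n^{\text{c}}$ as a Lipschitz function of \emph{independent, light-tailed} coordinates.

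For this I would use the renewal structure of the bottleneck. Up to an additive $\bigO(1)$ pipeline offset (the few slots needed to fill the paths), the innovative packets arriving at the sink form a superposition of $\bigO(1)$ renewal processes---one per min-cut edge---with geometric inter-arrival times whose parameters are fixed by the worst link on each min-cut path. Write $\gamma_1,\dots,\gamma_m$ for the full collection of these inter-arrival times (independent across processes, i.i.d.\ within each), where $m=\bigO(n)$ and each $\gamma_i$ is stochastically dominated by $\mathrm{Geom}(\rho)$ for a constant $\rho=\rho(G)>0$. Then $T_n^{\text{c}}=\bigO(1)+F(\gamma_1,\dots,\gamma_m)$, where $F$ is the first-passage time of the superposed counting process to level $n$.

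I would then truncate at level $M=\Theta(\log n)$, put $\widehat\gamma_i=\gamma_i\wedge M$ and $\widehat T=\bigO(1)+F(\widehat\gamma_1,\dots,\widehat\gamma_m)$. The $\widehat\gamma_i$ are independent, and on inputs bounded by $M$ the functional $F$ has bounded differences $\bigO(M^2)$: moving one inter-arrival time by at most $M$ shifts the superposed counting process by at most $M$ at every time, and $M$ further superposed arrivals then accrue within $\bigO(M^2)$ slots because every superposed gap is now at most $M$. McDiarmid's inequality therefore gives, for $\lambda=\epsilon_n/2=n^{1/2+\delta}/(2C)$,
\[
\Prob\big[\,\big|\widehat T-\stexp\widehat T\big|>\lambda\,\big]\ \le\ 2\exp\!\big(-\Theta(\lambda^2/(mM^4))\big)\ =\ 2\exp\!\big(-\Theta(n^{2\delta}/\log^4 n)\big)\ =\ o(1/n),
\]
using that $\delta>0$ is a fixed constant. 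The event $\{T_n^{\text{c}}\neq\widehat T\}$ is contained in $\{\exists\,i:\gamma_i>M\}$, so a union bound gives $\Prob[T_n^{\text{c}}\neq\widehat T]\le m(1-\rho)^M$, and choosing the constant in $M=\Theta(\log n)$ so that this equals $\tfrac{2C}{n}$ produces the leading term of the claim. Finally $\big|\stexp T_n^{\text{c}}-\stexp\widehat T\big|\le \bigO(1)+\stexp\sum_i(\gamma_i-M)^{+}=o(\epsilon_n)$ by the geometric tails, so it is absorbed into $\lambda$; a union bound over the two events yields $\Prob\big[\,\big|T_n^{\text{c}}-\stexp T_n^{\text{c}}\big|>\epsilon_n\,\big]\le \tfrac{2C}{n}+o(1/n)$.

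The main obstacle is the reduction in the second paragraph: proving rigorously that in a general multi-hop network running random linear network coding the sink's innovative-arrival process is indeed (a pipeline shift of) a superposition of independent geometric renewal processes, with $F$ the resulting first-passage functional. This is where the large-field innovation assumption and a careful accounting of the time-expanded min-cut are needed; once it is in place, the truncation, the McDiarmid estimate, the union bound for the overshoot, and the $o(\epsilon_n)$ correction of the mean are all routine.
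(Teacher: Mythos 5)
You correctly identify that $T_n^{\text{c}}$ is not Lipschitz in the per-slot erasures, but the fix you propose rests on a reduction that is not established and is in fact false in the generality the theorem requires. Your second paragraph asserts that, up to an additive $\bigO(1)$ pipeline offset, the innovative arrivals at the sink are a superposition of independent renewal processes with geometric inter-arrival times governed by the worst link of each min-cut path, so that $T_n^{\text{c}}=\bigO(1)+F(\gamma_1,\dots,\gamma_m)$ with independent $\gamma_i$. Already for a line network with two links of equal (worst) erasure probability this fails: the sink's cumulative innovative count differs from the bottleneck renewal count by fluctuations of order $\sqrt{t}$ --- this is precisely why the delay function is $\Theta(\sqrt{n})$ in that case (Pakzad et al.) --- so there is no $\bigO(1)$ offset to absorb. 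More fundamentally, the sink inter-arrival times are neither independent nor geometric in general; the Hsu--Burke/interchangeability results invoked in the paper give only steady-state \emph{distributional} statements, and only under a unique-worst-link hypothesis, whereas McDiarmid's inequality needs an exact pathwise representation of $T_n^{\text{c}}$ as a function of independent coordinates. Since the theorem is claimed for arbitrary erasure networks (including multiple worst links and general topologies, where min-cut edges do not even decompose into disjoint bottleneck paths), the step you yourself flag as ``the main obstacle'' is the whole proof, and the truncation/McDiarmid machinery built on top of it does not stand on its own.

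For comparison, the paper's argument avoids any structural claim about the arrival process: it applies Azuma--Hoeffding to the Doob martingale of $R_t$, the rank received by a \emph{fixed} time $t$, which is exactly $1$-Lipschitz in the $tL$ independent per-slot link states for any topology; it then inverts, using $\{T_n^{\text{c}}\geq t\}\subseteq\{R_t<n\}$ together with $\stexp R_t = Ct-\bigO(\sqrt{n})$ (imported from the expected-delay theorems) to choose $t_n^u=(n+n^{1/2+\delta'})/C$ and $t_n^l=(n-n^{1/2+\delta'})/C$, and finally recenters at $\stexp T_n^{\text{c}}$ using $\stexp T_n^{\text{c}}=n/C+\bigO(\sqrt{n})$. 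If you want to salvage your route, you would need to prove the renewal representation with quantitative control of the error for general networks, which is harder than the inversion trick; alternatively, redo your bounded-differences step on $R_t$ rather than on $T_n^{\text{c}}$, which is essentially the paper's proof.
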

Since $\stexp T_n^\text{c}$ grows linearly in $n$ and the deviations $\epsilon_n$ are sublinear, $T_n^\text{c}$ is tightly concentrated around its expectation for large $n$ with probability approaching one. Subsequent to our initial conference publications~\cite{ted2009,dikaliotis10delay}, further  results on delay for line networks have been obtained by~\cite{heidarzadeh12coding,heidarzadeh12fast}.

\section{Model}
\label{sec:Model}

We consider a  network $\mathcal{G}=(\mathcal{V},\mathcal{E})$ where $\mathcal{V}$ denotes the set of nodes and $\mathcal{E}=\mathcal{V}\times\mathcal{V}$ denotes the set of edges or links. We assume a discrete time model, where at each time step each node $v\in\mathcal{V}$ can transmit one packet on its outgoing edges. For every edge $e\in\mathcal{E}$ each transmission succeeds with probability $1-p_{e}$ or the transmitted packet gets erased with probability $p_{e}$; erasures across different edges and time steps are assumed to be independent. In our model, in case of a success the packet is assumed to be transmitted to the next node instantaneously, i.e. we ignore the transmission delay along the links. We assume that no edge fails with probability 1 (i.e.~$p_e<1$ for all $e\in\mathcal{E}$) since in such a case we can remove that edge from the network.

Within network $\mathcal{G}$ there is a single source $\mathcal{S}\in\mathcal{V}$ that wishes to transmit $n$ packets to a single destination $\mathcal{T}$ in $\mathcal{G}$. We investigate the expected time it takes for the $n$ packets to be received by $\mathcal{T}$ under two transmission schemes, network coding and routing. When network coding is employed, each packet transmitted by a node $v\in\mathcal{V}$ is a random linear combination of all previously received packets at the node $v$. The destination node $\mathcal{T}$ decodes once it has received $n$ linearly independent combinations of the initial packets. When routing is employed, the number of packets transmitted in each path is fixed ahead of the transmission, in such a way that the expected time for all $n$ packets to reach destination $\mathcal{T}$ is minimized.

All nodes in the network are assumed to have sufficiently large buffers to store the necessary number of packets to accommodate the transmission scheme. In the case of routing, we assume an automatic repeat request (ARQ) scheme with instantaneous feedback available on each hop.  Thus, a node can drop a packet that has been successfully received by the next node. For the case of coding, as explained in~\cite{jay2007}, information travels through the network in the form of innovative packets, where a packet is innovative for a node $v$ if it is not in the linear span of packets previously received by $v$. For simplicity of analysis, we assume that a node can store up to $n$ linearly independent packets; smaller buffers can be used in practice\footnote{By the results of~\cite{haeulper11optimality}, the buffer size needed for coding is no larger than that needed for routing.}.
%Each node needs to code, and therefore store, only the part of the information that has not already been received nodes in $\big\{m\in\mathcal{V}:(v,m)\in\mathcal{E}\big\}$. For
Feedback is not needed except when the destination $\mathcal{T}$ receives all the information and signals the end of transmission to all nodes. Our results hold without any restrictions on the number of packets $n$ or the number of edges in the network, and there is no requirement for the network to reach steady state.

\section{Line Networks}
\label{sec:Line_networks}

\begin{figure}
\begin{center}
\includegraphics[clip=true, trim=0mm 0mm 0mm 0mm, width=0.95\columnwidth]{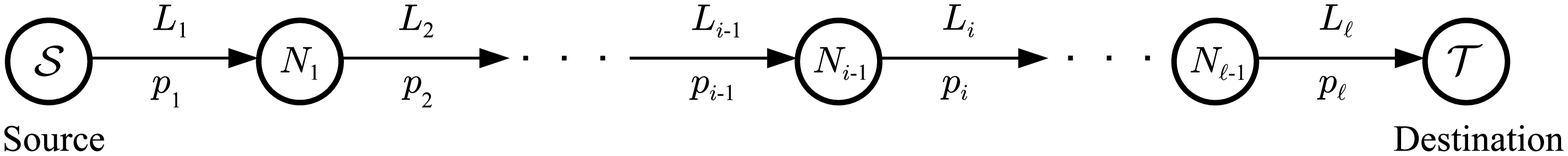}
\end{center}
\caption{Multi-hop line network}
\label{fig:Multi_hop_network}
\end{figure}

The line network under consideration is depicted in Figure~\ref{fig:Multi_hop_network}. The network consists of $\ell$ links $L_i$, $1\leq i\leq\ell$ and $\ell+1$ nodes $N_j$, $0\leq j\leq\ell$. Node $N_j$, $0\leq j\leq\ell-1$ is connected to node $N_{j+1}$ to its right through the erasure link $L_{j+1}$, where we assume that the source $\mathcal{S}$ and the destination $\mathcal{T}$ are also defined as nodes $N_0$ and $N_\ell$ respectively. The probability of transmission failure on each link $L_i$ is denoted by $p_{i}$.

For the case of a line network there is no difference between network coding and routing in the expected time it takes to transmit a fixed number of packets. Note that coding at each hop (network coding) is needed to achieve minimum delay in the absence of feedback, whereas coding only at the source is suboptimal in terms of throughput  and delay~\cite{lun04coding}. %Therefore without loss of generality for the case of a line network our analysis will assume that network coding is employed.

\begin{IEEEproof}[Proof of Theorem~\ref{thm:Expected_values_multi_hop_line_network}]
By using the interchangeability result on service station from Weber~\cite{weber92interchangeability}, we can interchange the position of any two links without affecting the departure process of node $N_{\ell-1}$ and therefore the delay function. Consequently, we can interchange the worst link in the queue (which is unique from the assumptions of Theorem~\ref{thm:Expected_values_multi_hop_line_network}) with the first link, and thus we will assume that the first link is the worst link ($p_2, p_3,\ldots, p_\ell < p_1 < 1$).

Note that in a line network, under coding  the  subspace spanned by all  packets received so far at a node $N_i$  contains that of its next hop node $N_{i+1}$, similarly to the case of routing where the set of packets received at a node $N_i$  is a superset of that of its next hop node $N_{i+1}$. Let the random variable $R_i^n, 1\leq i \leq \ell-1,$ denote the rank difference between node $N_i$ and node $N_{i+1}$, at the  moment packet $n$ arrives at $N_1$. This is exactly the number of packets present at node $N_i$ that are innovative for $N_{i+1}$ (which for brevity we refer to simply as innovative packets at node $N_i$ in this proof) at the random time when packet $n$ arrives at $N_1$. For any realization of erasures, the evolution of the number of innovative packets at each node is the same under coding and routing.

The time $T_n$ taken to send $n$ packets from the source node $\mathcal{S}$ to the destination $\mathcal{T}$ can be expressed as the sum of time $T_n^{(1)}$ required for all the $n$ packets to cross the first link and the time $\tau_n$ required for all the remaining innovative packets $R_1^n,\ldots,R_{\ell-1}^n$ at nodes $N_1,\ldots,N_{\ell-1}$ respectively to reach the destination node $\mathcal{T}$:
\begin{align*}
T_n = T_n^{(1)}+\tau_n.
\end{align*}
All the quantities in the equation above are random variables and we want to compute their expected values. Due to the linearity of the expectation
\begin{align}
\stexp T_n = \stexp T_n^{(1)} + \stexp \tau_n
\label{eqn:Exp_sum_of_times}
\end{align}
and by defining $X_j^{(1)}, 1 \leq j \leq n$ to be the time taken for packet $j$ to cross the first link, we get:
\begin{eqnarray}
\stexp T_n^{(1)} =
\sum_{j=1}^{n}\stexp X_j^{(1)} = \frac{n}{1-p_1}
\label{eqn_claim:Expected_values_1}
\end{eqnarray}
since $X_j^{(1)}, 1 \leq j \leq n,$ are all geometric random variables ($\Prob \left(X_j^{(1)}=k\right)=(1-p_1)\cdot p_1^{k-1}, k\geq 1$). Therefore combining equations (\ref{eqn:Exp_sum_of_times}) and (\ref{eqn_claim:Expected_values_1}) we get:
\begin{eqnarray}
\stexp T_n^{(1)} = \frac{n}{1-p_1} + \stexp \tau_n.
\label{eqn_claim:Expected_values_2}
\end{eqnarray}

Equations (\ref{delay_fun_1}), (\ref{eqn_claim:Expected_values_2}) give
\begin{equation*}
D(n,p_1,p_2,\ldots,p_\ell)=%\frac{n}{1-p_1}-\frac{n}{1-\displaystyle\max_{1\leq i\leq \ell} p_i}+
\stexp \tau_n
\label{eqn:Delay_function}
\end{equation*}
%
%and clearly the key quantity for calculating the delay function $D(n,p_1,p_2,\ldots,p_\ell)$
which is the expected time taken for all the remaining innovative packets at nodes $N_1,\ldots,N_{\ell-1}$ to reach the destination. For the simplest case of a two-hop network ($\ell=2$) we can derive recursive formulas for computing this expectation for each $n$. Table \ref{table:E(Rem_n)} has closed-form expressions for the delay function $D(n, p_1, p_2)$ for $n=1,\ldots,4$.
\begin{table*}
\centering
\label{table:E(Rem_n)}
\caption{The delay function $D(n, p_1, p_2)$ for different values of $n$}
\begin{tabular}{c c}
\hline\hline
$n$ & $D(n, p_1, p_2)$\\ \hline
1 & $\frac{1}{1-p_2}$ \\ [1.0ex]
2 & $\frac{2}{1-p_2}-\frac{1}{1-p_1 p_2}$  \\  [1.0ex]
3 & $\frac{1+p_2 \left(2-p_1 \left(6-p_1+(2-5 p_1) p_2+(1-3 (1-p_1) p_1) p_2^2\right)\right)}{(1-p_2)(1-p_1 p_2)^3}$  \\ [1.0ex]
4 & $\frac{\tiny\left\{\begin{array}{c}
1+p_2 (3-p_1 (11+4 p_1^4 p_2^4+p_2 (5+(5-p_2) p_2)+p_1^3 p_2 (1-p_2 (5+2 p_2 (5+3 p_2)))\\
-p_1 (4+p_2 (15+p_2 (21-(1-p_2) p_2)))+p_1^2 (1-p_2 (1-p_2 (31+p_2 (5+4 p_2))))))
\end{array}\right\}}
{(1-p_2)(1-p_1 p_2)^5}$ \\ [1.0ex]
\hline
\end{tabular}
\end{table*}
It is seen that as $n$ grows, the number of terms in the above expression increases rapidly, making these exact formulas impractical, and as expected for larger values of $\ell$ ($\geq 3$) the situation only worsens. Our subsequent analysis derives tight upper bounds on the delay function $D(n,p_1,p_2,\ldots,p_\ell)$ for any $\ell$ which do not depend on $n$.

The $(\ell-1)$-tuple $Y_n = (R_1^n,\ldots,R_{\ell-1}^n)$ representing the number of innovative packets remaining at nodes $N_1,\ldots,N_{\ell-1}$ at the moment packet $n$ arrives at node $N_{1}$ (including packet $n$) is a multidimensional Markov process with state space $E \subset \mathbb{N}\hspace{0.8mm}^{\ell-1}$ (the state space is a proper subset of $\mathbb{N}\hspace{0.8mm}^{\ell-1}$ since $Y_n$ can never take the values $(0,*,\ldots,*)$ where the $*$ represents any integer value). Using the coupling method~\cite{stock2} and an argument similar to the one given at Proposition 2 in~\cite{aggregate} it can be shown that $Y_n$ is a stochastically increasing function of $n$ (meaning that as $n$ increases there is a higher probability of having more innovative packets at nodes $N_1,\ldots,N_{\ell-1}$).

\begin{prop}
The Markov process $Y_n = (R_1^n,\ldots,R_{\ell-1}^n)$ is $\preceq_{\text{st}}$-increasing.
\label{prop:The_markov_chain}
\end{prop}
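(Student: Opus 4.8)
The plan is to represent $Y_n$ as a Markov chain driven by i.i.d.\ random maps and to prove the claim by induction on $n$, combining the pathwise monotonicity of those maps with the standard coupling characterisation of $\preceq_{\text{st}}$ (Strassen's theorem), exactly in the spirit of the coupling argument of~\cite{stock2} and Proposition~2 of~\cite{aggregate}. First I would make the one-step transition explicit. Between the arrival of packet $n$ at $N_1$ and that of packet $n+1$, a geometrically distributed number $G$ of time slots elapses (the time for packet $n+1$ to cross $L_1$), during which the downstream links $L_2,\ldots,L_\ell$ act slot by slot, independently of $L_1$; at the end the first coordinate of the state is increased by one, since packet $n+1$ is innovative for $N_2$. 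Let $\phi_S$ be the one-slot ``downstream'' update determined by the set $S\subseteq\{2,\ldots,\ell\}$ of links that succeed in that slot: evaluated at the state at the start of the slot, for $2\le i\le\ell-1$ link $L_i$ shifts one innovative packet from $N_{i-1}$ to $N_i$ provided $i\in S$ and $R_{i-1}\ge 1$ (so $R_{i-1}$ decreases and $R_i$ increases by one), and $L_\ell$ removes one innovative packet from $N_{\ell-1}$ provided $\ell\in S$ and $R_{\ell-1}\ge 1$. If $\Psi$ denotes the composition of the $G$ successive maps $\phi_S$ followed by the increment of the first coordinate, then $Y_{n+1}=\Psi(Y_n)$, where the pair $(G,\vec S)$ of slot count and per-slot link realisations has a fixed law, is independent of $Y_n$, and is i.i.d.\ across $n$. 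This description is the same under coding and routing, as observed above.

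The key lemma is that each $\phi_S$, and hence each realisation of $\Psi$, is monotone for the coordinatewise order on $\mathbb{N}^{\ell-1}$: if $R\le R'$ then $\phi_S(R)\le\phi_S(R')$. Checking this coordinate by coordinate, the only delicate case is a coordinate $i$ with $i+1\in S$, $R_i=0$ and $R'_i\ge 1$, where the larger configuration loses a packet across $L_{i+1}$ that the smaller one does not; but then $R'_i-R_i=R'_i\ge 1$ exactly compensates that loss, while the inflow across $L_i$ (present in $R'$ whenever it is present in $R$, since $R'_{i-1}\ge R_{i-1}$) can only favour the larger configuration. All remaining cases are immediate, and since a composition of monotone maps is monotone and $R\mapsto R+(1,0,\ldots,0)$ is monotone, every realisation of $\Psi$ is monotone.

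Now the induction. The base case is $Y_1=(1,0,\ldots,0)$ deterministically, and $Y_2=\Psi(Y_1)$; starting from a single innovative packet the total can only stay at one or drop to zero, so the downstream part of $\Psi$ maps $(1,0,\ldots,0)$ to either the zero vector or a standard basis vector, whence $Y_2\ge(1,0,\ldots,0)=Y_1$ almost surely and $Y_1\preceq_{\text{st}}Y_2$. For the inductive step, assume $Y_{n-1}\preceq_{\text{st}}Y_n$. By Strassen's theorem there is a coupling $(\hat U,\hat V)$ with $\hat U\stackrel{d}{=}Y_{n-1}$, $\hat V\stackrel{d}{=}Y_n$ and $\hat U\le\hat V$ almost surely; draw $(G,\vec S)$ with the correct law, independent of $(\hat U,\hat V)$, let $\Psi$ be the associated map, and set $A=\Psi(\hat U)$, $B=\Psi(\hat V)$. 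Monotonicity of $\Psi$ gives $A\le B$ almost surely, while independence of $(G,\vec S)$ from $\hat U$ and from $\hat V$ together with the recursion of the first paragraph gives $A\stackrel{d}{=}Y_n$ and $B\stackrel{d}{=}Y_{n+1}$. Thus $(A,B)$ is a monotone coupling of $Y_n$ and $Y_{n+1}$, so $Y_n\preceq_{\text{st}}Y_{n+1}$, which completes the induction.

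I expect the one real obstacle to be the key lemma: pinning down the correct one-slot update $\phi_S$ — in particular the simultaneity of all link transfers within a slot and the boundary behaviour at $N_1$ and at $N_{\ell-1}$ — and then verifying its monotonicity through the ``empty upstream queue'' case above. Everything after that (composition preserving monotonicity, the base case, and the Strassen coupling in the inductive step) is routine.
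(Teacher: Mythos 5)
Your proof is correct and follows essentially the same route as the paper: a shared-randomness coupling under which the queue dynamics are pathwise monotone, combined with the fact that the initial state $Y_1=(1,0,\ldots,0)$ is minimal. The paper organizes this as a comparison of two coupled copies of the chain started from $\delta_\alpha$ and from the law of $Y_{n-k}$, delegating the pathwise monotonicity to the proposition it cites, whereas you make the monotone random-map lemma explicit and run an induction using Strassen's theorem; these are differences of packaging, not of substance.
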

\begin{proof}
Given in Appendix~\ref{appen:Proof_of_stochastically_increasing} along with the necessary definitions.
\end{proof}

A direct result of Proposition~\ref{prop:The_markov_chain} is that the expected time taken $\stexp \tau_n$ for the remaining innovative packets at nodes $N_1,\ldots,N_{\ell-1}$ to reach the destination is a non-decreasing function of  $n$:
\begin{equation}
\stexp \tau_n \leq \stexp \tau_{n+1} \leq \displaystyle\lim_{n\rightarrow\infty} \stexp \tau_n
\label{eqn:inequality_of_expectations}
\end{equation}
where the second inequality is meaningful when the limit exists.

Innovative packets travelling in the network from node $N_1$ to the destination node $\mathcal{T}$ can be viewed as customers travelling through a network of service stations in tandem. Indeed, each innovative packet (customer) arrives at the first station (node $N_1$) with a geometric arrival process and the transmission (service) time is also geometrically distributed. Once an innovative packet has been transmitted (serviced) it leaves the current node (station) and arrives at the next node (station) waiting for its next transmission (service).

It is helpful to assume the first link to be the worst one in order to use the results of Hsu and Burke in~\cite{hsu76behavior}. The authors proved that a tandem network with geometrically distributed service times and a geometric input process, reaches steady state as long as the input process is slower than any of the service times. Our line network is depicted in Figure~\ref{fig:Multi_hop_network} and the input process (of innovative packets) is the geometric arrival process at node $N_1$ from the source $\mathcal{S}$. Since $p_2, p_3,\ldots, p_\ell < p_1$ the arrival process is slower than any service process (transmission of the innovative packet to the next hop) and therefore the network in Figure~\ref{fig:Multi_hop_network} reaches steady state.

Sending an arbitrarily large number of packets ($n\rightarrow \infty$) makes the problem of estimating  $\displaystyle\lim_{n\rightarrow\infty}\stexp \tau_n$\footnote{If the network was not reaching a steady state the above limit would diverge.} the same as calculating the expected time taken to send all the remaining innovative packets at nodes $N_1,\ldots,N_{\ell-1}$ to reach the destination $\mathcal{T}$ at steady state. This is exactly the expected end-to-end delay for a single customer in a line network that has reached equilibrium. This quantity has been calculated in~\cite{daduna01queueing} (page 67, Theorem 4.10) and is equal to
\begin{equation}
\lim_{n\rightarrow\infty} \stexp \tau_n =\sum_{i=2}^{\ell}\frac{p_1}{p_1-p_i}.
\label{eqn:the_expression_of_the_limit}
\end{equation}
Combining equations (\ref{eqn:inequality_of_expectations}) and (\ref{eqn:the_expression_of_the_limit}) and changing $p_1$ to $p_m:=\max p_i < 1$
concludes the proof of Theorem~\ref{thm:Expected_values_multi_hop_line_network}.
\end{IEEEproof}

\section{$k$-parallel Path Network}
\label{sec:The_two_parallel_paths}

We define the \emph{$k$-parallel path network} as the network depicted in Figure~\ref{fig:parallel_multihop_network}. This network consists of $k$ parallel multi-hop line networks  (paths) with $k\ell$ nodes and $k\ell$ links, with $\ell$ links in each path (our results are readily extended to networks with different number of links in each path). Each node $N_{i(j-1)}$ is connected to the node $N_{ij}$ on its right by a link $L_{ij}$, for $i\in\{1,\ldots,k\}$ and $1\leq j\leq \ell$ where for consistency we assume that the source $\mathcal{S}$ and the destination $\mathcal{T}$ are defined as nodes $N_{i0}$ and $N_{i\ell}$, $i\in\{1,\ldots,k\}$, respectively.

\begin{figure}
\begin{center}
\includegraphics[width=1.0\columnwidth]{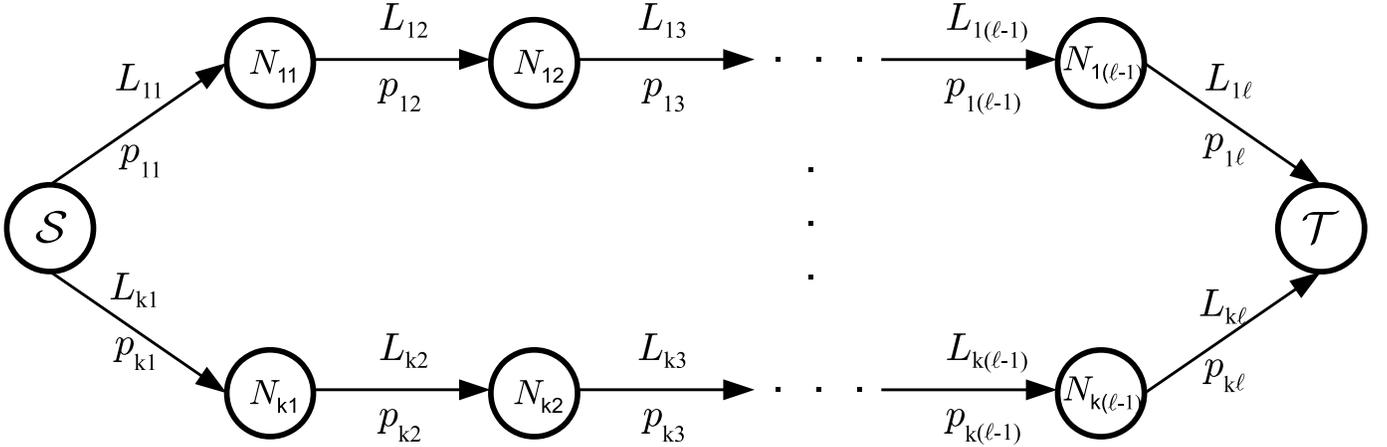}
\end{center}
\caption{Two parallel multi-hop line networks having links with different erasure probabilities}
\label{fig:parallel_multihop_network}
\end{figure}

%%Similar to Section~\ref{sec:Line_networks} and Theorem~\ref{thm:Expected_values_multi_hop_line_network} we will assume that all $k$ line networks have a single worst link.
%As shown in~\cite{weber92interchangeability} and used in Section~\ref{sec:Line_networks}, the results are unchanged if we interchange the positions of links such that the worst links are adjacent to the source. %, \textit{i.e.}, $p_{i1}=\max\{p_{i1},\ldots,p_{i\ell}\}$ and $p_{ij}<p_{i1}$ for $i\in\{1,\ldots,k\}$ and $2\leq j\leq \ell$.
%We compare the expected time to send $n$ packets from source $\mathcal{S}$ to destination $\mathcal{T}$ by the coding and routing schemes.

%We want to compare the expected time taken to send the $n$ packets through the network from source $\mathcal{S}$ to the destination $\mathcal{T}$ using two different transmission schemes. On the first scheme all nodes perform network coding and transmit random linear combinations of all previously received packets. Source $\mathcal{S}$ in particular combines all the packets together and keeps sending to all $N_{11},\ldots,N_{k1}$ random linear combinations of the initial $n$ packets (for a large field the random combinations will be linearly independent with high probability). On the second transmission scheme all nodes perform routing and source $\mathcal{S}$ sends a fraction of the $n$ packets on each of the paths. The destination $\mathcal{T}$ will decode the initial information when it receives all the packets from all lines in the network.

For the case of routing with retransmissions, the source $\mathcal{S}$ divides the $n$ packets between the different paths so that the time taken to send all the packets is minimized in expectation. This is accomplished by having the number of packets that cross each path to be proportional to the capacity of the path. Indeed, if the source $\mathcal{S}$ sends $n_1,\ldots,n_k$ number of packets though each path then according to Theorem~\ref{thm:Expected_values_multi_hop_line_network} the expected time to send these packets is $\frac{n_i}{1-p_{1i}}+D_{n_i}$, $i\in\{1,\ldots,k\}$, where $D_{n_i}$ are bounded delay functions. The values $n_i$ are chosen so that the linear terms of the above expected values are equal, \emph{i.e.} $\frac{n_1}{1-p_{11}}=\ldots=\frac{n_k}{1-p_{k1}}$ and $n_1+\ldots+n_k=n$. Therefore the choice of
\begin{align}
n_i = \frac{n(1-p_{i1})}{k-\displaystyle\sum_{i=1}^k p_{i1}},\ i\in\{1,\ldots,k\}
\label{eqn:minimize_routing_time}
\end{align}
minimizes the expected time to send the $n$ packets. Therefore from now on, when routing is performed, source $\mathcal{S}$ is assumed to send $n(1-p_{i1})\slash(k-\sum_{i=1}^k p_{i1})$ over each path $i$.\footnote{To simplify the notation we will assume that all numbers $n(1-p_{i1})\slash(k-\sum_{i=1}^k p_{i1})$ are integers. Our results extend to the case that those numbers are not integers by rounding them to the closest integer.}

\subsection{Coding Strategy}
\begin{figure}[ht]
\begin{center}
\includegraphics[width=0.6\columnwidth]{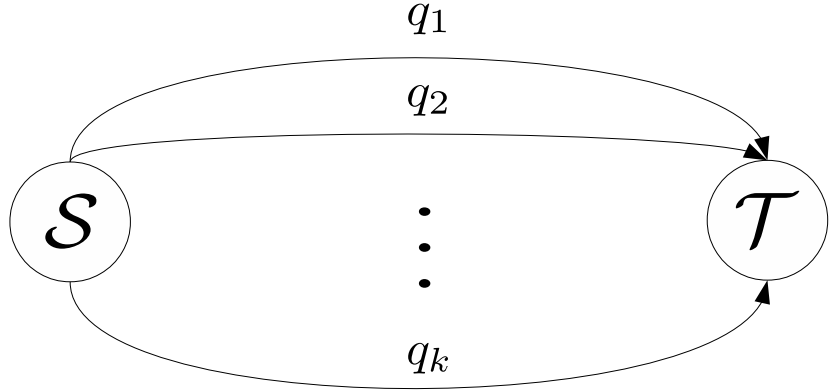}
\end{center}
\caption{A network of $k$ parallel erasure links with erasure probabilities $q_1,\ldots,q_k$ connecting source $\mathcal{S}$ and destination $\mathcal{T}$.}
\label{fig:parallelqueues}
\end{figure}

Before we analyze the expected time $\stexp T_{n}^\text{c}$ taken to send $n$ packets through the network in Figure~\ref{fig:parallel_multihop_network} using coding (where the $c$ superscript stands for coding), we prove the following proposition that holds for the simplified network of $k$ parallel erasure links connecting the source to the destination as in Figure~\ref{fig:parallelqueues}.
\begin{prop}
The expected time $\stexp \hat{T}_n^\text{c}$ taken to send by coding $n$ packets from source $\mathcal{S}$ to destination $\mathcal{T}$ through $k$ parallel erasure links with erasure probabilities  $q_1,\ldots,q_{k}$ respectively is
\begin{align*}
\stexp \hat{T}_n^\text{c} = \frac{n}{k-\sum_{i=1}^{k} q_i}+B_{n}
\end{align*}
where $B_{n}$ is a bounded term. This relation holds regardless of any statistical dependence between the erasure processes on different links.
\label{prop:two_parallel_links_coding}
\end{prop}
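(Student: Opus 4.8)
The plan is to recast $\hat T_n^{\text{c}}$ as the first-passage time of a random walk with i.i.d.\ increments and to read off $B_n$ as a scaled expected overshoot. Since the network of Figure~\ref{fig:parallelqueues} is a single hop, at every time step the source puts a fresh random linear combination of the $n$ file packets on each of the $k$ links; by the paper's standing large-field assumption, every successful reception at $\mathcal{T}$ is innovative (with probability essentially $1$) until $\mathcal{T}$ holds all $n$ packets. Hence, writing $Z_s := \sum_{i=1}^{k}\mathbf{1}\{\text{link } i \text{ delivers at step } s\}$ for the number of packets $\mathcal{T}$ receives at time $s$ and $N(t) := \sum_{s=1}^{t} Z_s$, we have
\[
\hat T_n^{\text{c}} \;=\; T \;:=\; \min\{\,t\ge 1 : N(t)\ge n\,\}.
\]
The per-step success vectors are i.i.d.\ in $s$, so the $Z_s$ are i.i.d.\ with $\mu := \stexp Z_1 = \sum_{i=1}^{k}(1-q_i) = k-\sum_{i=1}^{k} q_i > 0$; note this uses only independence across time and is completely indifferent to the joint law of the $k$ indicators within a step, which is exactly the source of the claimed robustness to dependence across links.

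Next I would check the hypotheses for Wald's equation. The event $\{T\ge s\}=\{N(s-1)<n\}$ is determined by $Z_1,\dots,Z_{s-1}$, so $T$ is a stopping time for the natural filtration, and $\stexp T<\infty$ because $N(t)$ dominates the number of successes on link $1$ alone, whence $T$ is stochastically dominated by a negative binomial variable of mean $n/(1-q_1)$. Wald's equation then yields $\stexp N(T)=\mu\,\stexp T$, so
\[
\stexp \hat T_n^{\text{c}} \;=\; \frac{\stexp N(T)}{\mu} \;=\; \frac{n}{\,k-\sum_{i=1}^{k} q_i\,} \;+\; \frac{\stexp\!\big[N(T)-n\big]}{\,k-\sum_{i=1}^{k} q_i\,},
\]
which identifies $B_n := \stexp[N(T)-n]\,/\,(k-\sum_{i=1}^{k} q_i)$.

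It remains to bound the overshoot $N(T)-n$ uniformly in $n$. By minimality of $T$ we have $N(T-1)\le n-1$, and $Z_T\le k$ since at most $k$ links can deliver in one step, so $0 \le N(T)-n = N(T-1)+Z_T-n \le k-1$ deterministically. Therefore $0 \le B_n \le (k-1)/(k-\sum_{i=1}^{k} q_i)$, a bound free of $n$, completing the proof. I do not anticipate a real obstacle: the one delicate point is the first displayed identity — that coding turns each successful reception into an innovative one until decoding — which is an instance of the paper's large-field convention, and the rest is the textbook renewal/overshoot computation, the deterministic bound $k-1$ on the overshoot being precisely what forces $B_n\in\bigO(1)$.
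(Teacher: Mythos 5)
Your proof is correct, but it follows a genuinely different route from the paper. The paper sets up the linear recurrence $(1-A_0)\stexp \hat T_n = A_1\stexp \hat T_{n-1}+\dots+A_k\stexp \hat T_{n-k}+1$ in the file size $n$, extracts the linear particular solution $n/(k-\sum_i q_i)$, and then bounds $B_n$ by analyzing the characteristic polynomial: $x=1$ is shown to be a simple root and all other roots lie strictly inside the unit disk, so the homogeneous part is a bounded (in fact exponentially decaying, oscillatory) sequence. You instead view the destination's rank as a renewal-type count $N(t)=\sum_{s\le t}Z_s$ with $Z_s$ i.i.d.\ across time (only the mean $\mu=k-\sum_i q_i$ of the within-step success count matters, which is exactly where the robustness to cross-link dependence enters in both arguments — the paper's analogue is $D=1/(A_1+2A_2+\dots+kA_k)$), identify $\hat T_n^{\text{c}}$ as the first-passage time past level $n$, and apply Wald's identity plus the deterministic overshoot bound $0\le N(T)-n\le k-1$. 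Your checks are the right ones: the stopping-time property, $\stexp T<\infty$ via domination by a negative binomial (using $q_i<1$ from the model), and the large-field convention guaranteeing every successful reception is innovative until decoding, which is exactly the assumption the paper's recursion also relies on. What each approach buys: yours is shorter, more elementary, and yields the explicit $n$-free bound $0\le B_n\le (k-1)/(k-\sum_i q_i)$; the paper's recursion analysis gives finer structure, namely that $B_n$ converges geometrically fast to a constant, though its stated bound $\sum_j(|F_j|+|G_j|)$ is less explicit. Either argument suffices for the proposition and for its later use in Theorems~\ref{thm:coding} and~\ref{thm:super_general}.
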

\begin{IEEEproof}
We define $A_0,A_1,\ldots,A_k$ to be the probabilities of having $0,1,\ldots,k$ links  succeed at a specific time instance. The recursive formula for $\stexp \hat{T}^c_n$ is:
\begin{eqnarray}
\stexp \hat{T}_n &=& A_0\cdot(\stexp\hat{T}^c_n+1) + A_1\cdot(\stexp \hat{T}^c_{n-1}+1)+\ldots+ A_k\cdot(\stexp\hat{T}^c_{n-k}+1)\notag\\
\Leftrightarrow (1-A_0)\cdot\stexp \hat{T}_n &=& A_1\cdot\stexp \hat{T}_{n-1}+\ldots+ A_k\cdot\stexp \hat{T}_{n-k}+1
\label{eqn:recurrence_general}
\end{eqnarray}
where $\stexp \hat{T}_m=0$ for $m\leq 0$ and the last term in (\ref{eqn:recurrence_general}) is obtained from the relation $\sum_{i=0}^{k}A_i=1$.

The general solution of (\ref{eqn:recurrence_general}) is given by the sum of a homogeneous solution and a special solution. A special solution for the non-homogeneous recursive equation (\ref{eqn:recurrence_general}) is linear $D\cdot n$ where after some algebra $D =1\slash(A_1+2 A_2+\ldots+k A_k)$, which is the inverse of the expected number of links succeeding in a given instant.  Therefore $D=1\slash(k-\sum_{i=1}^{k}q_i)$, independent of any statistical dependence between erasures on different links.

The homogeneous solution of linear recurrence relation with constant coefficients (\ref{eqn:recurrence_general}) can be expressed in terms of the roots of the characteristic equation $p(x)=(1-A_0) x^k-A_1 x^{k-1}-\ldots-A_k$~\cite[Section $3.2$]{recursion:bk}. We will prove that the characteristic equation has $x = 1$ as a root and all the other roots have absolute value less than $1$. Indeed since $A_0+\ldots+A_k=1\Rightarrow(1-A_0)-A_1-\ldots-A_k=0$, therefore $x=1$ is a root of $p(x)$; now assume that $x=1$ is a multiple root of $p(x)$. Then
\begin{eqnarray*}
p'(1)=0&\Leftrightarrow&k(1-A_0)-(k-1) A_1-\ldots-A_{k-1}=0\\
&\Leftrightarrow& k(1-A_0)-(k-1) A_1-\ldots-(k-(k-1)) A_{k-1}=0\\
&\Leftrightarrow& k=k(A_0+A_1+\ldots+A_{k-1})-A_1-2 A_2-\ldots-(k-1)A_{k-1}\\
&\Leftrightarrow& k=k(1-A_k)-A_1-2 A_2-\ldots-(k-1) A_{k-1}\\
&\Leftrightarrow& k=k-(A_1+2 A_2+\ldots+k A_k)\\
&\Leftrightarrow& 0=A_1+2 A_2+\ldots+k A_k\\
&\Leftrightarrow& k=p_1+p_2+\ldots+p_k
\end{eqnarray*}
This implies that all links fail with probability $1$, which contradicts the assumption from Section~\ref{sec:Model} that no link fails with probability $1$. Assume now that characteristic equation $p(x)$ has a complex root $x=r\cdot e^{i\cdot\phi}$ where $|x|>1$ or equivalently $r>1$. Define $f(x)=x^k$ and $g(x)=A_0x^k+A_1 x^{k-1}+\ldots+A_k$ then $p(x)=0$ is equivalent to $f(x)=g(x)$ but this last equality cannot hold since $|g(x)|<|f(x)|$ for $|x|>1$. Indeed $|g(x)|\leq A_0|x|^k+A_1|x|^{k-1}+\ldots+A_k=A_0 r^k+A_1 r^{k-1}+ \ldots+A_k<(A_0+A_1+\ldots+A_k) r^k=r^k =|f(x)|$.

Let $R=\big\{r:p(r)=0\big\}$ be the set of all roots of $p(x)$. The general solution for recursion formula (\ref{eqn:recurrence_general}) is
\begin{align*}
\stexp \hat{T}^c_n = \frac{n}{k-\sum_{i=1}^{k}q_i} + \sum_{r_j\in R} F_j r_j^n cos(n\cdot\phi_j) + \sum_{r_j\in R} G_j r_j^n sin(n\cdot\phi_j).
\end{align*}
We can set
\begin{align}
B_n = \sum_{r_j\in R} F_j r_j^n cos(n\cdot\phi_j) + \sum_{r_j\in R} G_j r_j^n sin(n\cdot\phi_j)
\end{align}
and since $|B_n|\leq\displaystyle\mathop{\sum}_{r_j\in R}|F_j|+|G_j|$ this concludes our proof.
\end{IEEEproof}

Now we are ready to prove the following theorem for the $k$-parallel path network shown in Figure~\ref{fig:parallel_multihop_network}.
\begin{IEEEproof}[Proof of Theorem~\ref{thm:coding}]
As discussed in the proof of Theorem~\ref{thm:Expected_values_multi_hop_line_network}, by using the results of~\cite{weber92interchangeability} we can interchange the position of the first link of each path with one of the worst links of the path without affecting the arrival process at the receiver $\mathcal{T}$. Therefore without loss of generality we will assume that the first link in each path is one of the worst links in the path. Also, as in the proof of Theorem~\ref{thm:Expected_values_multi_hop_line_network}, for brevity we refer to packets present at a node $N_i$ that are innovative for the next hop node $N_{i+1}$ as innovative packets at node $N_i$.

The time $T_{n}^\text{c}$ taken to send $n$ packets from source $\mathcal{S}$ to the destination $\mathcal{T}$ in Figure~\ref{fig:parallel_multihop_network} can be expressed as the sum of the time $\hat{T}_{n}^\text{c}$ required for all $n$ packets to reach one of nodes $N_{11},\ldots,N_{k1}$ and the remaining time $\tilde{T}_{n}^\text{c}$ required for all innovative packets remaining in the network to reach the destination $\mathcal{T}$, \emph{i.e.}
\begin{align}
T_{n}^\text{c} = \hat{T}_{n}^\text{c}+\tilde{T}_{n}^\text{c}.
\label{eqn:proof_sum_of_coding_times}
\end{align}
As in the proof of Theorem~\ref{thm:Expected_values_multi_hop_line_network} all quantities in equation (\ref{eqn:proof_sum_of_coding_times}) are random variables and we want to compute their expected values. Due to linearity of expectation,
\begin{align}
\stexp T_{n}^\text{c} = \stexp\hat{T}_{n}^\text{c}+\stexp\tilde{T}_{n}^\text{c},
\label{eqn:proof_sum_of_expected_coding_times}
\end{align}
where by Proposition~\ref{prop:two_parallel_links_coding},
\begin{align}
\stexp\hat{T}_{n}^\text{c}=\frac{n}{k-\displaystyle\sum_{i=1}^k p_{i1}}+B_{n}
\label{eqn:proof_expected_first_term}
\end{align}
where $B_n$ is bounded. This holds regardless of any statistical dependence between the erasure processes on the first link of each path, and the remainder of the proof is unaffected by any statistical dependence between erasure processes on different paths.

The time $\stexp\tilde{T}_n^\text{c}$ required to send all the remaining innovative packets at nodes $N_{ij}$ ($i\in\{1,\ldots,k\}$, $j\in\{2,\ldots,\ell-1\}$) to the destination is less than the expected time $\stexp\tilde{\tau}$ it would have taken if all the remaining innovative packets were returned back to the source $\mathcal{S}$ and sent to the destination $\mathcal{T}$ using only the first path. Let $R_{ij}$ denote the number of remaining innovative packets at node $N_{ij}$ at the moment the $n^\text{th}$ packet has arrived at one of the $k$ nodes $N_{11},\ldots,N_{k1}$. Then the total number of remaining innovative packets $R$ is $R=\displaystyle\sum_{i=1}^{k}\sum_{j=1}^{\ell-1} R_{ij}$ and the expected time $\stexp\tilde{\tau}$ is upper bounded by
\begin{align}
\stexp\tilde{\tau}=\stexp\left[\stexp\left(\tilde{\tau}|R\right)\right]\leq \sum_{j=1}^{\ell}\frac{\stexp R}{1-p_{1j}}.
\label{eqn:stupid_remaining_packets}
\end{align}
where $\stexp R\slash (1-p_{1j})$ is the expected time taken for $R$ packets to cross the $j^{\text{th}}$ hop in the first path.

%the above inequality becomes
%%
%\begin{align}
%\stexp\tilde{\tau}\leq \frac{\displaystyle\sum_{i=1}^k\sum_{j=1}^{\ell-1}\frac{p_{i1}(1-p_{i(j+1)})}{p_{i1}-p_{i(j+1)}}}{1-p_{11}}+ \sum_{j=2}^{\ell}\frac{p_{11}}{p_{11}-p_{1j}}.
%\label{eqn:proof_upper_limit_2}
%\end{align}
%
By combining the fact that $\stexp\tilde{T}_{n}^\text{c}\leq \stexp\tilde{\tau}$ with equations (\ref{eqn:proof_sum_of_expected_coding_times}) and (\ref{eqn:proof_expected_first_term})  we get
\begin{align}
\stexp T_{n}^\text{c}=\frac{n}{k-\displaystyle\sum_{i=1}^k p_{i1}}+D_{n}^\text{c}
\label{eqn:final_proof}
\end{align}
where $D^\text{c}_{n}$ is upper bounded by
\begin{align*}
D^\text{c}_{n}\leq B_{n}+\sum_{j=1}^{\ell}\frac{\stexp R}{1-p_{1j}}. %\frac{\displaystyle\sum_{i=1}^k\sum_{j=2}^{\ell}\frac{p_{i1}(1-p_{ij})}{p_{i1}-p_{ij}}}{1-p_{11}}+ \sum_{j=2}^{\ell}\frac{p_{11}}{p_{11}-p_{1j}}.
\end{align*}
By Proposition~\ref{prop:The_markov_chain}, the number of remaining innovative packets at each node of each path is  a stochastically increasing random variable with respect to $n$. Therefore, the expected number of remaining packets is an increasing function of $n$. Consequently one can find an upper bound on $\stexp R_{ij}$ by examining the line network in steady state, or equivalently, as $n\rightarrow+\infty$. For the case where the first link of each path is the unique worst link of the path, as shown in~\cite{hsu76behavior}, each line network will reach steady state and consequently $E(R)\in\bigO(1)$. If there are multiple worst links in at least one path, then  $\stexp R\in\bigO(\sqrt{n})$. This can be seen by interchanging the positions of links such that the worst links of each path are positioned at the start. By the results of~\cite{pakzad05coding}, the number of innovative packets remaining at nodes positioned between two such worst links is $\bigO(\sqrt{n})$. By the results of~\cite{hsu76behavior}, the number of innovative packets remaining at other intermediate network nodes is $\bigO(1)$.
%In the case where there is a single worst link on each path, in Appendix~\ref{appen:Upper_bound_on_remaining_packets} it is proved that $\stexp R_{ij} \leq \frac{p_{i1}(1-p_{i(j+1)})}{p_{i1}-p_{i(j+1)}}$. Therefore
%$\stexp R \leq \sum_{i=1}^k\sum_{j=1}^{\ell-1}\frac{p_{i1}(1-p_{i(j+1)})}{p_{i1}-p_{i(j+1)}}$ which is $\bigO(1)$.
%If there are multiple worst links in at least one path, then  $\stexp R\in\bigO(\sqrt{n})$. This can be seen by interchanging the positions of links such that the worst links of each path are positioned at the start. By the results of~\cite{pakzad05coding}, the number of innovative packets remaining at nodes positioned between two such worst links is $\bigO(\sqrt{n})$. By the results of~\cite{hsu76behavior}, the number of innovative packets remaining at other intermediate network nodes is $\bigO(1)$.

Substituting $p_{1i}$ with $\displaystyle\max_{1\leq j\leq\ell}p_{ij}$ for $i\in\{1,\ldots,k\}$ in equation (\ref{eqn:final_proof}) concludes the proof.\end{IEEEproof}

\subsection{Routing Strategy}

In this section  we analyze the expected time $\stexp T_{n}^\text{r}$ taken to send $n$ packets through the parallel path network in Figure~\ref{fig:parallel_multihop_network} using routing (where the r superscript stands for routing). We first prove the following two propositions.
\begin{prop}
For $a,b,c_1,c_2\in\mathbb{N^+}$ with $a<b$ the sum $\displaystyle\sum_{m=a}^b\frac{c_1-m}{c_2+m}$ is equal to:
\begin{eqnarray}
\sum_{m=a}^b\frac{c_1-m}{c_2+m} = a-b-1+(c_1+c_2)\left(H_{c_2+b}-H_{c_2+a-1}\right)
\label{eqn:prop:sum_of_nminusk_over_nplusk}
\end{eqnarray}
where $H_n$ is the $n^\text{th}$ Harmonic number, \textit{i.e.} $H_n=\displaystyle\sum_{i=1}^n\frac{1}{i}$.
\label{prop:sum_of_nminusk_over_nplusk}
\end{prop}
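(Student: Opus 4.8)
The plan is to prove the identity by splitting the summand and recognizing telescoping/Harmonic sums. First I would write $\frac{c_1-m}{c_2+m} = \frac{(c_1+c_2)-(c_2+m)}{c_2+m} = \frac{c_1+c_2}{c_2+m} - 1$, so that the sum becomes
\begin{align*}
\sum_{m=a}^b\frac{c_1-m}{c_2+m} = (c_1+c_2)\sum_{m=a}^b\frac{1}{c_2+m} - \sum_{m=a}^b 1.
\end{align*}
The second sum is just $b-a+1$, which accounts for the $a-b-1$ term in the claimed expression (note $-(b-a+1) = a-b-1$).

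Next I would handle the first sum by re-indexing: $\sum_{m=a}^b \frac{1}{c_2+m} = \sum_{j=c_2+a}^{c_2+b}\frac{1}{j}$. By definition of the Harmonic numbers, $\sum_{j=c_2+a}^{c_2+b}\frac{1}{j} = H_{c_2+b} - H_{c_2+a-1}$, since $H_{c_2+b} = \sum_{j=1}^{c_2+b}\frac{1}{j}$ and subtracting $H_{c_2+a-1} = \sum_{j=1}^{c_2+a-1}\frac{1}{j}$ removes exactly the initial segment $j=1,\ldots,c_2+a-1$. Multiplying by $c_1+c_2$ and combining with the previous step yields
\begin{align*}
\sum_{m=a}^b\frac{c_1-m}{c_2+m} = (c_1+c_2)\left(H_{c_2+b}-H_{c_2+a-1}\right) + a - b - 1,
\end{align*}
which is precisely \eqref{eqn:prop:sum_of_nminusk_over_nplusk}.

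There is essentially no obstacle here; the only point requiring a sliver of care is the index bookkeeping in the Harmonic-number difference (making sure the lower limit is $c_2+a-1$ and not $c_2+a$), and the sign convention so that $-(b-a+1)$ matches the stated $a-b-1$. The hypothesis $a<b$ with $a,b,c_1,c_2\in\mathbb{N}^+$ guarantees all denominators $c_2+m$ are positive and the sum is nonempty, so no degenerate cases arise.
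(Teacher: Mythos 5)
Your proof is correct and follows essentially the same route as the paper: both reduce the sum to $(c_1+c_2)\sum_{m=a}^b\frac{1}{c_2+m}$ minus the count of terms and then identify the reciprocal sum as $H_{c_2+b}-H_{c_2+a-1}$. The only difference is cosmetic: you do the reduction in one step via $\frac{c_1-m}{c_2+m}=\frac{c_1+c_2}{c_2+m}-1$, whereas the paper splits off $\sum_{m=a}^b\frac{m}{c_2+m}$ and evaluates it separately, arriving at the identical result.
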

\begin{IEEEproof}
\begin{align}
\sum_{m=a}^b\frac{c_1-m}{c_2+m}=c_1\sum_{m=a}^b\frac{1}{c_2+m}- \sum_{m=a}^b\frac{m}{c_2+m}=c_1\left(H_{c_2+b}-H_{c_2+a-1}\right)-\sum_{m=a}^b\frac{m}{c_2+m}
\label{eqn:prop:first}
\end{align}
Where $\displaystyle\sum_{m=a}^b\frac{m}{c_2+m}$ can be evaluated as follows:
\begin{align}
b-a+1 &= \sum_{m=a}^{b}\frac{c_2+m}{c_2+m}\notag\\
\Leftrightarrow b-a+1 &= c_2\sum_{m=a}^{b}\frac{1}{c_2+m}+\sum_{m=a}^{b}\frac{m}{c_2+m}\notag\\
\Leftrightarrow \sum_{m=a}^{b}\frac{m}{c_2+m}&=b-a+1-c_2\left(H_{c_2+b}-H_{c_2+a-1}\right)
\label{eqn:prop:second}
\end{align}
So from equations (\ref{eqn:prop:first}) and (\ref{eqn:prop:second}) we conclude that:
\begin{align*}
\sum_{m=a}^b\frac{c_1-m}{c_2+m}=a-b-1+(c_1+c_2)\left(H_{c_2+b}-H_{c_2+a-1}\right)
\end{align*}
\end{IEEEproof}

Consider the network of Figure~\ref{fig:parallelqueues} with $k=2$ parallel erasure links. As shown in equation (\ref{eqn:minimize_routing_time}) in order to minimize the expected completed time the routing strategy sends $\frac{n(1-q_1)}{2-q_1-q_2}$ packets over the first link and $\frac{n(1-q_2)}{2-q_1-q_2}$ packets over the second link. Proposition~\ref{prop:two_parallel_links_routing} examines this expected transmission time under routing.
\begin{prop}
The expected time $\stexp \hat{T}_{n}^\text{r}$ taken to send by routing $n$ packets from the source to the destination through two parallel erasure links with probabilities of erasure $q_1$ and $q_2$ respectively is
\begin{align*}
\stexp \hat{T}_{n}^\text{r} = \frac{n}{2-q_1-q_2}+U_{n}^{q_1,q_2}
\end{align*}
where $U_n^{q_1,q_2}$ is an unbounded term that grows at least as square root of $n$. The term routing means that out of the $n$ packets, $\frac{n(1-q_1)}{2-q_1-q_2}$ packets are transmitted through the link with $q_1$ probability of erasure and $\frac{n(1-q_2)}{2-q_1-q_2}$ packets through the link with $q_2$ probability of erasure.
\label{prop:two_parallel_links_routing}
\end{prop}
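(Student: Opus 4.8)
The plan is to recognise the routing completion time as the maximum of two \emph{independent negative binomial random variables with equal means}, and then to reduce its expectation to the expected absolute difference of the two, which is where the $\sqrt n$ growth comes from.

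Concretely, set $p_i=1-q_i$ and let $X$ be the number of time slots the first link needs to deliver its $n_1=n(1-q_1)/(2-q_1-q_2)$ packets and $Y$ the number of slots the second link needs for its $n_2=n(1-q_2)/(2-q_1-q_2)$ packets. Each packet occupies a Geometric$(p_i)$ number of slots, and erasures are independent across slots and across the two (parallel) links, so $X,Y$ are independent, $X$ is a sum of $n_1$ i.i.d.\ Geometric$(p_1)$'s, $Y$ a sum of $n_2$ i.i.d.\ Geometric$(p_2)$'s, and $\hat T_n^{\mathrm r}=\max(X,Y)$. The routing split in (\ref{eqn:minimize_routing_time}) was chosen precisely so that $\stexp X=n_1/(1-q_1)=n/(2-q_1-q_2)=n_2/(1-q_2)=\stexp Y=:\mu$. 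Using $\max(a,b)=\frac12(a+b)+\frac12|a-b|$ and linearity of expectation, $\stexp\hat T_n^{\mathrm r}=\mu+\frac12\stexp|X-Y|$, so $U_n^{q_1,q_2}=\frac12\stexp|X-Y|$ and the whole proposition reduces to showing $\stexp|X-Y|=\Omega(\sqrt n)$.

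For that lower bound I would argue at the level of moments rather than attempt an exact evaluation. The variable $W:=X-Y$ is a sum of $n_1+n_2=n$ independent mean-zero terms with the \emph{fixed} parameters $q_1,q_2$, hence with uniformly bounded moments of every order; therefore $\stexp W^2=\var X+\var Y=n_1\frac{q_1}{(1-q_1)^2}+n_2\frac{q_2}{(1-q_2)^2}=\Theta(n)$ (this uses that not both links are erasure-free, which holds wherever the proposition is applied), while expanding the fourth power and keeping the dominant two-pair terms gives $\stexp W^4=3(\stexp W^2)^2+\bigO(n)=\Theta(n^2)$. Log-convexity of $L^p$ norms along the exponents $1,2,4$ then gives $\|W\|_2\le\|W\|_1^{1/3}\|W\|_4^{2/3}$, i.e.\ $\stexp|X-Y|\ge(\stexp W^2)^{3/2}/(\stexp W^4)^{1/2}=\Theta(n^{3/2})/\Theta(n)=\Omega(\sqrt n)$, which proves the proposition. (If one wanted the sharp constant rather than the order, one could instead normalise $W$ by $\sqrt{\var W}$: negative binomials have exponential tails, so the normalised family is uniformly integrable and the central limit theorem yields $\stexp|X-Y|/\sqrt{\var W}\to\sqrt{2/\pi}$.)

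I do not expect the $\Omega(\sqrt n)$ claim itself to be the obstacle; the delicate part — and what the nearby Proposition~\ref{prop:sum_of_nminusk_over_nplusk} is really for — is the \emph{exact} evaluation of $\stexp\max(X,Y)$ advertised in the introduction. That would proceed from $\stexp\max(X,Y)=\mu+\sum_{m\ge1}\Prob(X\le m-1)\,\Prob(Y\ge m)$, a sum of products of negative binomial CDF/tail values; conditioning on one of the two variables turns the summands into ratios of consecutive binomial coefficients, and Proposition~\ref{prop:sum_of_nminusk_over_nplusk} is exactly the identity that collapses the resulting harmonic sums $H_{c_2+b}-H_{c_2+a-1}$ into closed form. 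The main difficulty there is that $X$ and $Y$ are negative binomials with \emph{different} success probabilities, so their difference has no elementary closed form and the bookkeeping is heavy — but none of that is required for the $\Omega(\sqrt n)$ statement of the proposition.
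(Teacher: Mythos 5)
Your proof is correct, but it takes a genuinely different route from the paper's. You exploit the identity $\hat T_n^{\mathrm r}=\max(X,Y)$ for independent negative binomial completion times whose means are equalized by the split in (\ref{eqn:minimize_routing_time}), write $\stexp \max(X,Y)=\mu+\tfrac12\stexp|X-Y|$, and lower-bound $\stexp|X-Y|$ by the second/fourth-moment H\"older argument; the moment computations and the interpolation $\stexp W^2\le(\stexp|W|)^{2/3}(\stexp W^4)^{1/3}$ are valid, so $U_n^{q_1,q_2}=\tfrac12\stexp|X-Y|=\Omega(\sqrt{n})$ follows, and your CLT remark even identifies the sharp constant asymptotically. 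The paper instead sets up the two-dimensional recursion for $A_{i,j}$, solves it exactly via the $Z$--transform to obtain a closed-form double sum for $\stexp\max(X,Y)$ (the exact characterization advertised in the introduction), and only then extracts the growth by restricting the sum to indices near its dominant terms, using Stirling bounds, concavity of the exponent $g(\alpha,\beta)$, a Taylor expansion, and Proposition~\ref{prop:sum_of_nminusk_over_nplusk} -- which, contrary to your closing guess, enters inside this lower bound to evaluate $\sum_{m\in J}(c_1-m)/(c_2+m)$ as harmonic numbers, not to collapse CDF products in an exact evaluation. Your approach buys brevity, a transparent explanation of where the $\sqrt{n}$ comes from (the $\Theta(\sqrt{n})$ fluctuation of the difference of the two finishing times), and slightly more generality (you need only one of $q_1,q_2$ to be nonzero, while the paper's expressions divide by $q_1q_2$); what it gives up is the exact finite-$n$ formula for $U_n^{q_1,q_2}$, which the paper values as a contribution of independent interest but which is not needed for the proposition as stated. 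Two small points you should make explicit: the equality of means (hence $U_n^{q_1,q_2}=\tfrac12\stexp|X-Y|$) uses the paper's convention that $n_1,n_2$ are exactly the values in (\ref{eqn:minimize_routing_time}) (with integer rounding the means match only up to $\bigO(1)$, which does not affect the conclusion), and independence of erasures across the two links -- also built into the paper's recursion -- is what justifies $\var(X-Y)=\var X+\var Y$.
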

\begin{IEEEproof}
Denote by $A_{i,j}$ the expected time to send $i$ packets over the link with erasure probability $q_1$ and $j$ packets over the link with erasure probability $q_2$. Clearly $\stexp\hat{T}_{n}^\text{r}=A_{i,j}$ with $i=\frac{n(1-q_1)}{2-q_1-q_2}$, $j=\frac{n(1-q_2)}{2-q_1-q_2}$. $A_{i,j}$ satisfies the following two dimensional recursion formula:
\begin{align*}
\left\{
\begin{array}{c}
A_{i,j}=q_1q_2[A_{i,j}+1]+(1-q_1)q_2[A_{i-1,j}+1]\\
+q_1(1-q_2)[A_{i,j-1}+1]+(1-q_1)(1-q_2)[A_{i-1,j-1}+1]\\
A_{i,0}=\frac{i}{1-q_1}, \quad A_{0,j}=\frac{j}{1-q_2}, \quad A_{0,0}=0
\end{array}
\right\}
\end{align*}
or equivalently
\begin{align}
\left\{\hspace{-1mm}
\begin{array}{c}
(1-q_1q_2)A_{i,j} = (1-q_1)q_2A_{i-1,j}+q_1(1-q_2)A_{i,j-1}\\
+(1-q_1)(1-q_2)A_{i-1,j-1}+1\\
A_{i,0}=\frac{i}{1-q_1}, \quad A_{0,j}=\frac{j}{1-q_2}, \quad A_{0,0}=0
\end{array}\hspace{-1mm}
\right\}.
\label{eqn:proof_recursion1}
\end{align}
The two dimensional recursion formula in (\ref{eqn:proof_recursion1}) has a specific solution $\frac{i}{2(1-q_1)}+\frac{j}{2(1-q_2)}$ and a general solution $B_{i,j}$ where
\begin{align}
\left\{\hspace{-1mm}
\begin{array}{c}
(1-q_1q_2)B_{i,j} = (1-q_1)q_2 B_{i-1,j}+q_1(1-q_2)B_{i,j-1}\\
+(1-q_1)(1-q_2)B_{i-1,j-1},i,j\geq 1\\
B_{i,0}=\frac{i}{2(1-q_1)}, \quad B_{0,j}=\frac{j}{2(1-q_2)}, \quad B_{0,0}=0
\end{array}\hspace{-1mm}
\right\}.
\label{eqn:proof_recursion2}
\end{align}

In order to solve equation (\ref{eqn:proof_recursion2}) we will use the $Z$--transform with respect to $i$. More specifically we define the $Z$--transform as:
\begin{align}
\hat{B}_{z,j}=\sum_{i=0}^{\infty}B_{i,j}\cdot z^i.
\label{eqn:proof_Z_transform}
\end{align}
By multiplying all terms in equation (\ref{eqn:proof_recursion2}) by $z^i$ and summing over $i$ we get:
\begin{align*}
(1-q_1q_2)\sum_{i=1}^{\infty} B_{i,j}\cdot z^i = (1-q_1)q_2\sum_{i=1}^{\infty} B_{i-1,j}\cdot z^i +q_1(1-q_2)\sum_{i=1}^{\infty} B_{i,j-1}\cdot z^i\notag\\
+(1-q_1)(1-q_2)\sum_{i=1}^{\infty} B_{i-1,j-1}\cdot z^i\qquad\qquad\qquad\quad\notag\\
\Leftrightarrow (1-q_1q_2)\left[\hat{B}_{z,j}-B_{0,j}\right]=z(1-q_1)q_2 \hat{B}_{z,j}+q_1(1-q_2)\left[\hat{B}_{z,j-1}-B_{0,j-1}\right]\notag\\
+z(1-q_1)(1-q_2)\hat{B}_{z,j-1}.\qquad\qquad\qquad\qquad\qquad\quad
\end{align*}
Since $B_{0,j}=\frac{j}{2(1-q_2)}$ the above equation becomes:
\begin{align}\hspace{-1mm}
\left\{
\begin{array}{c}
\left[(1-q_1q_2)-z(1-q_1)q_2\right]\hat{B}_{z,j}= \left[q_1(1-q_2)+z(1-q_1)(1-q_2)\right]\hat{B}_{z,j-1}\\
+j\frac{1-q_1}{2(1-q_2)}+\frac{q_1}{2}\\
\hat{B}_{z,0}=\sum_{i=0}^{\infty}B_{i,0} z^i=\sum_{i=0}^{\infty}\frac{i}{2(1-q_1)}z^i=\frac{z}{2(1-q_1)(1-z)^2}
\end{array}\hspace{-1mm}\right\}
\label{eqn:recursion_Z_transform1}
\end{align}
where equation (\ref{eqn:recursion_Z_transform1}) is an one dimensional recursion formula with the following general solution~\cite[Section $3.2$]{recursion:bk}:
\begin{align}
\hat{B}_{z,j}=\frac{z}{(1-q_1)(1-z)^2} \left[\frac{q_1(1-q_2)+z(1-q_1)(1-q_2)}{1-q_1q_2-z(1-q_1)q_2}\right]^j\notag\\ +\frac{j}{2(1-q_2)(1-z)}-\frac{z}{2(1-q_1)(1-z)^2}.\quad\qquad
\label{eqn:solution_of_recursion_formula}
\end{align}
\begin{table}
\centering
\caption{Some pairs of functions along with their $Z$--transforms}
\begin{tabular}{c c}
\hline\hline
     Sequence\quad&\quad Z--transform\\
         $1$            &      $\displaystyle\frac{1}{1-z}$\\
         $i$            &      $\displaystyle\frac{z}{(1-z)^2}$\\
$\frac{\nchoosek{i+j-t-1}{j-1}}{\displaystyle b^{i+j-t}}$  & \quad $\displaystyle\frac{z^t}{(b-z)^j}$, for $t\leq j$\\
\hline\hline
\end{tabular}
\label{table:Z_transform_pairs}
\end{table}

\noindent Equation (\ref{eqn:solution_of_recursion_formula}) can be written in a compact form
\begin{align}
\hat{B}_{z,j}=\hat{a}(z)\cdot \hat{b}(j,z)+\hat{d}(j,z)
\label{eqn:solution_of_recursion_formula_compact}
\end{align}
by defining the functions $\hat{a}(z)$, $\hat{b}(z,j)$ and $\hat{d}(z,j)$ as follows:
\begin{align*}
\hat{a}(z) &= \frac{z}{(1-q_1)(1-z)^2}\\
\hat{b}(z,j) &= \left[\frac{q_1(1-q_2)+z(1-q_1)(1-q_2)}{1-q_1q_2-z(1-q_1)q_2}\right]^j\\
\hat{d}(z,j) &= \frac{j}{2(1-q_2)(1-z)}-\frac{z}{2(1-q_1)(1-z)^2}.
\end{align*}

Now we are ready to compute the inverse $Z$--transform of $\hat{B}_{z,j}$. Using Table~\ref{table:Z_transform_pairs} along with equation~(\ref{eqn:solution_of_recursion_formula_compact}):
\begin{align*}
B_{i,j} &= Z^{-1}\left\{\hat{a}(z)\cdot \hat{b}(z,j)\right\}+ Z^{-1}\left\{\hat{d}(z,j)\right\}\notag\\
\Leftrightarrow B_{i,j} &= \sum_{m=0}^{i} a(i-m)\cdot b(m,j)+\frac{j}{2(1-q_2)}-\frac{i}{2(1-q_1)}
%\label{eqn:equation_of_solution_of_B}
\end{align*}
where $a(i)$ and  $b(i,j)$ are the inverse $Z$--transforms of $\hat{a}(z)$ and $\hat{b}(z,j)$ respectively. From Table~\ref{table:Z_transform_pairs} $a(i)=\frac{i}{1-q_1}$ and therefore the equation above becomes
\begin{align}
B_{i,j} &= \sum_{m=0}^{i} \frac{i-m}{1-q_1}b(m,j) +\frac{j}{2(1-q_2)}-\frac{i}{2(1-q_1)}.
\label{eqn:equation_of_solution_of_B}
\end{align}
The remaining step in order to compute $B_{i,j}$ is to evaluate $b(i,j)$:
\begin{align*}
b(i,j) &= Z^{-1}\left\{\left[\frac{q_1(1-q_2)+z(1-q_1)(1-q_2)}{1-q_1q_2-z(1-q_1)q_2}\right]^j\right\}\\
&= \frac{1}{[(1-q_1)q_2]^j}\cdot Z^{-1}\left\{\frac{\sum_{t=0}^{j}\nchoosek{j}{t}z^t(1-q_1)^t(1-q_2)^t[q_1(1-q_2)]^{j-t}}{\left(\frac{1-q_1q_2}{(1-q_1)q_2}-z\right)^j}\right\}\\
&=\left[\frac{q_1(1-q_2)}{q_2(1-q_1)}\right]^j\sum_{t=0}^{j}\nchoosek{j}{t}\cdot \left(\frac{1-q_1}{q_1}\right)^t\cdot Z^{-1}\left\{\frac{z^t}{\left(\frac{1-q_1q_2}{(1-q_1)q_2}-z\right)^j}\right\}\\
&=\frac{(q_1(1-q_2))^j((1-q_1)q_2)^i}{(1-q_1q_2)^{i+j}}\sum_{t=0}^{j}\nchoosek{j}{t} \nchoosek{i+j-t-1}{j-1} \left(\frac{1-q_1q_2}{q_1q_2}\right)^t.
\end{align*}
Therefore equation (\ref{eqn:equation_of_solution_of_B}) becomes
\begin{align*}
B_{i,j} =\left[\frac{q_1(1-q_2)}{1-q_1q_2}\right]^j\sum_{m=0}^i\sum_{t=0}^j \frac{i-m}{1-q_1}
\left[\frac{(1-q_1)q_2}{1-q_1q_2}\right]^m\nchoosek{j}{t} \nchoosek{m+j-t-1}{j-1} \left(\frac{1-q_1q_2}{q_1q_2}\right)^t+\frac{j}{2(1-q_2)}-\frac{i}{2(1-q_1)}\hspace{55mm}
\end{align*}
and since the expected time $A_{i,j}=B_{i,j}+\frac{i}{2(1-q_1)}+\frac{j}{2(1-q_2)}$ then
\begin{align}
A_{i,j} =\left[\frac{q_1(1-q_2)}{1-q_1q_2}\right]^j\sum_{m=0}^i\sum_{t=0}^j \frac{i-m}{1-q_1}
\left[\frac{(1-q_1)q_2}{1-q_1q_2}\right]^m\nchoosek{j}{t} \nchoosek{m+j-t-1}{j-1} \left(\frac{1-q_1q_2}{q_1q_2}\right)^t+\frac{j}{1-q_2}.
\label{eqn:solution_of_A}
\end{align}

We are interested in evaluating $\stexp\hat{T}_{n}^\text{r}=A_{i,j}$ for $i=\frac{n(1-q_1)}{2-q_1-q_2}$ and $j=\frac{n(1-q_2)}{2-q_1-q_2}$ and therefore from equation (\ref{eqn:solution_of_A}) we get
\begin{align*}
\stexp \hat{T}_{n}^\text{r}=\frac{n}{2-q_1-q_2}+U_{n}^{q_1,q_2}
\end{align*}
where
\begin{align*}
U_n^{q_1,q_2}=\left[\frac{q_1(1-q_2)}{1-q_1q_2}\right]^{\frac{n(1-q_2)}{2-q_1-q_2}}\sum_{m=0}^{\frac{n(1-q_1)}{2-q_1-q_2}} \sum_{t=0}^{\frac{n(1-q_2)}{2-q_1-q_2}} \frac{\frac{n(1-q_1)}{2-q_1-q_2}-m}{1-q_1}
\left[\frac{(1-q_1)q_2}{1-q_1q_2}\right]^m\hspace{-1mm}\nchoosek{\frac{n(1-q_2)}{2-q_1-q_2}}{t} \hspace{-1mm}\nchoosek{m+\frac{n(1-q_2)}{2-q_1-q_2}-t-1}{\frac{n(1-q_2)}{2-q_1-q_2}-1} \hspace{-1mm}\left(\frac{1-q_1q_2}{q_1q_2}\right)^t
\end{align*}
with $\nchoosek{m}{w}=0$ if $m<w$. If we define $W=\frac{(1-q_1)q_2}{1-q_1q_2}$, $E=\frac{q_1(1-q_2)}{1-q_1q_2}$ and  $F=\frac{1-q_1q_2}{q_1q_2}$, then the above expression can be written more compactly as
\begin{align*}
U_n^{q_1,q_2}=E^{\frac{n(1-q_2)}{2-q_1-q_2}}\sum_{m=0}^{\frac{n(1-q_1)}{2-q_1-q_2}}\sum_{t=0}^{\frac{n(1-q_2)}{2-q_1-q_2}} \frac{\frac{n(1-q_1)}{2-q_1-q_2}-m}{1-q_1}\nchoosek{\frac{n(1-q_2)}{2-q_1-q_2}}{t} \nchoosek{\frac{n(1-q_2)}{2-q_1-q_2}+m-t-1}{\frac{n(1-q_2)}{2-q_1-q_2}-1}W^m F^t.
\end{align*}
In order to prove that function $U_n^{q_1,q_2}$ is unbounded we will prove that $U_n^{q_1,q_2}$ is larger than another simpler to analyze function that goes to infinity and therefore $U_n^{q_1,q_2}$ also increases to infinity. Indeed the equation above can be written as
\begin{align*}
U_n^{q_1,q_2}&=E^{\frac{n(1-q_2)}{2-q_1-q_2}}\sum_{m=0}^{\frac{n(1-q_1)}{2-q_1-q_2}}\sum_{t=0}^{\frac{n(1-q_2)}{2-q_1-q_2}} \frac{\frac{n(1-q_1)}{2-q_1-q_2}-m}{1-q_1}\nchoosek{\frac{n(1-q_2)}{2-q_1-q_2}}{t} \nchoosek{\frac{n(1-q_2)}{2-q_1-q_2}+m-t}{\frac{n(1-q_2)}{2-q_1-q_2}}\frac{\frac{n(1-q_2)}{2-q_1-q_2}}{\frac{n(1-q_2)}{2-q_1-q_2}+m-t} W^m F^t\\
&>\frac{n(1-q_2)E^{\frac{n(1-q_2)}{2-q_1-q_2}}}{(1-q_1)(2-q_1-q_2)}\sum_{m=0}^{\frac{n(1-q_1)}{2-q_1-q_2}}\sum_{t=0}^{\frac{n(1-q_2)}{2-q_1-q_2}}\nchoosek{\frac{n(1-q_2)}{2-q_1-q_2}}{t} \nchoosek{\frac{n(1-q_2)}{2-q_1-q_2}+m-t}{\frac{n(1-q_2)}{2-q_1-q_2}}\frac{\frac{n(1-q_1)}{2-q_1-q_2}-m}{\frac{n(1-q_2)}{2-q_1-q_2}+m} W^m F^t
\end{align*}
and since all terms in the above double sum are non-negative we can disregard as many terms as we wish without violating direction of the inequality, specifically
\begin{eqnarray}
U_n^{q_1,q_2}>\frac{n(1-q_2)E^{\frac{n(1-q_2)}{2-q_1-q_2}}}{(1-q_1)(2-q_1-q_2)}\sum_{m\in J, t\in G}\nchoosek{\frac{n(1-q_2)}{2-q_1-q_2}}{t} \nchoosek{\frac{n(1-q_2)}{2-q_1-q_2}+m-t}{\frac{n(1-q_2)}{2-q_1-q_2}}\frac{\frac{n(1-q_1)}{2-q_1-q_2}-m}{\frac{n(1-q_2)}{2-q_1-q_2}+m} W^m F^t
\label{eqn:U_n_chopped}
\end{eqnarray}
where $J=\{\lceil \frac{n(1-q_1)}{2-q_1-q_2}(1-\frac{1}{\sqrt{n}})\rceil,\ldots,\frac{n(1-q_1)}{2-q_1-q_2}\}$, $G=\{\lceil(1-q_1)\frac{n(1-q_2)}{2-q_1-q_2}(1-\frac{1}{\sqrt{n}})\rceil,\ldots,\lfloor(1-q_1)\frac{n(1-q_2)}{2-q_1-q_2}\rfloor\}$ and $\lfloor x \rfloor$, $\lceil x \rceil$ are the floor and the ceiling functions respectively.

By using the lower and upper Stirling-based bound \cite{Beesack}:
\begin{align*}
\sqrt{2\pi n}\left(\frac{n}{e}\right)^n < n! < \sqrt{2\pi n}\left(\frac{n}{e}\right)^n e^\frac{1}{12 n}, \quad n\geq 1
\end{align*}
one can find that
\begin{align*}
\nchoosek{n}{\beta n}\hspace{-1mm} >\hspace{-1mm} \frac{1}{\sqrt{2\pi\beta(1-\beta)n}} \cdot 2^{nH(\beta)}\cdot e^{-\frac{1}{12n\beta(1-\beta)}},\hspace{1mm} \beta\in(0,1)
\end{align*}
and
\begin{align*}
\nchoosek{\bar{\beta}n}{n} > \sqrt{\frac{\bar{\beta}}{2\pi(\bar{\beta}-1)n}} \cdot 2^{n \bar{\beta} H\left(\frac{1}{\bar{\beta}}\right)}\cdot e^{-\frac{\bar{\beta}}{12n(\bar{\beta}-1)}},\hspace{1mm} \bar{\beta}>1
\end{align*}
where $H(\beta)=-\beta\log_2(\beta)-(1-\beta)\log_2(1-\beta)$ is the entropy function and therefore using inequality (\ref{eqn:U_n_chopped}) we can derive:
\begin{align}
U_n^{q_1,q_2}>\frac{1}{2\pi(1-q_1)}\sum_{m\in J, t\in G}\frac{\frac{n(1-q_1)}{2-q_1-q_2}-m}{\frac{n(1-q_2)}{2-q_1-q_2}+m} f\left(\frac{m}{M},\frac{t}{T}\right)e^{-\frac{2-q_1-q_2}{12n(1-q_2)}h\left(\frac{m}{M},\frac{t}{T}\right)} 2^{\frac{n(1-q_2)}{2-q_1-q_2}g\left(\frac{m}{M},\frac{t}{T}\right)}
\label{eqn:U_n_chopped_abfirst}
\end{align}
where $M=\frac{n(1-q_1)}{2-q_1-q_2}$, $T=\frac{n(1-q_2)}{2-q_1-q_2}$, $f(\alpha,\beta)=\sqrt{\frac{1+\alpha\frac{1-q_1}{1-q_2}-\beta}{\beta(1-\beta)(\alpha\frac{1-q_1}{1-q_2}-\beta)}}$, $h(\alpha,\beta)=\frac{1+\alpha\frac{1-q_1}{1-q_2}-\beta}{\alpha\frac{1-q_1}{1-q_2}-\beta}+\frac{1}{\beta(1-\beta)}$ and
\begin{align*}
g(\alpha,\beta)=\log_2(E)+\alpha\frac{1-q_1}{1-q_2}\log_2\left(W\right)+H(\beta)+(1+\alpha\frac{1-q_1}{1-q_2}-\beta) H\left(\frac{1}{1+\alpha\frac{1-q_1}{1-q_2}-\beta}\right)+\beta\log_2(F).
\end{align*}
Since $1-\frac{1}{\sqrt{n}}\leq\frac{m}{M}\leq 1$ and $(1-q_1)-\frac{1}{\sqrt{n}}\leq\frac{t}{T}\leq(1-q_1)$ we define functions $f(\alpha,\beta)$, $h(\alpha,\beta)$ and $g(\alpha,\beta)$ within the region $N=\left[1-\frac{1}{\sqrt{n}},1\right]\times \left[1-q_1-\frac{1}{\sqrt{n}},1-q_1\right]$. Moreover we are only concerned with large enough $n$ so that $0<\beta<\alpha$ and region $N$ looks like the one in Figure~\ref{fig:region}. For large values of $n$, $f(\alpha,\beta)>\sqrt{\frac{1}{2q_1(1-q_1)}}$ and $h(\alpha,\beta)<1+\frac{2(1-q_2)}{(1-q_1)q_2}+\frac{2}{q_1(1-q_1)}$ within region $N$ and therefore from inequality (\ref{eqn:U_n_chopped_abfirst}) we get:
\begin{align}
U_n^{q_1,q_2}&>\frac{1}{\sqrt{8\pi^2q_1(1-q_1)^3}}e^{-\frac{2-q_1-q_2}{12n(1-q_2)}(1+\frac{2(1-q_2)}{(1-q_1)q_2}+\frac{2}{q_1(1-q_1)})}\sum_{m\in J, t\in G}\frac{\frac{n(1-q_1)}{2-q_1-q_2}-m}{\frac{n(1-q_2)}{2-q_1-q_2}+m} 2^{\frac{n(1-q_2)}{2-q_1-q_2}g\left(\frac{m}{M},\frac{t}{T}\right)}\notag\\
&>\frac{e^{-1}}{\sqrt{8\pi^2q_1(1-q_1)^3}}\sum_{m\in J, t\in G}\frac{\frac{n(1-q_1)}{2-q_1-q_2}-m}{\frac{n(1-q_2)}{2-q_1-q_2}+m} 2^{\frac{n(1-q_2)}{2-q_1-q_2}g\left(\frac{m}{M},\frac{t}{T}\right)}\hspace{20mm}
\label{eqn:U_n_chopped_absecond}
\end{align}
for large enough $n$.

\begin{figure}
\begin{center}
\includegraphics[width=0.6\columnwidth]{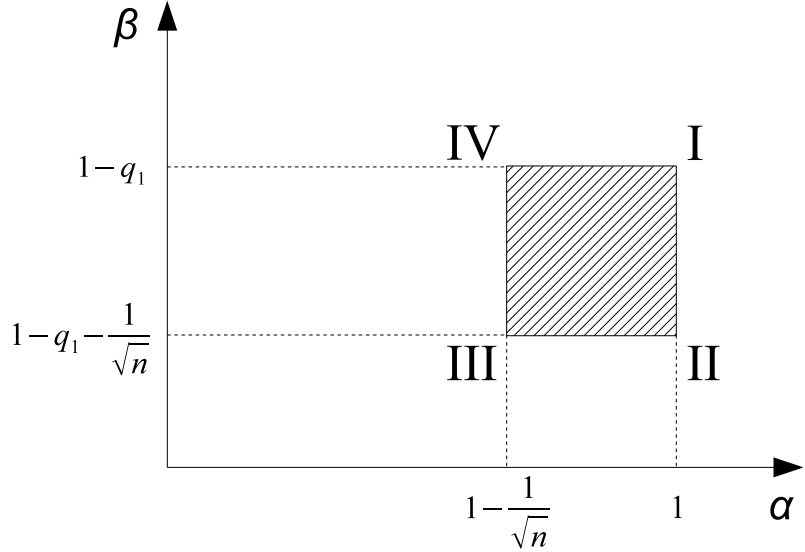}
\end{center}
\caption{The region $N$ where function $g(\alpha,\beta)$ is defined on.}
\label{fig:region}
\end{figure}

Function $g(\alpha,\beta)$ satisfies the following three conditions:
\begin{enumerate}
\item $\frac{\partial g}{\partial \alpha} =\frac{1-q_1}{1-q_2} \log_2\left(W\frac{\alpha(1-q_1)+(1-\beta)(1-q_2)}{\alpha(1-q_1)-\beta(1-q_2)}\right)$ and $\frac{\partial g}{\partial \beta} = \log_2\left(\frac{F(1-\beta)[\alpha(1-q_1)-\beta(1-q_2)]}{\beta[\alpha(1-q_1)+(1-\beta)(1-q_2)]}\right)$
\item $\frac{\partial^2 g}{\partial \alpha^2} = -\frac{(1-q_1)^2}{[\alpha(1-q_1)-\beta(1-q_2)][\alpha(1-q_1)+(1-\beta)(1-q_2)]\ln 2}<0$
\item $\frac{\partial^2 g}{\partial \alpha^2}\cdot \frac{\partial^2 g}{\partial \beta^2}-\frac{\partial^2 g}{\partial\alpha\partial\beta}\cdot \frac{\partial^2 g}{\partial\beta\partial\alpha}=\frac{(1-q_1)^2}{\beta(1-\beta)[\alpha(1-q_1)+(1-\beta)(1-q_2)] [\alpha(1-q_1)-\beta(1-q_2)](\ln 2)^2}>0$
\end{enumerate}

It's easy to see from condition $1$ that $\left.\frac{\partial g(\alpha,\beta)}{\partial \alpha}\right|_{(1,1-q_1)}=0$ and $\left.\frac{\partial g(\alpha,\beta)}{\partial \beta}\right|_{(1,1-q_1)}=0$. Moreover conditions $2$ and $3$ show the concavity of $g(\alpha,\beta)$ within region $N$ and along with condition $1$ it is proved that function $g(\alpha,\beta)$ achieves a maximum at point $(\alpha,\beta)=(1,1-q_1)$. Therefore $g(\alpha,\beta)\leq g(1,1-q_1)=0$ making the exponent of $2$ in (\ref{eqn:U_n_chopped_absecond}) non-positive guaranteeing an exponential decay of each term in the sum. Since region $N$ is compact (closed and convex) and function $g(\alpha,\beta)$ is concave, and therefore it will achieve its minimum on the boundary of $N$. It's not difficult to show that $\frac{\partial g(\alpha,1-q_1)}{\partial \alpha}\geq 0$ for $\alpha\leq 1$ and therefore function $g(\alpha,1-q_1)$ decreases in value from point I to point IV. Similarly $\frac{\partial g(1,\beta)}{\partial \beta}\geq 0$ for $\beta\leq 1-q_1$ and therefore function $g(1,\beta)$ decreases in value from point I to point II. Since $\frac{\partial g(\alpha,1-q_1-1/\sqrt{n})}{\partial \alpha}\geq 0$ for $a\leq 1$ and $\frac{\partial g(1-1/\sqrt{n},\beta)}{\partial \beta}\geq 0$ for $\beta\leq 1-q_1$ with similar arguments as above we show that the minimum value for $g(\alpha,\beta)$ within $N$ is achieved at point $C\equiv(\alpha_m,\beta_m)=(1-\frac{1}{\sqrt{n}},1-q_1-\frac{1}{\sqrt{n}})$. Therefore $g\left(\frac{k}{n},\frac{i}{n}\right)\geq g\left(\alpha_m, \beta_m\right)$ or else from equation~(\ref{eqn:U_n_chopped_absecond}):
\begin{align*}
U_n^{q_1,q_2}>\frac{e^{-1}(1-q_2)\sqrt{n}}{(2-q_1-q_2)\sqrt{8\pi^2q_1(1-q_1)}} 2^{\frac{n(1-q_2)}{2-q_1-q_2}g\left(a_m,\beta_m\right)}\sum_{m\in J} \frac{\frac{n(1-q_1)}{2-q_1-q_2}-m}{\frac{n(1-q_2)}{2-q_1-q_2}+m}
\end{align*}

Using the Taylor expansion of function $r(x)=g(1-x,1-q_1-x)$ around $x = 0$ we get the following expression:
\begin{align*}
f(x)=\frac{q_1^2(q_2-q_1)-q_2(1-q_1^2)}{(1-q_1)q_1q_2(1-q_1q_2)\ln 2}x^2+\bigO(x^3).
\end{align*}
For $x=\frac{1}{\sqrt{n}}$ we get
\begin{align*}
\frac{n(1-q_2)}{2-q_1-q_2}g\left(\alpha_m,\beta_m\right)=\frac{(1-q_2)\left(q_1^2(q_2-q_1)-q_2(1-q_1^2)\right)}{(2-q_1-q_2)(1-q_1)q_1q_2(1-q_1q_2)\ln 2}+\bigO\left(\frac{1}{\sqrt{n}}\right)
\end{align*}
where along with Proposition~\ref{prop:sum_of_nminusk_over_nplusk} we get
\begin{align}
U_n^{q_1,q_2}>\frac{e^{-1}(1-q_2)\sqrt{n}}{(2-q_1-q_2)\sqrt{8\pi^2q_1(1-q_1)}} 2^{\frac{(1-q_2)\left(q_1^2(q_2-q_1)-q_2(1-q_1^2)\right)}{(2-q_1-q_2)(1-q_1)q_1q_2(1-q_1q_2)\ln 2}+\frac{c}{\sqrt{n}}}t(n)
\label{eqn:U_n_chopped_abfourth}
\end{align}
where $t(n)=n\left(H_{n}-H_{n-k(n)-1}\right)- k(n)-1$ and $k(n)=A\sqrt{n}$ with $A = \frac{(1-q_1)}{2-q_1-q_2}$. The above expression can be simplified by using the bounds proved by Young in~\cite{Young_harmonic}:
\begin{eqnarray*}
\ln n+\gamma+\frac{1}{2(n+1)}<H_n<\ln n+\gamma+\frac{1}{2n}
\end{eqnarray*}
where $\gamma$ is the Euler's constant. We obtain from (\ref{eqn:U_n_chopped_abfourth}):
\begin{align}
U_n^{q_1,q_2}>\frac{e^{-1}(1-q_2)\sqrt{n}}{(2-q_1-q_2)\sqrt{8\pi^2q_1(1-q_1)}} 2^{\frac{(1-q_2)\left(q_1^2(q_2-q_1)-q_2(1-q_1^2)\right)}{(2-q_1-q_2)(1-q_1)q_1q_2(1-q_1q_2)\ln 2}+\frac{c}{\sqrt{n}}}\phi(n)
\label{eqn:last}
\end{align}
where $\phi(n)=n\ln\left(\frac{n}{n-k(n)-1}\right)-\frac{n}{2(n+1)}\frac{k(n)+2}{n-k(n)-1}-k(n)-1$. It can be easily proved that function $\omega(n)=n\ln\left(\frac{n}{n-k(n)-1}\right)-k(n)-1$ is greater than $\frac{A^2}{2}$ for $n>1$. Indeed
\begin{eqnarray}
\omega''(n)=\frac{A(A^2+3)n+2(A^2+2)\sqrt{n}+A}{4(n-A\sqrt{n}-1)^2n^{3/2}}>0\quad\text{for}\quad n>1
\end{eqnarray}
and since $\displaystyle\lim_{n\rightarrow +\infty} \omega'(n)=0$ it means that $\omega'(n)<0$ for $n>1$ and therefore $\omega(n)$ is a decreasing function of $n>1$. Moreover
\begin{eqnarray*}
\lim_{n\rightarrow +\infty}\omega(n)=\lim_{n\rightarrow +\infty}\frac{\ln\left(\frac{n}{n-k(n)-1}\right)-\frac{k(n)}{n}}{\frac{1}{n}}-1 \mathop{=}^{\text{L'Hospital}}\lim_{n\rightarrow +\infty}\frac{\frac{k(n)}{n^2}+\frac{k^2(n)}{n^2}+\frac{2}{n}}{-\frac{1}{n^2}(2+2k(n)-2n)}-1=\frac{A^2}{2}
\end{eqnarray*}
and therefore $\omega(n)>\frac{A^2}{2}$ for $n>1$. Finally inequality (\ref{eqn:last}) becomes
\begin{align*}
U_n^{q_1,q_2}>\frac{e^{-1}(1-q_2)\sqrt{n}}{(2-q_1-q_2)\sqrt{8\pi^2q_1(1-q_1)}} 2^{\frac{(1-q_2)\left(q_1^2(q_2-q_1)-q_2(1-q_1^2)\right)}{(2-q_1-q_2)(1-q_1)q_1q_2(1-q_1q_2)\ln 2}+\frac{c}{\sqrt{n}}}\left(\frac{1}{2}\left(\frac{1-q_1}{2-q_1-q_2}\right)^2-\frac{n}{2(n+1)}\frac{k(n)+2}{n-k(n)-1}\right).
\end{align*}
Clearly the above function is unbounded and $U_n^{q_1,q_2}$ increases with respect to $n$ at least as $\sqrt{n}$.
\end{IEEEproof}
Now we have all the necessary tools to prove the following theorem for  $k$-parallel path multi-hop networks as shown in Figure~\ref{fig:parallel_multihop_network}.
\begin{IEEEproof}[Proof of Theorem~\ref{thm:routing}]
Without loss of generality due to~\cite{weber92interchangeability} we can interchange the first link of each of the $k$ line networks with the worst link of the line network. The first term in equation (\ref{eqn:finish}) is due to the capacity of the $k$ parallel multi-hop line network. The second term $D^\text{r}_n$ is sublinear in $n$; %since if function $D^\text{r}_n$ was growing faster than $n$ then the capacity of the parallel multi-hop line network would have been equal to $0$. Therefore
what is left to prove is that term $D^\text{r}_n$ grows as $\Omega(\sqrt{n})$. This follows from Proposition~\ref{prop:two_parallel_links_routing}. The number of packets transmitted on the first two paths is $n_ 1 =n\Big(1-\displaystyle\mathop{\max}_{1\leq i\leq\ell}p_{1i}\Big)\Big\slash\Big(k-\displaystyle\sum_{i=1}^k\mathop{\max}_{1\leq j\leq\ell}p_{ij}\Big)$ and $n_2 =n\Big(1-\displaystyle\mathop{\max}_{1\leq i\leq\ell}p_{2i}\Big)\Big\slash\Big(k-\displaystyle\sum_{i=1}^k\mathop{\max}_{1\leq j\leq\ell}p_{ij}\Big)$ respectively. The time $T_n^\text{r}$ taken to send $n$ packets through  the $k$-parallel path multi-hop  network %as the one depicted in Figure~\ref{fig:parallel_multihop_network} is always
is greater than the time $\hat{T}_{n}^\text{r}$ taken for $n_1$ packets to reach node $N_{11}$ and $n_2$ packets to reach node $N_{21}$. Therefore %$\stexp T_n^\text{r}>\stexp \hat{T}_{n}^\text{r}$ or
from Proposition~\ref{prop:two_parallel_links_routing}
\begin{align*}
\stexp T_n^\text{r}>\frac{n}{k-\displaystyle\sum_{i=1}^k\mathop{\max}_{1\leq j\leq\ell}p_{ij}}+U_{n'}^{\mathop{\max}_{1\leq i\leq\ell}p_{1i}, \mathop{\max}_{1\leq j\leq\ell}p_{2j}}.
\end{align*}
where $n'=n\Big(2-\displaystyle\mathop{\max}_{1\leq i\leq\ell}p_{1i}-\displaystyle\mathop{\max}_{1\leq i\leq\ell}p_{2i}\Big)\Big\slash\Big(k-\displaystyle\sum_{i=1}^k\mathop{\max}_{1\leq j\leq\ell}p_{ij}\Big)$ is proportional to $n$. By Proposition~\ref{prop:two_parallel_links_routing}, $U_{n'}^{\mathop{\max}_{1\leq i\leq\ell}p_{1i}, \mathop{\max}_{1\leq j\leq\ell}p_{2j}}$ grows as $\Omega(\sqrt{n'})$.  Thus, $D^\text{r}_n$ grows as $\Omega(\sqrt{n})$.
\end{IEEEproof}
\section{General network topologies}
We next consider networks with general topologies.
\begin{lemma} In a single-bottleneck network, there exists a max-flow subgraph comprising paths each of which has a single worst link.
\label{lemma:bottleneck}
\end{lemma}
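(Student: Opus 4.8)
The plan is to single out a max-flow subgraph by a greedy "worst-link-first" peeling argument. Since the network has a single min-cut of value $C$, by the max-flow min-cut theorem there is an integral flow of value $C$, which decomposes into $C$ edge-disjoint unit-capacity paths $P_1,\ldots,P_C$ from $\mathcal S$ to $\mathcal T$. The goal is to choose such a decomposition so that every $P_i$ has a \emph{unique} link of maximum erasure probability along it. I would proceed by induction on the number of edges (or on $C$). Consider the globally worst edge $e^\star = (u,v)$ in the current subgraph, with the largest $p_e$. The claim is that we may route exactly one of the $C$ paths through $e^\star$: if two or more of the paths in some max-flow decomposition used $e^\star$, then $e^\star$ would be saturated with flow $\ge 2$, contradicting unit capacity; and if every max-flow decomposition avoided $e^\star$, then $e^\star$ carries no flow and can be deleted, reducing the edge count.

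The key steps, in order, are: (i) invoke integral max-flow/min-cut to get $C$ edge-disjoint unit-capacity paths; (ii) identify the current worst edge $e^\star$ and argue via the single-bottleneck hypothesis that $e^\star$ either lies on no max-flow or on exactly one path of capacity one — here the single-min-cut assumption is what rules out the degenerate case where $e^\star$ itself would have to be "shared'' across two distinct bottlenecks; (iii) fix the path $P$ through $e^\star$ (so $e^\star$ is the unique worst link \emph{of $P$}, since no other edge in the whole graph beats it, and after we commit to $P$ we will only need its worst link to be unique on $P$, which holds strictly because $p_{e^\star}$ strictly exceeds every other $p_e$ on that path unless there is a tie); (iv) delete the edges of $P$ from the network, reducing the min-cut to $C-1$ while preserving the single-bottleneck property on the residual (any min-cut of the residual corresponds to a min-cut of the original minus one saturated edge), and recurse; (v) assemble the $C$ paths produced over the recursion into the desired max-flow subgraph.

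The main obstacle I anticipate is handling ties — edges with \emph{equal} maximum erasure probability along a path. The clean statement I want ("single worst link") may require breaking ties, and a naive greedy choice might leave some path with two links both achieving its path-maximum. To deal with this I would argue that ties can be perturbed away, or more carefully, that the single-min-cut hypothesis forces the worst edges to be arranged so that the peeling can always be done tie-free: if a residual path were forced to carry two links of equal (path-)maximal probability with no way to reroute, one can show the two links would induce two distinct min-cuts of the same value, contradicting uniqueness of the min-cut. A secondary subtlety is verifying that deleting a saturated path genuinely preserves the single-bottleneck property of the residual graph rather than creating multiple new min-cuts; this needs a short submodularity-of-cuts argument (the min-cut structure of the residual is controlled by that of the original, and a single original min-cut cannot split into several). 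Once these two points are pinned down, the induction closes and the lemma follows.
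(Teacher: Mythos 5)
Your proposal attacks a different statement from the one the paper needs, and the scaffolding it rests on does not fit the model. First, link capacities here are the real numbers $1-p_e$, so there is no integral max flow splitting into $C$ edge-disjoint unit-capacity paths; a max-flow subgraph is a fractional flow whose paths in general share edges (this sharing is exactly what the lemma must cope with: in the proof of Theorem~\ref{thm:super_general} a path of rate $\lambda_f$ sees on edge $(u,v)$ the effective erasure probability $1-\frac{\lambda_f}{\sum_{w\in\mathcal{F}_{(u,v)}}\lambda_w}(1-p_{(u,v)})$). Your steps (i)--(ii), including ``flow $\ge 2$ contradicts unit capacity,'' therefore do not apply. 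Second, and more fundamentally, you read ``worst link of a path'' as the raw maximum of $p_e$ along the path and try to make that maximum unique by peeling off the globally worst edge $e^\star$; under that reading the lemma is simply false for single-bottleneck networks, so no amount of tie-breaking can rescue the induction. Example: $\mathcal{S}\rightarrow A$ with $p=0.5$, followed by two parallel links $A\rightarrow\mathcal{T}$ each with $p=0.5$. The min-cut $\{\mathcal{S}\rightarrow A\}$ (value $0.5$) is unique, yet every source--destination path contains two links of erasure probability $0.5$. What the lemma actually delivers (and what Theorem~\ref{thm:super_general} uses) is a max-flow subgraph in which each path's \emph{unique bottleneck, after capacities are shared among the paths using a link,} is the min-cut edge it crosses: in the example, splitting the flow $0.25/0.25$ over the parallel links gives each path effective probabilities $0.75$ on the min-cut edge and $0.5$ downstream.

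The single-min-cut hypothesis is also used quite differently from your step (ii). In the paper's argument one takes any max flow $\mathcal{G}'$ and observes that in the residual network $\mathcal{G}-\mathcal{G}'$ there must be positive-capacity paths from $\mathcal{S}$ to every tail node $v_i$ of a min-cut edge and from every head node $w_i$ to $\mathcal{T}$; if some $v_i$ were unreachable, the saturated edges separating it would constitute a second minimum cut, contradicting uniqueness. Adding these residual subgraphs $\mathcal{G}''$ and $\mathcal{G}'''$ to $\mathcal{G}'$ and decomposing into sufficiently many paths gives every path strict slack on all of its non-min-cut edges, so its min-cut edge is its single worst link. By contrast, your two acknowledged obstacles are genuine: ties cannot be ``perturbed away'' (the statement must hold for the given probabilities, and the example above shows raw ties can be unavoidable), and deleting a saturated path need not preserve uniqueness of the min-cut in the residual graph --- a competing cut that the deleted path crosses more than once loses more capacity than the min-cut does, so new minimum cuts can appear; your submodularity remark does not rule this out. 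With both the notion of ``worst link'' and the use of the single-bottleneck hypothesis misaligned, the induction does not close; I suggest reworking the argument around residual reachability of the min-cut endpoints rather than greedy peeling.
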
\begin{IEEEproof}
Given a network $\mathcal{G}=(\mathcal{V},\mathcal{E})$ with a single minimum cut, let $(v_1,w_1),\dots,(v_k,w_k)$ be the edges crossing the minimum cut. Let $\mathcal{G}'$ be a max flow subgraph. Consider the network $\mathcal{G} -\mathcal{G}'$  obtained from $\mathcal{G}$ by reducing the capacity of each link  $(i,j)\in\mathcal{E}$ by the capacity of the corresponding link in $\mathcal{G}'$ if any. There is a path from the source to  each node $v_i,1 \le i\le k$ (which may not all be distinct), otherwise this would contradict the assumption that there is a single minimum cut. Thus, we can find a subgraph $\mathcal{G}''$ comprising a set of paths of nonzero and nonoverlapping capacity from the source to each distinct node $v_i,1 \le i\le k$. Similarly, we can find a subgraph $\mathcal{G}'''$ comprising a set of paths of nonzero and nonoverlapping capacity from each distinct node $w_i,1 \le i\le k$, to the sink. We can then decompose the union of subgraphs $\mathcal{G}'+\mathcal{G}''+\mathcal{G}'''$ (obtained by adding the capacities of corresponding links) into a sufficiently large number of paths each of which has a single worst link corresponding to the min cut of the original network.
\end{IEEEproof}
%Now we have developed all the necessary tools to prove the most general Theorem of this paper Theorem~\ref{thm:super_general}.
%

%\begin{theorem}
%In an acyclic network $\mathcal{G}=(\mathcal{V}, \mathcal{E})$ of erasure channels with a single source $\mathcal{S}$ and a single receiver $\mathcal{T}$ the expected time $\stexp T_{n}^\text{r}$ taken to send $n$ packets by routing is
%
%\begin{align*}
%\stexp T_{n}^\text{r} = \frac{n}{C}+\hat{D}^\text{r}_n
%\end{align*}
%
%where $C$ is the capacity of the network and $\hat{D}^\text{r}_n\in\Omega(\sqrt{n})$. In the case of coding the expected time $\stexp T_{n}^\text{c}$ taken to send $n$ packets is
%
%\begin{align*}
%\stexp T_{n}^\text{r} = \frac{n}{C}+\hat{D}^\text{r}_n
%\end{align*}
%
%where $\hat{D}^\text{r}_n\in\bigO(\sqrt{n})$ and potentially $\hat{D}^\text{r}_n\in\bigO(1)$.
%\label{thm:super_general}
%\end{theorem}

\begin{IEEEproof}[Proof of Theorem~\ref{thm:super_general}]
The expected time $\stexp T_{n}^\text{r}$ required to send all $n$ packets by routing through network $\mathcal{G}$ from source $\mathcal{S}$ to destination $\mathcal{T}$ is greater than the time $\stexp \breve{T}_{n}^\text{r}$ it would take the $n$ packets to cross the mincut of the network by routing. Specifically if we assume that all nodes on the source's side of the cut are collapsed into a super source node and all nodes on the sink's side of the cut are collapsed into a super destination node then the network becomes a parallel erasure links network as shown in Figure~\ref{fig:parallelqueues}. Then
\begin{align*}
\stexp T_{n}^\text{r}\geq\stexp \breve{T}_{n}^\text{r}=\frac{n}{C}+D^{\text{r}}_n
\end{align*}
where $D^{\text{r}}_n\in\Omega(\sqrt{n})$ by Theorem~\ref{thm:routing}.
\begin{figure}
\begin{center}
\includegraphics[width=0.8\columnwidth]{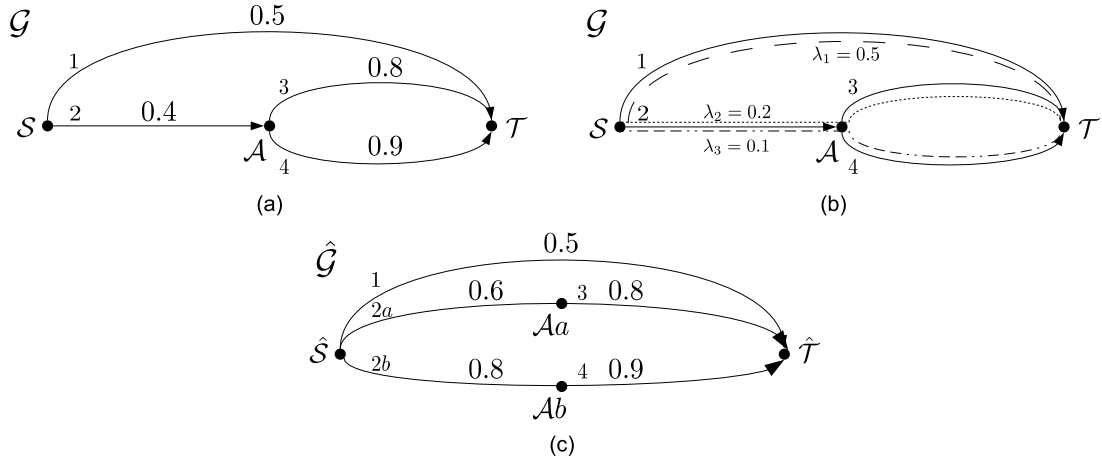}
\end{center}
\caption{(a) Network $\mathcal{G}$ with a single source $\mathcal{S}$, a single destination $\mathcal{T}$, an intermediate node $\mathcal{A}$, and four erasure links $1$, $2$, $3$, and $4$ with probabilities of erasure $0.5$, $0.4$, $0.8$, $0.9$ respectively. (b) The solution of the linear program on network $\mathcal{G}$ would give us three rates $\lambda_1=0.5$, $\lambda_2=0.2$, and $\lambda_3=0.1$. $(c)$ Network $\hat{\mathcal{G}}$ derived from the solution of the linear program}
\label{fig:super_general_network}
\end{figure}

For the case of coding on a network $\mathcal{G}$, for any max-flow subgraph (composed of flows on paths from source $\mathcal{S}$ to destination $\mathcal{T}$), %one can use a variant of the Ford-Fulkerson (augmenting paths) algorithm, e.g., \cite{dean06finite},
%to construct a max-flow subgraph composed of flows on paths from source $\mathcal{S}$ to destination $\mathcal{T}$ in network $\mathcal{G}$. Based on this max-flow subgraph,
one can construct a parallel path network $\hat{\mathcal{G}}$ that requires at least as much time to send the $n$ packets from the source to the destination.

Denote by $\mathcal{F}$ the set of source-sink flows in the max-flow subgraph. For each flow $f\in\mathcal{F}$, let $\lambda_f$ denote the flow rate and let $\mathcal{P}_f$ denote the path of flow $f$. For each node $v\in\mathcal{V}$ in network $\mathcal{G}$, let $\mathcal{K}_v$ denote the set of flows passing through node $v$, where $\mathcal{K}_{\mathcal{S}}$ and $\mathcal{K}_{\mathcal{T}}$ are equal to the sets of all flows in network $\mathcal{G}$. For each edge $e\in\mathcal{E}$ let $\mathcal{F}_e$ denote the set of flows passing through edge $e$. For the example in Figure~\ref{fig:super_general_network}(b),  $\mathcal{F}=\{1,2,3\}$, $\lambda_1=0.5$, $\lambda_2=0.2$, $\lambda_3=0.1$,  $\mathcal{P}_1=\mathcal{S}\rightarrow\mathcal{T}$ for flow $1$, $\mathcal{P}_2=\mathcal{S}\rightarrow\mathcal{A}\rightarrow\mathcal{T}$ for flow $2$, and $\mathcal{P}_3=\mathcal{S}\rightarrow\mathcal{A}\rightarrow\mathcal{T}$ for flow $3$,  $\mathcal{K}_{\mathcal{A}}=\{2,3\}$, and $\mathcal{F}_1=\{1\}$, $\mathcal{F}_2=\{2,3\}$,  $\mathcal{F}_3=\{2\}$, and  $\mathcal{F}_4=\{3\}$.

The process of creating network $\hat{\mathcal{G}}=(\hat{\mathcal{V}},\hat{\mathcal{E}})$ from $\mathcal{G}$ is the following.
\begin{enumerate}
\item For every node $v\in\mathcal{G}$, create a set of nodes $\hat{\mathcal{V}}_v=\big\{\hat{v}_f:\;f\in\mathcal{K}_v\big\}$. The set of nodes $\hat{\mathcal{V}} $ is defined as $\displaystyle\mathop{\bigcup}_{v\in\mathcal{V}}\hat{\mathcal{V}}_v$.

\item The edges of network $\hat{\mathcal{G}}$ are created as follows. For each flow $f\in\mathcal{F}$ and for each edge $(u, v)$ in path $\mathcal{P}_f$ of flow $f$, create an edge in network $\hat{\mathcal{G}}$ from $\hat{u}_f$ to $\hat{v}_f$ with probability of erasure
\begin{align*}
\hat{p}_{(\hat{u}_f,\hat{v}_f)}=1-\frac{\lambda_f}{\displaystyle\mathop{\sum}_{w\in\mathcal{F}_{(u,v)}}\lambda_{w}}(1-p_{(u,v)})
\end{align*}
where $p_{(u,v)}$ is the probability of erasure of link $(u,v)$ in network $\mathcal{G}$. Define a function $H\big((\hat{u}_f,\hat{v}_f)\big)=\frac{\lambda_f}{\sum_{w\in\mathcal{F}_{u,v)}}\lambda_{w}}$.

\item Collapse all nodes of set $\mathcal{V}_{\mathcal{S}}$ to a single node $\hat{\mathcal{S}}$ that denotes the source in network $\hat{\mathcal{G}}$, and collapse all nodes of set $\mathcal{V}_{\mathcal{T}}$ to a single node $\hat{\mathcal{T}}$ that denotes the destination in network $\hat{\mathcal{G}}$.
\end{enumerate}

The process above splits every node $v\in\mathcal{V}$ into $K_v$ separate nodes and splits every edge $e\in\mathcal{E}$ into $|\mathcal{F}_e|$ separate edges. The sum of capacities of all edges that edge $e$ is split into is equal to the capacity of edge $e$. The result of applying this procedure to network $\mathcal{N}$ of Figure~\ref{fig:super_general_network}(b) is shown in Figure~\ref{fig:super_general_network}(c). In network $\hat{\mathcal{G}}$ erasure events on different links are not independent but correlated as follows. For every edge $(u,v)\in\mathcal{E}$, denote by $\mathcal{C}_{(u,v)}=\big\{(\hat{u},\hat{v})\in\hat{\mathcal{E}}: \hat{v}\in\mathcal{K}_u, \hat{m}\in\mathcal{K}_v\big\}$ the set of edges in $\hat{\mathcal{G}}$ that are derived from edge $(u,v)\in\mathcal{E}$. %For example $\mathcal{C}_{(\mathcal{S},\mathcal{A})}=\big\{(\hat{\mathcal{S}},\mathcal{A}a), (\hat{\mathcal{S}},\mathcal{A}b)\big\}$. For all $(u,v)\in\mathcal{E}$, the
The erasures on all edges in set $\mathcal{C}_{(u,v)}$ are not independent but correlated as follows. At each time step, with probability $1-p_{(u,v)}$ one edge in set $\mathcal{C}_{(u,v)}$ succeeds, or all fail with probability $p_{(u,v)}$. In the case of a success, edge $\hat{e}\in\mathcal{C}_{(u,v)}$ is the single successful edge with probability $H_{\hat{e}}$. %The application of the algorithm above on network $\mathcal{G}$ of Figure~\ref{fig:super_general_network}(b) is shown in network $\hat{\mathcal{G}}$ of Figure~\ref{fig:super_general_network}(c).

The time taken $\hat{T}^{\text{c}}_n$ for the $n$ packets to travel through network $\hat{\mathcal{G}}$ by coding is at least as large as the time $T^{\text{c}}_n$ taken in network $\mathcal{G}$,  i.e.
\begin{align}
\stexp T^{\text{c}}_n\leq \stexp \hat{T}^{\text{c}}_n.
\label{eqn:almost_last_inequality}
\end{align}
Indeed network $\hat{\mathcal{G}}$ can be emulated by network $\mathcal{G}$ if each node $v\in\mathcal{G}$ has $|K_v|$ different buffers and packets between different buffers are not mixed. By construction, networks $\mathcal{G}$ and $\hat{\mathcal{G}}$ have the same capacity and since $\hat{\mathcal{G}}$ is a parallel path network, the mincut of network $\hat{\mathcal{G}}$ passes through the worst link of each path. According to Theorem~\ref{thm:coding}
\begin{align}
\stexp\hat{T}^{\text{c}}_n = \frac{n}{C}+\hat{D}^{\text{c}}_n
\label{eqn:last_inequality}
\end{align}
where $\hat{D}^{\text{c}}_n\in\Omega(\sqrt{n})$ when there are multiple worst links in at least one path or $\hat{D}^{\text{c}}_n\in\bigO(1)$ when there is a single worst link at each path. For a single-bottleneck network, by Lemma~\ref{lemma:bottleneck}, one can construct a max-flow subgraph comprising paths each of which has a single worst link, so $\hat{D}^{\text{c}}_n\in\bigO(1)$. Equations (\ref{eqn:almost_last_inequality}), (\ref{eqn:last_inequality}) conclude our proof.
\end{IEEEproof}

\section{Proof of concentration}
\label{sec:concentration}

Here we present a martingale concentration argument. In particular we prove a slightly stronger version of Theorem~\ref{thm:concentration}:
\begin{theorem}[Extended version of Theorem~\ref{thm:concentration}]
The time $T_n^\text{c}$ for $n$ packets to be transmitted from a source to a sink over a  network of erasure channels using network coding is concentrated around its expected value with high probability. In particular for sufficiently large $n$:
\begin{align*}
\Prob[|T_n^\text{c}-\stexp T_n^\text{c}|> \epsilon_{n}] \leq \frac{2C}{n}+\frac{2Cn^{2\delta}}{n^2-n^{1+2\delta}}.
\end{align*}
where $C$ is the capacity of the network and $\epsilon_n$ represents the corresponding deviation and is equal to $\epsilon_{n} = n^{1/2+\delta}\slash C$, $\delta\in (0,1/2)$.
\label{last_theorem}
\end{theorem}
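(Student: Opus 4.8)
The plan is to use Azuma's inequality (the Azuma--Hoeffding martingale bounded-differences inequality) on a suitable Doob martingale associated with the transmission time $T_n^{\text{c}}$. First I would fix the total number of time steps to some horizon $N$ and expose the randomness one time step at a time: at each time step the vector of erasure/success events on all edges of $\mathcal{E}$ is an independent random variable, call it $\omega_t$. Define the Doob martingale $M_t = \stexp[T_n^{\text{c}} \mid \omega_1,\ldots,\omega_t]$, which has $M_0 = \stexp T_n^{\text{c}}$ and converges to $T_n^{\text{c}}$. The key structural observation is a bounded-differences property: changing the erasure pattern on all edges at a single time step $t$ can change the overall completion time $T_n^{\text{c}}$ by at most a constant depending only on the network (essentially one ``round'' of transmissions, bounded by a quantity like $|\mathcal{E}|$ or a diameter-type constant $c$), because under network coding a node forwards random combinations and the effect of one step's erasures propagates in a controlled, Lipschitz way through the DAG of the flow. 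I would state this as a lemma: $|\stexp[T_n^{\text{c}}\mid \omega_1,\ldots,\omega_{t-1},\omega_t=a] - \stexp[T_n^{\text{c}}\mid\omega_1,\ldots,\omega_{t-1},\omega_t=b]| \le c$ for all $a,b$, uniformly in $t$.

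The second ingredient I need is a truncation/horizon argument, because $T_n^{\text{c}}$ is unbounded and Azuma needs a finite (or at least controlled) number of martingale steps. Here I would invoke the fact, established earlier, that $\stexp T_n^{\text{c}} = n/C + O(\sqrt{n})$ (or $O(1)$ for single-bottleneck, but $O(\sqrt n)$ suffices generally), and couple that with a crude tail bound on $T_n^{\text{c}}$: with probability at least $1 - 1/n$ (or similar), the process finishes within $N := n/C + n^{1/2+\delta'}$ steps for a suitable $\delta'$, since each ``mincut slot'' succeeds with constant probability and a Chernoff/negative-binomial tail gives exponential concentration of the number of steps needed to push $n$ innovative packets across the bottleneck. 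Then I apply Azuma to the martingale run for $N$ steps: $\Prob[|M_N - M_0| > \epsilon_n] \le 2\exp(-\epsilon_n^2 / (2 N c^2))$. With $\epsilon_n = n^{1/2+\delta}/C$ and $N = \Theta(n)$, the exponent is $\Theta(n^{2\delta})$, so this term is $o(1/n)$ — in fact super-polynomially small — and the dominant contribution to the stated bound $2C/n + o(1/n)$ comes from the event that $T_n^{\text{c}}$ exceeds the horizon $N$ (and the symmetric lower-tail event), each contributing $\le C/n$, with the precise second-order term $\frac{2Cn^{2\delta}}{n^2 - n^{1+2\delta}}$ emerging from carefully bounding the negative-binomial tail $\Prob[T_n^{\text{c}} > n/C + \epsilon_n]$ via a variance/Chebyshev estimate rather than a Chernoff bound (variance of the relevant negative binomial is $\Theta(n)$, Chebyshev with deviation $\epsilon_n = \Theta(n^{1/2+\delta})$ gives $\Theta(n / n^{1+2\delta}) = \Theta(n^{-2\delta})$, and the exact algebra yields that fraction).

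So the key steps, in order, are: (i) set up the Doob martingale over per-time-step edge-erasure randomness; (ii) prove the bounded-differences lemma with a network-dependent constant $c$, arguing that one time step's worth of erasures perturbs the coding completion time by at most $c$; (iii) establish a tail bound $\Prob[T_n^{\text{c}} > n/C + \epsilon_n] \le \frac{C}{n} + (\text{small})$ via the negative-binomial interpretation of crossing the mincut and a Chebyshev estimate; (iv) apply Azuma on the truncated horizon to get a super-polynomially small contribution from the martingale deviation; (v) combine via a union bound over the horizon-overshoot event and the martingale-deviation event, and likewise for the lower tail, to get $\Prob[|T_n^{\text{c}} - \stexp T_n^{\text{c}}| > \epsilon_n] \le \frac{2C}{n} + \frac{2Cn^{2\delta}}{n^2 - n^{1+2\delta}}$.

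The main obstacle I expect is step (ii), the bounded-differences estimate: it is intuitively clear that network coding is ``Lipschitz'' in the per-step erasure pattern, but making this rigorous requires a careful coupling argument showing that two executions of the coding scheme differing in exactly one time step's erasures can be coupled so that the ranks at the destination differ by a bounded amount for all subsequent times, hence the completion times differ by at most a constant. One has to rule out a cascade in which a single early erasure compounds; the resolution is that missing one innovative packet at a node can be ``made up'' within a bounded number of extra successful transmissions on that edge in the worst case, and the destination needs only $n$ independent combinations total, so the deficit never amplifies — it propagates additively, not multiplicatively, through the fixed DAG. A secondary subtlety is matching the precise constant in the stated second-order term $\frac{2Cn^{2\delta}}{n^2 - n^{1+2\delta}}$, which pins down that the overshoot tail must be controlled by Chebyshev on a negative binomial with mean $n/C$ and variance of order $n$ rather than by a looser Chernoff bound.
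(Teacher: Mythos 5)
Your plan differs from the paper's in a basic way: the paper never forms a Doob martingale for $T_n^\text{c}$ at all, precisely because (as it remarks) $T_n^\text{c}$ is not naturally a function of a bounded number of independent variables. Instead it applies Azuma--Hoeffding to the destination rank $R_t=g(z_{11},\ldots,z_{tL})$ at a \emph{fixed} time $t$, which is trivially $1$-Lipschitz in the $tL$ per-slot link states, obtaining $\Prob(|R_t-\stexp R_t|\geq\varepsilon_t)\leq 1/t$ with $\varepsilon_t=\sqrt{\tfrac{tL}{2}\ln(2t)}$, and then inverts through $T_n^\text{c}=\arg_t(R_t=n)$: choosing $t_n^u=(n+n^{1/2+\delta'})/C$ and $t_n^l=(n-n^{1/2+\delta'})/C$ (legitimate because $\stexp R_t=Ct-O(\sqrt n)$, which is imported from Theorem~\ref{thm:super_general}), it gets $\Prob(T_n^\text{c}\geq t_n^u)\leq 1/t_n^u$ and $\Prob(T_n^\text{c}\leq t_n^l)\leq 1/t_n^l$, and the entire stated bound is just $\frac{C}{n-n^{1/2+\delta'}}+\frac{C}{n+n^{1/2+\delta'}}$ rewritten --- not a leading term from an overshoot event plus a Chebyshev correction, as you posit.

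The genuine gap in your route is step (iii), the horizon bound $\Prob[T_n^\text{c}>n/C+n^{1/2+\delta'}]\lesssim C/n$. You propose to get it from a Chernoff/negative-binomial (or Chebyshev) estimate on the number of successes across the min-cut, but that argument points the wrong way: the number of min-cut successes by time $t$ only \emph{upper-bounds} the rank delivered to the destination, so its concentration controls the event that the transmission finishes too early (the lower tail), not the event $T_n^\text{c}>N$. To bound the upper tail you must show that the whole network --- including propagation from the source to the bottleneck and onward to $\mathcal{T}$ --- actually delivers $n$ innovative packets within $N$ slots with probability $1-O(1/n)$, and at deviation scale $n^{1/2+\delta}$ this is essentially half of the theorem you are trying to prove; it is exactly what the paper's combination of the $R_t$-martingale bound with $\stexp R_t=Ct-O(\sqrt n)$ supplies. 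Without it, the truncation that legitimizes applying Azuma to your Doob martingale over $N=\Theta(n)$ steps is unsupported, so the union bound in step (v) does not close. Separately, your step (ii) (the averaged-Lipschitz bound for conditional expectations of $T_n^\text{c}$ under a one-slot change) is plausible but requires a genuine coupling/monotonicity argument that the paper deliberately avoids by working with $R_t$, whose Lipschitz constant is immediate; if you want to keep your route, that lemma must be proved, not asserted, and the cleanest fix for (iii) is to fall back on the paper's rank-concentration argument --- at which point the direct martingale on $T_n^\text{c}$ becomes unnecessary.
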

\begin{IEEEproof}
The main idea of the proof is to use the method of Martingale bounded differences \cite{mitzenmacher05probability}. This method works as follows: first we show that the random variable we want to show is concentrated is a function of a finite set of independent random variables. Then we show that this function is Lipschitz with respect to these random variables, \emph{i.e.} it cannot change its value too much if only one of these variables is modified. Using this function we construct the corresponding Doob martingale and use the Azuma-Hoeffding~\cite{mitzenmacher05probability} inequality to establish concentration. See also \cite{azuma1, azuma2} for related concentration results using similar martingale techniques. Unfortunately however this method does not seem to be directly applicable to $T^\text{c}_n$ because it cannot be naturally expressed as a function of a \emph{bounded number} of independent random variables. We use the following trick of showing concentration for another quantity first and then linking that concentration to the concentration of $T^\text{c}_n$.

Specifically, we define $R_t$ to be the number of innovative (linearly independent) packets received at the destination node $\mathcal{T}$ after $t$ time steps. $R_t$ is linked with $T^\text{c}_n$ through the equation:
\begin{equation}
T_n^\text{c} =\displaystyle\mathop{\arg}_t(R_t = n)
\label{Relate_T_R}.
\end{equation}
The number of received packets is a well defined function of the link states at each time step. If there are $L$ number of links in network $\mathcal{G}$, then:
\begin{equation*}
R_t = g(z_{11},...,z_{1L},\ldots,z_{t1},...,z_{tL}).
\label{eqn:function_of_R}
\end{equation*}
The random variables $z_{ij}$,$1\leq i \leq t$ and $1\leq j \leq L$, are equal to $0$ or $1$ depending on whether link $j$ is OFF or ON at time $i$. If a packet is sent on a link that is ON, it is received successfully; if sent on a link that is OFF, it is erased. It is clear that this function satisfies a bounded Lipschitz condition with a bound equal to $1$:
\begin{align*}
|g(z_{11},...,z_{1L},..., z_{ij},...,z_{t1},...,z_{tL}) - \notag\\
g(z_{11},...,z_{1L},...,z_{ij}^{'} ,...,z_{t1},...,z_{tL})| \leq 1.
\label{eqn:g_inequality}
\end{align*}
This is because if we look at the history of all the links failing or succeeding at all the $t$ time slots, changing one of these link states in one time slot can at most influence the received rank by one.
We note that we assume that coding is performed over a very large field to ensure that every packet that could potentially be innovative due to connectivity, indeed is.

Using the Azuma-Hoeffding inequality (see the Appendix Theorem~\ref{thm:Azuma_Hoeffding_inequality}) on the Doob martingale constructed by $R_t = g(z_{11},...,z_{1L},...,z_{t1},...,z_{tL})$ we get following the concentration result:
\begin{prop}
The number of received innovative packets $R_t$ is a random variable concentrated around its mean value:
\begin{equation}
\Prob(|R_t-\stexp R_t|\geq \varepsilon_t)\leq \frac{1}{t}\hspace{2.5mm}\text{where}\hspace{2.5mm} \varepsilon_t\doteq \sqrt{\frac{t L}{2} \ell n(2t)}.
\label{eqn:concentration_of_R_2}
\end{equation}
\label{prop:concentration_of_Rt}
\end{prop}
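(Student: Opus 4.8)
The proof proposal is to apply the Azuma--Hoeffding inequality directly to the Doob martingale associated with the function $g$, exploiting the bounded--difference property established just above the proposition statement. First I would fix the time horizon $t$ and regard $R_t = g(z_{11},\dots,z_{1L},\dots,z_{t1},\dots,z_{tL})$ as a function of the $N := tL$ independent Bernoulli random variables $z_{ij}$. The bounded Lipschitz condition already displayed gives $|g - g'| \le 1$ whenever a single coordinate $z_{ij}$ is changed, so each of the $N$ martingale differences of the Doob martingale $X_0 = \stexp R_t, X_1, \dots, X_N = R_t$ is bounded in absolute value by $c_k = 1$. Azuma--Hoeffding (Theorem~\ref{thm:Azuma_Hoeffding_inequality} in the Appendix) then yields
\begin{equation*}
\Prob\bigl(|R_t - \stexp R_t| \ge \lambda\bigr) \le 2\exp\!\left(-\frac{\lambda^2}{2\sum_{k=1}^{N} c_k^2}\right) = 2\exp\!\left(-\frac{\lambda^2}{2tL}\right).
\end{equation*}

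Second, I would choose the deviation $\lambda = \varepsilon_t$ to make the right-hand side equal to $1/t$. Setting $2\exp(-\varepsilon_t^2/(2tL)) = 1/t$ and solving gives $\varepsilon_t^2 = 2tL\ln(2t)$, i.e. $\varepsilon_t = \sqrt{\tfrac{tL}{2}\,\ell n(2t)}$ up to the constant bookkeeping in the exponent — which matches the displayed expression in~(\ref{eqn:concentration_of_R_2}) (here $\ell n$ denotes the natural logarithm). This is essentially a one-line substitution once the martingale bound is in place, so the only care needed is to confirm that the ``$2$'' in front of the exponential is correctly absorbed, and that the factor inside the square root is written consistently with the paper's convention.

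The genuine content of the argument — and the only place where anything nontrivial happens — is the justification that the Doob martingale is well defined and that the bounded--difference constant is exactly $1$, both of which the excerpt has already argued: $R_t$ is a deterministic function of the finitely many link states $\{z_{ij}\}_{i\le t, j\le L}$, and flipping one link state at one time slot changes the rank received at $\mathcal{T}$ by at most one (this uses the large--field assumption so that ``potentially innovative'' always means ``innovative''). Given those facts, I expect the main obstacle to be purely expository rather than mathematical: one must be slightly careful that the number of underlying independent variables is $tL$ and not, say, $t$ or $L$ alone, since it is the product that enters the concentration exponent and hence forces the $\sqrt{t\log t}$ scale of $\varepsilon_t$. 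I would therefore state explicitly that the martingale has $tL$ steps, invoke Azuma--Hoeffding with all increments bounded by $1$, and then read off the stated bound.
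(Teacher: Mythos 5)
Your overall route is exactly the paper's: build the Doob martingale of $R_t$ over the $N=tL$ independent link-state variables $z_{ij}$, use the $1$-Lipschitz property to bound the martingale differences, and invoke Azuma--Hoeffding. However, the numbers as you wrote them do not close, and the discrepancy is not mere ``constant bookkeeping.'' You quoted the weaker two-sided form $\Prob(|R_t-\stexp R_t|\ge\lambda)\le 2\exp\bigl(-\lambda^2/(2\sum_k c_k^2)\bigr)=2\exp\bigl(-\lambda^2/(2tL)\bigr)$; solving $2\exp(-\varepsilon_t^2/(2tL))=1/t$ gives $\varepsilon_t=\sqrt{2tL\ln(2t)}$, which is twice the proposition's $\varepsilon_t=\sqrt{\tfrac{tL}{2}\ln(2t)}$. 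Conversely, if you plug the proposition's $\varepsilon_t$ into your inequality you only get
\begin{equation*}
2\exp\Bigl(-\tfrac{1}{4}\ln(2t)\Bigr)=2\,(2t)^{-1/4},
\end{equation*}
which is polynomially weaker than the claimed $1/t$. So with the inequality as you stated it, the proposition's displayed $\varepsilon_t$ is not proved.

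The fix is to use the sharper form that the paper actually states in its appendix (Theorem 12.6 of Mitzenmacher--Upfal): if the increments satisfy $B_k\le X_k-X_{k-1}\le B_k+d_k$, i.e.\ each lies in an interval of length $d_k$, then $\Prob(|X_N-X_0|\ge\lambda)\le 2\exp\bigl(-2\lambda^2/\sum_k d_k^2\bigr)$. For the Doob martingale of a $1$-Lipschitz function of independent variables the increments do lie in intervals of length $d_k=1$ (not merely $|X_k-X_{k-1}|\le 1$), so the bound becomes $2\exp\bigl(-2\lambda^2/(tL)\bigr)$, and substituting $\lambda=\varepsilon_t=\sqrt{\tfrac{tL}{2}\ln(2t)}$ gives exactly $2e^{-\ln(2t)}=1/t$, as claimed. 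With that one substitution your argument coincides with the paper's proof; everything else in your proposal (the identification of the $tL$ underlying variables, the Lipschitz constant $1$, the large-field caveat) matches.
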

\begin{IEEEproof}
Given in Appendix~\ref{appen:concentration}.
\end{IEEEproof}

Using this concentration and the relation (\ref{Relate_T_R}) between $T_n^\text{c}$ and $R_t$ we can show that deviations of the order $\varepsilon_t \doteq \sqrt{\frac{t L}{2} \ell n(2t)}$ for $R_t$ translate to deviations of the order of $\epsilon_{n} = n^{1/2+\delta}/C$ for $T_n^\text{c}$. In Theorem~\ref{last_theorem} smaller values $\delta$ give tighter bounds that hold for larger $n$. Define the events:
\begin{equation*}
H_t=\{|R_t-\stexp R_t|< \varepsilon_t\}
\label{eqn:H_event}
\end{equation*}
and
\begin{equation*}
\overline{H}_t=\{|R_{t}-\stexp R_t| \geq \varepsilon_t\}
\label{eqn:H_hat_event_1}
\end{equation*}
and further define $t^u_n$ ($u$ stands for upper bound) to be some $t$, ideally the smallest $t$, such that $\stexp R_t -\varepsilon_t\geq n$ and $t^l_n$ ($l$ stands for lower bound) to be some $t$, ideally the largest $t$, such that $\stexp R_t+\varepsilon_t\leq n$. Then we have:
\begin{eqnarray*}
\Prob(T_n^\text{c}\geq t^u_n) &=& \Prob(T_n^\text{c}\geq t^u_n|H_{t^u_n}) \cdot \Prob(H_{t^u_n})\notag\\
                           &+& \Prob(T_n^\text{c}\geq t^u_n|\overline{H}_{t^u_n})\cdot \Prob(\overline{H}_{t^u_n})
\end{eqnarray*}
where:
\begin{itemize}
  \item $\Prob(T_n^\text{c}\geq t^u_n|H_{t^u_n}) = 0$ since at time $t = t^u_n$ the destination has already received more than $n$ innovative packets. Indeed given that $H_{t^u_n}$ holds: $n\leq \stexp R_{t^u_n} - \varepsilon_{t^u_n} < R_{t^u_n}$ where the first inequality is due to the definition of $t^u_n$.
  \item $\Prob(H_{t^u_n})\leq 1$
  \item $\Prob(T_n^\text{c}\geq t^u_n|\overline{H}_{t^u_n}) \leq 1$
  \item $\Prob(\overline{H}_{t^u_n})\leq \frac{1}{t^u_n}$ due to equation (\ref{eqn:concentration_of_R_2}).
\end{itemize}
Therefore:
\begin{equation}
\Prob(T_n^\text{c}\geq t^{u}_{n})\leq \frac{1}{t^{u}_{n}}.
\label{ineq:t_u_n}
\end{equation}

Similarly:
\begin{eqnarray*}
\Prob(T_n^\text{c}\geq t^l_n) &=& \Prob(T_n^\text{c}\geq t^l_n|H_{t^l_n}) \cdot \Prob(H_{t^l_n})\notag\\
                           &+& \Prob(T_n^\text{c}\geq t^l_n|\overline{H}_{t^l_n})\cdot \Prob(\overline{H}_{t^l_n})
\end{eqnarray*}
where:
\begin{itemize}
  \item $\Prob(T_n^\text{c}\leq t^l_n|H_{t^l_n}) = 0$  since at time $t = t^l_n$ the destination has already received less than $n$ innovative packets. Indeed given that $H_{t^l_n}$ holds: $R_{t^u_n} < \stexp R_{t^u_n} + \varepsilon_{t^u_n} < n$ where the last inequality is due to the definition of $t^l_n$.
  \item $\Prob(H_{t^l_n})\leq 1$
  \item $\Prob(T_n^\text{c}\leq t^l_n|\overline{H}_{t^l_n}) \leq 1$
  \item $\Prob(\overline{H}_{t^l_n})\leq \frac{1}{t^l_n}$ due to equation (\ref{eqn:concentration_of_R_2}).
\end{itemize}
Therefore:
\begin{equation}
\Prob(T_n^\text{c}\leq t^l_n)\leq \frac{1}{t^l_n}.
\label{ineq:t_l_n}
\end{equation}

Equations (\ref{ineq:t_u_n}) and (\ref{ineq:t_l_n}) show that the random variable $T_n^\text{c}$ representing the time required for $n$ packets to travel across network $\mathcal{G}$ exhibits some kind of concentration between $t^l_n$ and $t^u_n$, which are both functions of $n$.
As shown in Lemma~\ref{lemma:expressing_t_up_and_t_l} in Appendix~\ref{appen:concentration}, for large enough $n$ a legitimate choice for $t^l_n$ and $t^u_n$ is the following:
\begin{equation}
t^u_n=(n+n^{1/2+\delta'})/C,\text{ } \delta' \in (0,1/2)
\label{t_u_n_delta}
\end{equation}
\begin{equation}
t^l_n=(n-n^{1/2+\delta'})/C,\text{ } \delta' \in (0,1/2)
\label{t_l_n_delta}
\end{equation}

From both (\ref{ineq:t_u_n}) and (\ref{ineq:t_l_n}):
\begin{eqnarray}
\Prob(t^l_n\leq T_n^\text{c}\leq t^u_n)&=&1-\Prob(T_n^\text{c} \leq t^l_n)-\Prob(T_n^\text{c}\geq t^u_n)\notag\\
&\geq& 1-\frac{1}{t^l_n}-\frac{1}{t^u_n}
\label{ineq:Prob_t_n_1}
\end{eqnarray}
and by substituting in (\ref{ineq:Prob_t_n_1}) the $t^u_n$, $t^l_n$ from equations (\ref{t_u_n_delta}) and (\ref{t_l_n_delta}) we get:
\begin{eqnarray*}
\Prob(-\frac{n^{1/2+\delta'}}{C}\leq T_n^\text{c}-\frac{n}{C}\leq \frac{n^{1/2+\delta'}}{A})\geq 1 -\notag\\
\frac{C}{n-n^{1/2+\delta'}}-\frac{C}{n+n^{1/2+\delta'}}
\end{eqnarray*}
and since $\stexp T_n^\text{c} = \frac{n}{C}+\bigO(\sqrt{n})$ we have:
\begin{equation*}
\Prob(|T_n^\text{c}-\stexp T_n^\text{c}|\leq \frac{n^{1/2+\delta}}{C})\geq 1-\frac{2C}{n}-\frac{2Cn^{2\delta}}{n^2-n^{1+2\delta}}
\end{equation*}
or
\begin{equation*}
\Prob(|T_n^\text{c}-\stexp T_n^\text{c}| > \frac{n^{1/2+\delta}}{C})\leq \frac{2C}{n}+\frac{2Cn^{2\delta}}{n^2-n^{1+2\delta}}
\end{equation*}
where $\delta > \delta'$ and this concludes the proof.
\end{IEEEproof}

\begin{appendices}
\section{Proof of Proposition~\ref{prop:The_markov_chain}}
\label{appen:Proof_of_stochastically_increasing}

\begin{dfn}
A binary relation $\preceq$ defined on a set $P$ is called a preorder if it is reflexive and transitive, i.e. $\forall a, b, c \in P$:
\begin{eqnarray}
&a\preceq a &\text{ (reflexivity)}\\
&(a\preceq b)\wedge(b\preceq c) \Rightarrow a\preceq c &\text{ (transitivity)}
\end{eqnarray}
\end{dfn}

\begin{dfn}
On the set $\mathbb{N}\hspace{0.8mm}^{\ell-1}$ of all integer $(\ell-1)$-tuples we define the regular preorder $\preceq$ that is $\forall a,b\in\mathbb{N}\hspace{0.8mm}^{\ell-1}$ $a\preceq b$ iff $a_1\leq b_1,\ldots,a_{\ell-1}\leq b_{\ell-1}$ where $a=(a_1,\ldots,a_{\ell-1})$ and $b=(b_1,\ldots,b_{\ell-1})$. Similarly we can define the preorder $\succeq$.
\end{dfn}

\begin{dfn}
\label{dfn:usual_stochastic_order}
A random vector $X\in \mathbb{N}\hspace{0.8mm}^{\ell-1}$ is said to be stochastically smaller in the usual stochastic order than a random vector  $Y \in \mathbb{N}\hspace{0.8mm}^{\ell-1}$, (denoted by $X \preceq_{\text{st}} Y$) if: $\forall \omega \in \mathbb{N}\hspace{0.8mm}^{\ell-1}$, $\Prob(X\succeq \omega)\leq \Prob(Y\succeq \omega)$.
\end{dfn}

\begin{dfn}
\label{dfn:increasing_function}
A family of random variables $\{Y_n\}_{n\in\mathbb{N}}$ is called stochastically increasing ($\preceq_{\text{st}}$-increasing) if $Y_k \preceq_{\text{st}} Y_n$ whenever $k\leq n$.
\end{dfn}
\begin{proof}[Proof of Proposition~\ref{prop:The_markov_chain}]
Markov process $\{Y_n, n\geq 1\}$,  is a multidimensional process on $E=\mathbb{N}\hspace{0.8mm}^{\ell-1}$ representing the number of innovative packets at nodes $N_1,\ldots,N_{\ell-1}$ when packet $n$ arrives at $N_1$. To prove that the Markov process $\{Y_n, n\geq 1\}$ is stochastically increasing we introduce two other processes $\{X_n, n\geq 1\}$ and $\{Z_n, n\geq 1\}$ having the same state space and transition probabilities as $\{Y_n, n\geq 1\}$.

More precisely, Markov process $\{Y_n, n\geq 1\}$ is effectively observing the evolution of the number of innovative packets present at every node of the tandem queue. We define the two new processes $\{X_n, n\geq 1\}$ and $\{Z_n, n\geq 1\}$ to observe the evolution of two other tandem queues having the same link failure probabilities as the queue of $\{Y_n, n\geq 1\}$.

\begin{figure*}[!ht]
\begin{center}
\includegraphics[width=1.0\columnwidth]{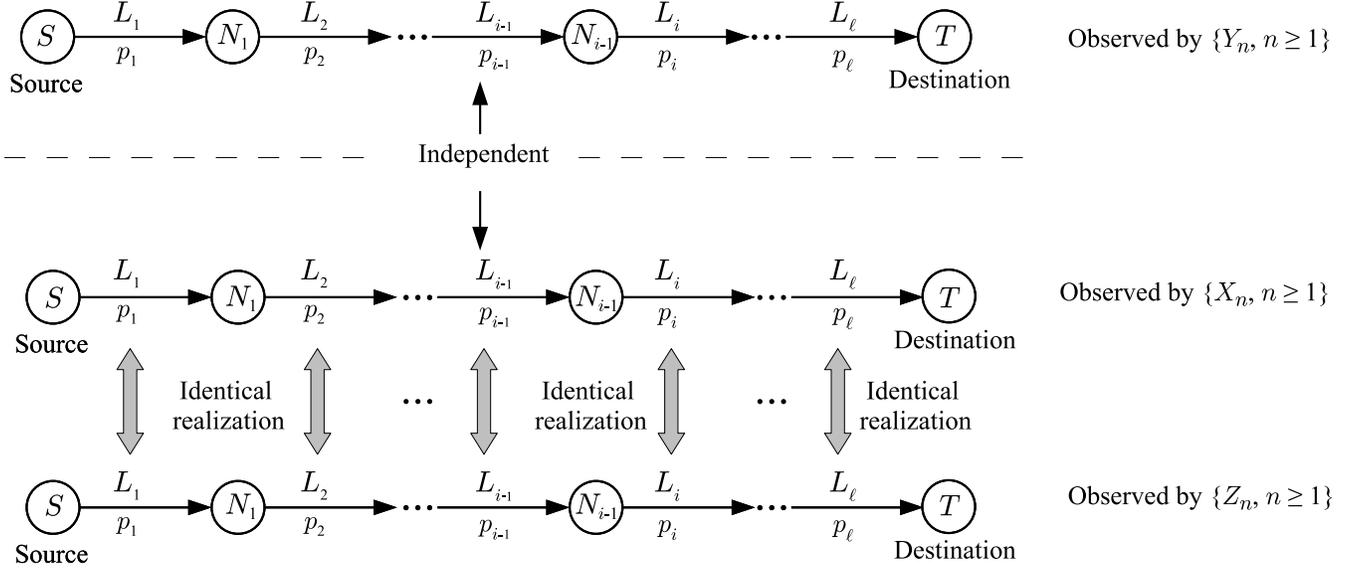}
\end{center}
\caption{Multi-hop network with the corresponding Markov chains}
\label{fig:Multi_hop_network_observed}
\end{figure*}

As seen in Figure~\ref{fig:Multi_hop_network_observed}, at each time step and at every link, the queues for $\{X_n, n\geq 1\}$ and $\{Z_n, n\geq 1\}$ either both succeed or a fail together. Moreover the successes or failures on each link on the queues observed by $\{X_n, n\geq 1\}$ and $\{Z_n, n\geq 1\}$ are independent of the successes or failures on the queue observed by $\{Y_n, n\geq 1\}$. Formally the joint process $\{(X_n, Z_n), n\geq 1\}$ constitute a coupling meaning that marginally each one of $\{X_n, n\geq 1\}$ and $\{Z_n, n\geq 1\}$ have the transition matrix $\Prob_Y$ of $\{Y_n, n\geq 1\}$. If Markov processes $\{X_n, n\geq 1\}$ and $\{Z_n, n\geq 1\}$ have different initial conditions then the following relation holds:
\begin{eqnarray}
\label{eqn:corollary_for_markov_chain}
X_1\preceq Z_1 \Rightarrow X_n\preceq Z_n
\end{eqnarray}

The proof of the above statement is very similar to the proof of Proposition 2 in \cite{aggregate}. Essentially relation (\ref{eqn:corollary_for_markov_chain}) states that since at both queues all links succeed or fail together the queue that holds more packets at each node initially ($n=1$) will also hold more packets subsequently ($n > 1$) at every node.

The initial state $Y_1$ of Markov process $\{Y_n, n\geq 1\}$ is state $\alpha = (1,0,\ldots,0)$ that is also called the minimal state since any other state is greater than the minimal state. To prove Proposition~\ref{prop:The_markov_chain} we set both processes $\{Y_n, n\geq 1\}$ and $\{X_n, n\geq 1\}$ to start from the minimal state ($Y_1\displaystyle \mathop{=}^\mathcal{D}\delta_\alpha, X_1\displaystyle \mathop{=}^\mathcal{D}\delta_\alpha\text{ where }\displaystyle\mathop{=}^\mathcal{D}$ means equality in distribution), whereas process $\{Z_n, n\geq 1\}$ has initial distribution $\mu$ that is the distribution of process $\{Y_n, n\geq 1\}$ after $(n-k)$ steps $(\mu=\Prob_Y^{n-k}\delta_\alpha\text{ and } Z_1\displaystyle \mathop{=}^\mathcal{D}\mu$). Then for every $\omega$ in the state space of $\{Y_n, n\geq 1\}$ we get:
\begin{eqnarray}
\label{eqn:lemma_condition_for_stochastic_ordering_1}
\Prob(X_n\succeq \omega)=\Prob(Y_n\succeq \omega)=\Prob(Z_k\succeq \omega)
\end{eqnarray}
where the first equality holds since the two processes have the same distribution--both start from the minimal element and have the same transition matrices--and the second equality holds since
\begin{eqnarray*}
\label{eqn:lemma_condition_for_stochastic_ordering_2}
\displaystyle Z_k\mathop{=}^\mathcal{D} \Prob_Y^k \mu\equiv \Prob_Y^k (\Prob_Y^{n-k}\delta_\alpha)=\Prob_Y^n\delta_\alpha\mathop{=}^\mathcal{D}Y_n.
\end{eqnarray*}
Moreover due to the definition of the minimal element, $X_1\preceq Z_1$ and using (\ref{eqn:corollary_for_markov_chain}) we get $X_n \preceq Z_n$. Therefore
\begin{eqnarray}
\label{eqn:lemma_condition_for_stochastic_ordering_3}
\Prob(Z_k\succeq \omega)\geq \Prob(X_k \succeq \omega)=\Prob(Y_k\succeq \omega).
\end{eqnarray}
The last equality follows from the fact that the two distributions have the same law. Equations (\ref{eqn:lemma_condition_for_stochastic_ordering_1}) and  (\ref{eqn:lemma_condition_for_stochastic_ordering_3}) conclude the proof.
\end{proof}

\section{Proof of Proposition~\ref{prop:concentration_of_Rt}}
\label{appen:concentration}

\begin{dfn}
A sequence of random variables $V_0,V_1,\ldots$ is said to be a \textbf{martingale with respect to} another sequence $U_0,U_1,\ldots$ if, for all $n\geq 0$, the following conditions hold:
\begin{itemize}
  \item $\stexp[|V_n|]<\infty $
  \item $\stexp[V_{n+1}|U_0,\ldots,U_n]=V_n$
\end{itemize}
A sequence of random variables $V_0,V_1,\dots$ is called \textbf{martingale} when it is a martingale with respect to itself. That is:
\begin{itemize}
  \item $\stexp[|V_n|]<\infty$
  \item $\stexp[V_{n+1}|V_0,...,V_n]=V_n$
\end{itemize}
\end{dfn}

\begin{theorem}
(Azuma-Hoeffding Inequality): Let $X_{0}$, $X_{1}$,...,$X_{n}$ be a martingale such that
\[B_{k}\leq X_{k}-X_{k-1} \leq B_{k}+d_{k}\]
for some constants $d_{k}$ and for some random variables $B_{k}$ that may be a function of $X_{0},...,X_{k-1}$. Then for all $t\geq 0$ and any $\lambda > 0$,
\[\Prob(|X_{t}-X_{0}|\geq \lambda)\leq 2\exp\left(-\frac{2\lambda^2}{\sum_{i=1}^{t}d_{i}^{2}}\right)\]
\label{thm:Azuma_Hoeffding_inequality}
\end{theorem}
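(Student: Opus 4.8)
The plan is to prove the bound by the exponential moment (Chernoff) method applied to the martingale increments, with Hoeffding's lemma as the technical core. Write $Y_k = X_k - X_{k-1}$ for $1\le k\le t$, so that $X_t-X_0 = \sum_{k=1}^t Y_k$; the martingale property gives $\stexp[Y_k\mid X_0,\dots,X_{k-1}] = 0$, while the hypothesis gives $B_k \le Y_k \le B_k+d_k$ with $B_k$ a function of $X_0,\dots,X_{k-1}$ and $d_k$ a constant. For any $s>0$, Markov's inequality applied to $e^{s(X_t-X_0)}$ yields
\[
\Prob(X_t-X_0 \ge \lambda) \;\le\; e^{-s\lambda}\,\stexp\!\big[e^{s(X_t-X_0)}\big],
\]
so the problem reduces to bounding the moment generating function of $\sum_k Y_k$.

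The MGF I would bound by peeling off increments one at a time. Conditioning on $X_0,\dots,X_{t-1}$,
\[
\stexp\!\big[e^{s(X_t-X_0)}\big] \;=\; \stexp\!\Big[e^{s\sum_{k=1}^{t-1}Y_k}\;\stexp\big[e^{sY_t}\mid X_0,\dots,X_{t-1}\big]\Big].
\]
The inner conditional expectation is controlled by Hoeffding's lemma: a random variable $Z$ with $\stexp Z=0$ supported in an interval of length $d$ satisfies $\stexp[e^{sZ}]\le e^{s^2 d^2/8}$, which one proves by bounding the convex function $z\mapsto e^{sz}$ on the interval by its chord, taking expectations, and then a second-order Taylor estimate in $s$ of the resulting function. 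Applying this conditionally with $Z=Y_t$ — mean zero given the past and supported in $[B_t,B_t+d_t]$, an interval of the constant length $d_t$ — gives $\stexp[e^{sY_t}\mid X_0,\dots,X_{t-1}] \le e^{s^2 d_t^2/8}$, which is deterministic and can be pulled out. Iterating down from $k=t$ to $k=1$ produces $\stexp[e^{s(X_t-X_0)}] \le \exp\!\big(\tfrac{s^2}{8}\sum_{k=1}^t d_k^2\big)$.

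Combining the two displays gives $\Prob(X_t-X_0\ge\lambda) \le \exp\!\big(-s\lambda + \tfrac{s^2}{8}\sum_{k=1}^t d_k^2\big)$ for every $s>0$; choosing $s = 4\lambda\big/\sum_{k=1}^t d_k^2$ minimizes the exponent and yields the one-sided estimate $\exp\!\big(-2\lambda^2\big/\sum_{k=1}^t d_k^2\big)$. Running the same argument on the martingale $-X_0,-X_1,\dots$, whose increments $-Y_k$ lie in $[-B_k-d_k,-B_k]$ (again an interval of length $d_k$), bounds $\Prob(X_t-X_0 \le -\lambda)$ by the same quantity; a union bound over the two tails contributes the factor $2$ and gives the stated inequality.

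The main obstacle is Hoeffding's lemma itself — the estimate $\stexp[e^{sZ}]\le e^{s^2d^2/8}$ for a mean-zero variable on an interval of length $d$ — since the chord bound and the Taylor argument are the only genuinely nontrivial ingredients; everything else is bookkeeping with conditional expectations and the Chernoff optimization. One point worth stating explicitly is that the randomness of $B_k$ causes no trouble: Hoeffding's lemma needs only the deterministic length $d_k$ of the containing interval together with the conditional mean-zero property, and both hold here by hypothesis and the martingale property.
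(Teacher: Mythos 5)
Your proposal is correct: the conditional Hoeffding-lemma bound on each increment (using only the deterministic length $d_k$ of the interval $[B_k,B_k+d_k]$ and the conditional mean-zero property), the iterated MGF bound, the Chernoff optimization $s=4\lambda/\sum_k d_k^2$, and the two-sided union bound together give exactly the stated inequality. The paper itself does not prove this theorem but simply cites Theorem 12.6 of Mitzenmacher and Upfal, and your argument is essentially the standard proof given in that reference, so the two coincide in substance.
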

\begin{proof}
Theorem 12.6 in \cite{mitzenmacher05probability}
\end{proof}

\begin{proof}[Proof of Proposition~\ref{prop:concentration_of_Rt}]
The proof is based on the fact that from a sequence of random variables $U_1,U_2,\ldots,U_n$ and any function $f$ it's possible to define a new sequence $V_0,\ldots,V_n$
\[\left\{
  \begin{array}{ll}
    V_0=\stexp[f(U_1,\ldots,U_n)]\\
    V_i= \stexp[f(U_1,\ldots,U_n)|U_1,\ldots,U_i]
  \end{array}
\right.\]
that is a martingale (\textit{Doob} martingale). Using the identity $\stexp[V|W]=\stexp[\stexp[V|U,W]| W]$ it's easy to verify that the above sequence $V_0,\ldots,V_n$ is indeed a martingale. Moreover if function $f$ is \textit{c-Lipschitz} and $U_1,\ldots,U_n$ are independent it can be proved that the differences $V_i-V_{i-1}$ are restricted within bounded intervals \cite{mitzenmacher05probability} (pages 305-306).

Function $R_t=g(z_{11},...,z_{tL})$ has a bounded expectation, is \textit{1-Lipschitz} and the random variables $z_{ij}$ are independent and therefore all the requirements of the above analysis hold. Specifically by setting
\begin{align*}
G_{h}=\stexp[g(z_{11},...,z_{tL})&\left|\right.\underbrace{z_{11},...,z_{kr}}]\\
                                 &\text{$h$-terms in total}
\end{align*}
we can apply the Azuma-Hoeffding inequality on the $G_{0},...,G_{tL}$ martingale and we get the following concentration result
\begin{eqnarray}
\Prob[|G_{tL}-G_{0}|\geq \lambda]= \Prob[|R_{t}-\stexp[R_{t}]|\geq \lambda] \leq 2\exp \{ -\frac{2\lambda^2}{tL} \}.
\label{eqn:intermediate_concentration_result}
\end{eqnarray}
The equality above holds since
\begin{itemize}
  \item $G_{0}    =\stexp[R_{t}]$
  \item $G_{tL}=R_{t}\text{ (the random variable itself)}$
\end{itemize}
and by substituting on (\ref{eqn:intermediate_concentration_result})  $\lambda$ with $\varepsilon_t\doteq\sqrt{\frac{tL}{2}\ell n(2t)}$
\begin{equation*}
\Prob[|R_{t}-\stexp[R_{t}]|\geq \varepsilon_t] \leq \frac{1}{t}
\end{equation*}
\end{proof}

\begin{lemma}
%When the expected number of innovative packets $\stexp R_t$ received at the destination  by time $t$ is given by $\stexp R_t = C t- r(t)$ where $r(t)\in\bigO(\sqrt{n})$, then a
A legitimate choice for $t^u_n$ and $t^l_n$ is:
\begin{eqnarray*}
t^u_n=(n+n^{1/2+\delta'})/C,\text{ } \delta' \in (0,1/2)\\
t^l_n=(n-n^{1/2+\delta'})/C,\text{ } \delta' \in (0,1/2)
\end{eqnarray*}
\label{lemma:expressing_t_up_and_t_l}
\end{lemma}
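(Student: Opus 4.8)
The plan is to verify, by a direct computation, that the proposed $t^u_n$ and $t^l_n$ meet their two defining requirements for all sufficiently large $n$, namely $\stexp R_{t^u_n}-\varepsilon_{t^u_n}\geq n$ and $\stexp R_{t^l_n}+\varepsilon_{t^l_n}\leq n$, where $\varepsilon_t=\sqrt{\frac{tL}{2}\ln(2t)}$ as in~(\ref{eqn:concentration_of_R_2}). Since $t^u_n,t^l_n$ are of order $n/C$, we have $\varepsilon_{t^u_n},\varepsilon_{t^l_n}=\bigO(\sqrt{n\ln n})=o(n^{1/2+\delta'})$, so everything hinges on a two-sided estimate of the form $\stexp R_t=Ct-\bigO(\sqrt{t\ln t})$; the remaining steps are bookkeeping with~(\ref{Relate_T_R}).

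First I would record the easy upper bound $\stexp R_t\leq Ct$. The rank $R_t$ accumulated at $\mathcal{T}$ after $t$ steps is at most the number of transmissions that succeed across a fixed minimum cut during those steps, since each such success carries at most one degree of freedom; the expected number of min-cut successes per step is $\sum_{e\in\text{cut}}(1-p_e)=C$, whence $\stexp R_t\leq Ct$. This already disposes of $t^l_n$: with $t^l_n=(n-n^{1/2+\delta'})/C$ we get $\stexp R_{t^l_n}+\varepsilon_{t^l_n}\leq Ct^l_n+\varepsilon_{t^l_n}=n-n^{1/2+\delta'}+\bigO(\sqrt{n\ln n})\leq n$ as soon as $n^{1/2+\delta'}$ dominates $\varepsilon_{t^l_n}$.

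The substantive part is the matching lower bound $\stexp R_t\geq Ct-\bigO(\sqrt{t\ln t})$, which I would get by bootstrapping the estimate $\stexp T^\text{c}_n=n/C+\bigO(\sqrt n)$ (the one already invoked in the proof of Theorem~\ref{last_theorem}) through the concentration bound~(\ref{eqn:concentration_of_R_2}). Fix $n$ and let $s_0$ be the largest $s$ with $\stexp R_s+\varepsilon_s<n$; the set is nonempty and finite for large $n$, since $\stexp R_s\leq Cs$ while $s\mapsto\stexp R_s+\varepsilon_s$ is non-decreasing and tends to infinity. For every $s\leq s_0$ the event $\{R_s\geq n\}$ forces $|R_s-\stexp R_s|>\varepsilon_s$, so $\Prob(R_s<n)\geq 1-1/s$ by~(\ref{eqn:concentration_of_R_2}); since $R_s<n$ exactly when $T^\text{c}_n>s$, summing gives $\stexp T^\text{c}_n=\sum_{s\geq 0}\Prob(T^\text{c}_n>s)\geq\sum_{s=1}^{s_0}(1-1/s)=s_0-H_{s_0}$, hence $s_0\leq n/C+\bigO(\sqrt n)$ (the harmonic number $H_{s_0}=\bigO(\log n)$ being absorbed). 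By maximality of $s_0$ we have $\stexp R_{s_0+1}\geq n-\varepsilon_{s_0+1}$, and monotonicity of $\stexp R_t$ in $t$ then gives $\stexp R_t\geq n-\varepsilon_{s_0+1}$ for every $t\geq s_0+1$. Reading this backwards: given $t$, pick $n$ as the largest integer with $s_0(n)+1\leq t$; since $s_0(n)\leq n/C+\bigO(\sqrt n)$, this $n$ satisfies $n\geq Ct-\bigO(\sqrt t)$, and $s_0(n)+1\leq t$ yields $\varepsilon_{s_0(n)+1}\leq\varepsilon_t$, so $\stexp R_t\geq Ct-\bigO(\sqrt t)-\varepsilon_t=Ct-\bigO(\sqrt{t\ln t})$.

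Finally I would substitute $t=t^u_n=(n+n^{1/2+\delta'})/C$ into this lower bound: $\stexp R_{t^u_n}\geq Ct^u_n-\bigO(\sqrt{t^u_n\ln t^u_n})=n+n^{1/2+\delta'}-\bigO(\sqrt{n\ln n})$, so $\stexp R_{t^u_n}-\varepsilon_{t^u_n}\geq n+n^{1/2+\delta'}-\bigO(\sqrt{n\ln n})\geq n$ for all large $n$, because $n^{\delta'}$ outgrows $\sqrt{\ln n}$. Together with the $t^l_n$ estimate this is exactly the claim. I expect the bootstrap of the third paragraph to be the crux: the naive route, Markov's inequality applied to $T^\text{c}_n$, is hopelessly weak, since it cannot resolve deviations of order $\sqrt n$ against a mean of order $n$. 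The trick is to use no moment information on $T^\text{c}_n$ beyond its mean, and instead to push the one-sided tail of~(\ref{eqn:concentration_of_R_2}) together with the monotonicity of $\stexp R_t$, keeping careful track of the direction of every inequality.
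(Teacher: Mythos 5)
Your proposal is correct, but it reaches the key estimate by a genuinely different route than the paper. Both arguments reduce the lemma to a two-sided bound of the form $\stexp R_t = Ct - \bigO(\sqrt{\cdot})$ and then verify the defining inequalities by substituting $t^u_n,t^l_n$ and comparing $n^{1/2+\delta'}$ against $\varepsilon_t=\bigO(\sqrt{t\ln t})$ (that final bookkeeping is essentially identical in the two proofs). The difference is in how the deficit is controlled: the paper writes $\stexp R_t = Ct - r(t)$ and bounds $r(t)=\bigO(\sqrt{n_t})$ by a tower-property argument, $\stexp T^\text{c}_{n_t}=\stexp\bigl(\stexp(T^\text{c}_{n_t}\mid r(t))\bigr)=t+\bigO(r(t))$, played against $\stexp T^\text{c}_{n_t}=n_t/C+\bigO(\sqrt{n_t})$ from Theorem~\ref{thm:super_general}; you instead (i) prove the upper bound $\stexp R_t\le Ct$ explicitly by counting successes across a minimum cut (the paper leaves $r(t)\ge 0$ implicit), which alone settles $t^l_n$, and (ii) obtain the lower bound $\stexp R_t\ge Ct-\bigO(\sqrt{t\ln t})$ by inverting Proposition~\ref{prop:concentration_of_Rt} through the tail-sum identity $\stexp T^\text{c}_n=\sum_{s\ge 0}\Prob(T^\text{c}_n>s)$ and monotonicity of $\stexp R_t$, using the same input $\stexp T^\text{c}_n=n/C+\bigO(\sqrt n)$ that the paper invokes, so there is no extra circularity. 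What each buys: the paper's conditioning step is shorter but terse (the claim $\stexp(T^\text{c}_{n_t}\mid r(t))=t+\bigO(r(t))$ is asserted rather than argued), whereas your bootstrap uses only the one-sided tail of Proposition~\ref{prop:concentration_of_Rt} plus the mean of $T^\text{c}_n$ and is more self-contained, at the cost of extra bookkeeping. One small point to tighten in your step ``$n\ge Ct-\bigO(\sqrt t)$'': maximality of $n$ gives $n\ge Ct-\bigO(\sqrt n)$, and to replace $\sqrt n$ by $\sqrt t$ you need $n=\bigO(t)$; this follows from your own bound $\stexp R_s\le Cs$, which forces $s_0(n)\ge n/C-\bigO(\sqrt{n\ln n})$ and hence $n\le Ct+\bigO(\sqrt{t\ln t})$, but it should be said explicitly.
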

\begin{IEEEproof}
For any $t \le  n/C$, the expected number of received packets $\stexp R_t$ is given by $\stexp R_t = Ct - r(t)$, where $C$ is the capacity of the network and $r(t)$ can be bounded as follows. Letting $ n_t =Ct \le  n$, we have
\begin{eqnarray*}
E(T^c_{n_t})& = &E(E(T^c_{n_t}| r(t)))\\
	& = &E(t+\bigO(r(t)))\\
	& = &t+\bigO(r(t))\end{eqnarray*}
which by Theorem~\ref{thm:super_general} implies that $r(t)$ should be $\bigO(\sqrt{n_t}) \le  \bigO(\sqrt{n})$.

The only requirement for $t^u_n$ is that it is a $t$ such that $\stexp R_t-\epsilon_t\geq n$. This is indeed true for large enough $n$ if we substitute $t^u_n$ with $(n+n^{1/2+\delta'})/C$:
\begin{align}
\stexp[R_{t^u_n}]-\epsilon_{{t^u_n}} \geq n \Rightarrow Ct^u_n-r(t^u_n)-\epsilon_{t^u_n}\geq n\Rightarrow Ct^u_n-r(t^u_n)-\sqrt{\frac{L t^u_n}{2} \ln(2t^u_n)}\geq n\notag\\
\Rightarrow C\cdot \frac{n+n^{1/2+\delta'}}{C}-r(t^u_n)- \sqrt{\frac{L(n+n^{1/2+\delta')}}{2C} \ln(\frac{2(n+n^{1/2+\delta'})}{C})}\geq n.
\label{eqn:delay_almost_last}
\end{align}

Since $r(t)\in\bigO(\sqrt{n})$ there is a constant $B>0$ such that $r(t)\leq B\sqrt{n}$ and therefore in order for (\ref{eqn:delay_almost_last}) to hold it is sufficient if
\begin{align*}
n+n^{1/2+\delta'}-B\sqrt{n}- \sqrt{\frac{L(n+n^{1/2+\delta')}}{2C} \ln(\frac{2(n+n^{1/2+\delta'})}{C})} \geq n \\
\Rightarrow n^{1/2+\delta'}\geq \sqrt{\frac{L(n+n^{1/2+\delta')}}{2C} \ln(\frac{2(n+n^{1/2+\delta'})}{C})}+B\sqrt{n}\\
\Rightarrow n^{1/2+\delta'}\geq \sqrt{n} \sqrt{\frac{L(1+n^{\delta'-1/2)}}{2C}  \ln(\frac{2(n+n^{1/2+\delta'})}{C})}+B\sqrt{n}\\
\Rightarrow  n^{\delta'} \geq \sqrt{\frac{L(1+n^{\delta'-1/2)}}{2C} \ln(\frac{2(n+n^{1/2+\delta'})}{C})}+B
\end{align*}
where the last equation holds for large enough $n$.

Similarly it can be proved that $t^l_n$ can be substituted with $(n-n^{1/2+\delta'})/C$ such that for large $n$, $\stexp R_t+\epsilon_t\leq n$.
\end{IEEEproof}
\end{appendices}

\bibliographystyle{IEEEtran}
\bibliography{IEEEabrv,NWC-abbr}

% Generated by IEEEtran.bst, version: 1.13 (2008/09/30)
\begin{thebibliography}{10}
\providecommand{\url}[1]{#1}
\csname url@samestyle\endcsname
\providecommand{\newblock}{\relax}
\providecommand{\bibinfo}[2]{#2}
\providecommand{\BIBentrySTDinterwordspacing}{\spaceskip=0pt\relax}
\providecommand{\BIBentryALTinterwordstretchfactor}{4}
\providecommand{\BIBentryALTinterwordspacing}{\spaceskip=\fontdimen2\font plus
\BIBentryALTinterwordstretchfactor\fontdimen3\font minus
  \fontdimen4\font\relax}
\providecommand{\BIBforeignlanguage}[2]{{%
\expandafter\ifx\csname l@#1\endcsname\relax
\typeout{** WARNING: IEEEtran.bst: No hyphenation pattern has been}%
\typeout{** loaded for the language `#1'. Using the pattern for}%
\typeout{** the default language instead.}%
\else
\language=\csname l@#1\endcsname
\fi
#2}}
\providecommand{\BIBdecl}{\relax}
\BIBdecl

\bibitem{dana06capacity}
A.~F. Dana, R.~Gowaikar, R.~Palanki, B.~Hassibi, and M.~Effros, ``Capacity of
  wireless erasure networks,'' \emph{IEEE Transactions on Information Theory},
  vol.~52, pp. 789--804, 2006.

\bibitem{lun04coding}
D.~S. Lun, M.~M\'edard, and M.~Effros, ``On coding for reliable communication
  over packet networks,'' in \emph{In Proc. 42nd Annual Allerton Conference on
  Communication, Control, and Computing, Invited paper}, September-October
  2004.

\bibitem{smith08wireless}
B.~Smith and B.~Hassibi, ``Wireless erasure networks with feedback,'' 2008,
  http://arxiv.org/pdf/0804.4298v1.

\bibitem{neely09optimal}
M.~J. Neely and R.~Urgaonkar, ``Optimal backpressure routing for wireless
  networks with multi-receiver diversity,'' \emph{Ad Hoc Netw.}, vol.~7, no.~5,
  pp. 862--881, 2009.

\bibitem{roofnet:exor-hotnets03}
S.~Biswas and R.~Morris, ``Opportunistic routing in multi-hop wireless
  networks,'' in \emph{Proc. Second {W}orkshop on {H}ot {T}opics in {N}etworks
  ({HotNets-II})}.\hskip 1em plus 0.5em minus 0.4em\relax Cambridge,
  Massachusetts: {ACM SIGCOMM}, November 2003.

\bibitem{maximum_statistics}
P.~J. Grabner and H.~Prodinger, ``Maximum statistics of n random variables
  distributed by the negative binomial distribution,'' \emph{Combinatorics,
  Probability and Computing}, vol.~6, no.~2, pp. 179--183, 1997.

\bibitem{rubin74communication}
I.~Rubin, ``Communication networks: Message path delays,'' \emph{{IEEE} Trans.
  Inf. Theory}, vol.~20, no.~6, pp. 738--745, Nov. 1974.

\bibitem{shalmon87exact}
M.~Shalmon, ``Exact delay analysis of packet-switching concentrating
  networks,'' \emph{{IEEE} Trans. Commun.}, vol.~35, no.~12, pp. 1265--1271,
  Dec. 1987.

\bibitem{ephremides2006}
B.~Shrader and A.~Ephremides, ``On the queueing delay of a multicast erasure
  channel,'' in \emph{Proceedings of the IEEE Information Theory Workshop},
  2006.

\bibitem{pakzad05coding}
P.~Pakzad, C.~Fragouli, and A.~Shokrollahi, ``Coding schemes for line
  networks,'' in \emph{Proc. {IEEE} Int. Symp. Inf. Theory ({ISIT})}, Sep.
  2005, pp. 1853--1857.

\bibitem{ted2009}
T.~K. Dikaliotis, A.~Dimakis, and T.~Ho, ``On the delay of network coding over
  line networks,'' in \emph{Proceedings of the IEEE International Symposium on
  Information Theory}, 2009.

\bibitem{dikaliotis10delay}
T.~Dikaliotis, G.~A. Dimakis, T.~Ho, and M.~Effros, ``On the delay advantage of
  coding in packet erasure networks,'' in \emph{IEEE ITW}, 2010.

\bibitem{heidarzadeh12coding}
A.~Heidarzadeh and A.~H. Banihashemi, ``Coding delay analysis of chunked codes
  over line networks,'' in \emph{International Symposium on Network Coding
  (NetCod)}, 2012.

\bibitem{heidarzadeh12fast}
------, ``How fast can dense codes achieve the min-cut capacity of line
  networks?'' in \emph{International Symposium on Information Theory}, 2012.

\bibitem{jay2007}
J.~Sundararajan, D.~Shah, and M.~M\'edard, ``On queueing in coded networks
  queue size follows degrees of freedom,'' \emph{Information Theory for
  Wireless Networks, 2007 IEEE Information Theory Workshop on}, pp. 1--6, July
  2007.

\bibitem{weber92interchangeability}
R.~R. Weber, ``The interchangeability of tandem queues with heterogeneous
  customers and dependent service times,'' \emph{Adv. Appl. Probability},
  vol.~24, no.~3, pp. 727--737, Sep. 1992.

\bibitem{stock2}
T.~Lindvall, \emph{Lectures on the Coupling Method}.\hskip 1em plus 0.5em minus
  0.4em\relax Courier Dover Publications, 2002.

\bibitem{aggregate}
H.~Castel-Taleb, L.~Mokdad, and N.~Pekergin, ``Aggregated bounding markov
  processes applied to the analysis of tandem queues,'' in \emph{ValueTools
  '07: Proceedings of the 2nd international conference on Performance
  evaluation methodologies and tools}.\hskip 1em plus 0.5em minus 0.4em\relax
  ICST, 2007, pp. 1--10.

\bibitem{hsu76behavior}
J.~Hsu and P.~Burke, ``Behavior of tandem buffers with geometric input and
  {Markovian} output,'' \emph{{IEEE} Trans. Commun.}, vol.~24, no.~3, pp.
  358--361, Mar. 1976.

\bibitem{daduna01queueing}
H.~Daduna, \emph{Queueing Networks with Discrete Time Scale}.\hskip 1em plus
  0.5em minus 0.4em\relax New York: Springer-Verlag, 2001.

\bibitem{recursion:bk}
V.~K. Balakrishnan, \emph{Introductory Discrete Mathematics}.\hskip 1em plus
  0.5em minus 0.4em\relax Dover Publications, 2008.

\bibitem{Beesack}
P.~R. Beesack, ``Improvements of stirling's formula by elementary methods,''
  \emph{Univ. Beograd. Publ. Elektrotehn. Fak. Ser. Mat. Fiz.}, no. 274--301,
  pp. 17--21, 1969.

\bibitem{Young_harmonic}
R.~M. Young, ``Euler's constant,'' \emph{Mathematical Gazette}, vol. 472, pp.
  187--190, 1991.

\bibitem{mitzenmacher05probability}
M.~Mitzenmacher and E.~Upfal, \emph{Probability and Computing: Randomized
  Algorithms and Probabilistic Analysis}.\hskip 1em plus 0.5em minus
  0.4em\relax Cambridge University Press, 2005.

\bibitem{azuma1}
\BIBentryALTinterwordspacing
Z.~Kong, S.~A. Aly, E.~Soljanin, E.~M. Yeh, and A.~Klappenecker, ``Network
  coding capacity of random wireless networks under a sinr model,'' 2008.
  [Online]. Available:
  \url{http://www.citebase.org/abstract?id=oai:arXiv.org:0804.4284}
\BIBentrySTDinterwordspacing

\bibitem{azuma2}
S.~A. Aly, V.~Kapoor, and J.~Meng, ``Bounds on the network coding capacity for
  wireless random networks,'' in \emph{In Proc. 3rd Workshop on Network Coding,
  Theory, and Applications}, 2007.

\end{thebibliography}
\end{document}